\newtheorem{df}{Definition}[section]
\newtheorem{lm}[df]{Lemma}
\newtheorem{ex}[df]{Example}
\newtheorem{propos}[df]{Proposition}
\newtheorem{theo}[df]{Theorem}
\newtheorem{cor}[df]{Corollary}
\newtheorem{ob}[df]{Observation}
\newcommand{\nat}{\mathbb{N}}
\newcommand{\seml}{[\![}
\newcommand{\semr}{]\!]}
\newcommand{\pos}{\mathrm{pos}}
\newcommand{\maxrk}{{\mathrm{maxrk}}}
\newcommand{\Pred}{\mathrm{Pred}}
\newcommand{\BC}{\mathrm{BC}}
\newcommand{\REC}{\mathrm{REC}}
\newcommand{\STT}{\mathrm{STT}}
\newcommand{\lSTT}{\mathrm{l}\n\mathrm{STT}}
\newcommand{\tSTT}{\mathrm{t}\n\mathrm{STT}}
\newcommand{\tdSTT}{\mathrm{td}\n\mathrm{STT}}
\newcommand{\dSTT}{\mathrm{d}\n\mathrm{STT}}
\newcommand{\slSTT}{\mathrm{sl}\n\mathrm{STT}}
\newcommand{\nSTT}{\mathrm{n}\n\mathrm{STT}}
\newcommand{\lnSTT}{\mathrm{ln}\n\mathrm{STT}}
\newcommand{\VR}{\mathrm{VR}}
\newcommand{\A}{{\cal A}}
\newcommand{\B}{{\cal B}}
\newcommand{\M}{{\cal M}}
\newcommand{\N}{{\cal N}}
\newcommand{\F}{{\cal F}}
\newcommand{\G}{{\cal G}}
\newcommand{\PS}{{\cal P}}
\DeclareMathOperator{\rk}{rk}
\DeclareMathOperator{\dom}{dom}
\DeclareMathOperator{\grd}{grd}
\DeclareMathOperator{\lhs}{lhs}
\DeclareMathOperator{\rhs}{rhs}
\DeclareMathOperator{\range}{range}
\DeclareMathOperator{\n}{-}
\def\-{$-$}
\def\|{\hspace{1mm} | \hspace{1mm}}
\def\seq#1#2#3{#1_{#2},\ldots,#1_{#3}} 
\def\g0{\geq0}             
\def\ui#1{^{(#1)}}
\begin{document}

\title{Forward and Backward Application of Symbolic Tree Transducers}

\author{Zolt\'an F\"ul\"op$^a$\thanks{Research of this author was supported by the program T\'AMOP-4.2.1/B-09/1/KONV-2010-0005  of the Hungarian National
    Development Agency.} and Heiko Vogler$^b$\\
  {\small $^a$ Department of Foundations of Computer Science,
    University of Szeged} \\[-.5ex]
  {\small \'Arp\'ad t\'er 2., H-6720 Szeged, Hungary.
    {fulop@inf.u-szeged.hu}} \smallskip \\
  {\small $^b$ Faculty of Computer Science, Technische Universit\"at
    Dresden} \\[-.5ex]
  {\small Mommsenstr.~13, D-01062 Dresden, Germany.
    {Heiko.Vogler@tu-dresden.de}}}

\date{\today}

\maketitle

\sloppy

\begin{quote}{\bf Abstract:} We consider symbolic tree automata (sta) and symbolic tree transducers (stt). We characterize s-recognizable tree languages (which are the tree languages recognizable by sta) in terms of (classical) recognizable tree languages and relabelings. We prove that sta and the recently introduced variable tree automata are incomparable with respect to their recognition power.  We define symbolic regular tree grammars and characterize s-regular tree languages in terms of regular tree languages and relabelings. As a consequence, we obtain that s-recognizable tree languages are the same as s-regular tree languages.

We show that  the syntactic composition of two stt computes the composition of
the tree transformations computed by each stt, provided that (1) the first one is deterministic or the second one is linear and (2) the first one is total or the second is nondeleting.  We consider forward application and backward application of stt and prove  that the backward application of an stt to any s-recognizable tree language yields an s-recognizable tree language. We give a linear stt of which the range is not an s-recognizable tree language. We show that the  forward application of simple and linear stt preserves s-recognizability. As a corollary, we obtain  that the type checking problem of simple and linear stt and the inverse type checking problem of arbitrary stt is decidable.
\end{quote}

{\bf ACM classification:} F.1.1, F.4.2, F.4.3

{\bf Key words and phrases:} Tree automata, tree transducers, composition of tree transducers

\section{Introduction}

\label{introduction} Symbolic tree automata (sta) and symbolic tree transducers (stt) were introduced in \cite{veabjo11a} and \cite{veabjo11b}. They differ from classical finite-state tree automata and tree transducers \cite{gecste84,gecste97} in that they work with trees over an infinite, unranked set of symbols. 
According to \cite{grukupshe10}, examples of systems with finite control and infinite source of data are software with integer parameters \cite{bouhabmay03}, datalog systems with infinite data domain \cite{bouhabjursig07}, and XML documents of which the leaves are associated with data values from some infinite domain
\cite{bracercomframan03}. It was mentioned in \cite{veabjo11a} that lifting the finite alphabet restriction is useful to enable efficient symbolic analysis. Symbolic transducers are useful for exploring symbolic solvers when performing basic automata-theoretic transformations \cite{veahoolivmolbjo12}. 

In this paper we provide new formal definitions of sta and stt which slightly differs from those given in \cite{veabjo11a,veabjo11b}. At the end of Sections \ref{sect:sta-def} and \ref{sect:stt-def} we will compare our definitions with the original ones.

Roughly speaking, an sta is a finite-state tree automaton \cite{don70}  except that the input trees are built up over an infinite set of labels. In order to ensure a finite description of the potentially infinite set of transitions we bind the maximal number of the successors of a any node
occurring in an input tree by an integer $k\in \nat$, and we employ finitely many unary Boolean-valued predicates over the set of labels.  Then every transition of a symbolic k-bounded tree automaton (s$k$-ta) has the form 
\[
(q_1 \ldots q_l , \varphi, q)
\]
where $0\le l\le k$, $q,q_1,\ldots,q_l$ are states, and $\varphi$ is a unary Boolean-valued predicate. Such a transition is applicable to a node if $\varphi$ holds for the label of that node.
The tree language $L(\A)$ recognized by an sta $\A$ is defined as the union of all tree languages $L(\A,q)$, where $q$ is a final state, and the family $(L(\A,q)\mid q \in Q)$ is defined inductively in the same way as for finite-state tree automata. A tree language is s$k$-recognizable if there is an s$k$-ta which recognizes this language, and it is s-recognizable if it is s$k$-recognizable for some $k\in \nat$.
An example of an s2-recognizable tree language is 
the set of all binary trees with labels taken from $\mathbb{N}$ such that every label is divisible by 2 or every label is divisible by 3 as, e.g., $2(4,6)$ or $3(15,18)$ (cf. Example \ref{ex:sta}).  

By restricting the set of labels to a ranked alphabet $\Sigma$ and just allowing, for every $\sigma \in \Sigma$, the characteristic mapping on $\{\sigma\}$ as predicate, we reobtain the classical finite-state tree automata. In \cite{veabjo11a} it was proved that bottom-up sta are determinizable, that the class of s-recognizable tree languages is closed under the Boolean operations, and that the emptiness problem for s-recognizable tree languages is decidable provided the emptiness problem in the Boolean algebra of predicates is decidable.

Similarly, an stt is a top-down tree transducer \cite{tha70,rou70,eng75} except that its input and output trees  are built up over potentially infinite sets of (resp., input and output) labels. 
In the same way as for  sta, we ensure finiteness by an a priori bound $k$ on the maximal number of the successors of a node and by using a finite set of unary predicates. The right-hand side of each rule of a symbolic $k$-bounded tree transducer (s$k$-tt) 
contains unary functions, rather than explicit output symbols as in top-down tree transducers. These functions are then applied to the current input label and thereby produce the output labels. More formally, a rule has the form
\[
q(\varphi(x_1,\ldots,x_l)) \rightarrow u
\]
where $0\le l\le k$, $q$ is a state, $\varphi$ is a unary Boolean-valued predicate over the set of input labels, $x_1,\ldots,x_l$ are the usual variables that represent input subtrees, and $u$ is a tree in which each internal node has at most $k$ successors and is labeled by a unary function symbol; the leaves of $u$ can be labeled alternatively by objects $q'(x_i)$ with state $q'$ and $x_i \in \{x_1,\ldots,x_l\}$. Clearly, the leaf labels of the form $q'(x_i)$ organize the recursive descent on the input tree as usual in a top-down tree transducer. The tree transformation  computed by an stt is defined in the obvious way by means of a binary derivation relation. 
For instance, there is a (nondeterministic) s2-tt which transforms each binary tree over $\mathbb{N}$ into a set of binary trees over $\mathbb{N}$ such that a subtree $n(\xi_1,\xi_2)$ of the input tree is transformed into $m(\xi_1',\xi_2')$ where
\begin{itemize}
\item $m = n$, and $\xi_1'$ and $\xi_2'$ are transformations of $\xi_1$ and $\xi_2$, respectively, or
\item $m = \frac{n}{6}$ if $n$ is divisible by 6, and both $\xi_1'$ and $\xi_2'$ are transformations of $\xi_1$
(cf. Example \ref{stt-example}). 
\end{itemize} 

 By restricting the predicates on the input labels to some ranked alphabet (as for sta above) and by only allowing unary functions such that each one produces a constant symbols from some ranked (output) alphabet, we reobtain top-down tree transducers.  

Since sta and stt can check and manipulate data from an infinite set, they can be considered as tools for analyzing and transforming trees as they occur, e.g., in XML documents. Thus, the theoretical investigation of sta and stt is motivated by practical problems as e.g. type checking and inverse type checking.

In this paper we further develop the theory of sta and stt. We prove  a characterization of s-recognizable tree languages in terms of (classical) recognizable tree languages and relabelings (Thm. \ref{th:char}). We compare the recognition power of sta with that of variable tree automata  \cite{menrah11} (also cf.  \cite{grukupshe10}). More specifically, we characterize the tree language recognized by a variable tree automaton by the union of infinitely many s-recognizable tree languages  (Prop. \ref{prop:vta-sta}) and we show that sta and variable tree automata are incomparable  with respect to recognition power (cf. Thm. \ref{var-vs-sym-theo}).
Moreover, as a generalization of (classical) regular tree grammars \cite{bra69} we introduce symbolic regular tree grammars and characterize s-regular tree languages in terms of
regular tree languages and relabelings (Thm. \ref{th:char-reg}). As a corollary, we obtain that s-recognizable tree languages are the same as s-regular tree languages (Thm. \ref{th:rec=reg}).

 For stt we recall the concept of the syntactic composition from \cite{veabjo11b}. We show that  syntactic composition of two stt $\cal M$ and $\cal N$ computes the composition of the tree transformations computed by $\cal M$ and $\cal N$, provided that (1) $\cal M$ is deterministic or $\cal N$ is linear or (2) $\cal M$ is total or $\cal N$ is nondeleting (Thm. \ref{comp-lemma}). Hereby, we generalize Baker's classical result  \cite[Thm. 1]{bak79}. 

Finally, we consider forward application and backward application of stt; these investigations are motivated by the (inverse) type checking problem (see among others \cite{milsucvia03,alomilnevsucvia03,engman03,manberpersei05}). 
We show that the backward application of an s$k$-tt (which is the application of its inverse) to any s$k$-recognizable tree language yields an s$k$-recognizable tree language (Thm. \ref{backward-theo}). It is well-known that the forward application of linear top-down tree transducers preserves recognizability of tree languages (see e.g. \cite{tha69} or \cite[Ch. IV, Cor. 6.6]{gecste84}). It is surprising that for stt the corresponding result does not hold, in fact there is a linear s$k$-tt of which the range is not an s$k$-recognizable tree language (Lm. \ref{range-lin-lemma}). However, the application of simple and linear stt preserve s-recognizability (Thm. \ref{slin-theo}).
As a corollary, we obtain that the type checking problem of simple and linear stt, as well as, the inverse type checking problem of arbitrary stt is decidable (Thm. \ref{thm:type-checking}).

Since the theory of sta and stt is based on concepts which are slightly different from the foundations of classical finite-state tree automata and tree transducers, we list them in detail is Section  \ref{prel-section}.

\section{Preliminaries}\label{prel-section}

\subsection{General}

The set of nonnegative integers is denoted by $\nat.$ 

For a set $A$, we denote by $|A|$ and ${\cal P}(A)$ the cardinality and
the set of all subsets of $A$. Moreover, we denote by $\iota_A$ the identical mapping over $A$. For a set $I$, an \emph{$I$-indexed family over $A$} is a
mapping $f:I \rightarrow A$. We denote the family $f$ also by $(f_i
\mid i \in I)$. 

Let $\rho \subseteq A \times B$ be a relation. For every $A'\subseteq A$, we define $\rho(A')= \{b\in B\mid  (a,b)\in \rho \text{ for some } a\in A'\}$.
For another relation $\sigma \subseteq B \times C$, the composition of $\rho$ and $\sigma$ is the relation
$\rho \circ \sigma =\{(a,c) \mid \exists(b\in B) : (a,b)\in \rho \text{ and } (b,c)\in \sigma \}$.
The reflexive and transitive closure of a relation $\rho \subseteq A \times A$ is denoted by $\rho^*$.

\subsection{Trees}\label{sect:trees}

In this paper we mainly consider trees over a nonempty and unranked set. We note that our concept of a tree differs from that of \cite{veabjo11a,veabjo11b} in that we do not consider the empty tree as the base of the inductive definition. 

Let $U$ be a (possibly infinite) nonempty set, called the set of {\em labels}, and $Y$ a further set. The {\em set of trees over $U$} (or: {\em $U$-trees}) {\em indexed by $Y$}, denoted by $T_U(Y)$, is the smallest subset $T$ of $(U\cup Y\cup\{(,)\}\cup\{,\})^*$ such that  (i) $(U\cup Y)\subseteq T$, and (ii) if $a \in U$ and $\xi_1,\ldots ,\xi_l \in T$ with $l \ge 1$, then $a(\xi_1, \ldots,\xi_l) \in T$.  If $Y=\emptyset$, then we write $T_U$ for $T_U(Y)$. A {\em tree language over $U$}  (or: {\em $U$-tree language}) is any subset of $T_U$.

Let $Q$ be a set with $Q\cap U=\emptyset$.  Then we denote by $Q(T_U(Y))$ the subset $\{q(\xi) \mid q\in Q, \xi \in T_U(Y)\}$ of $T_{Q\cup U}(Y)$.

We define the set of {\em positions in a  $U$-tree} by means
of the mapping $\mathrm{pos}: T_U(Y)\rightarrow {\cal P}(\nat^*)$ inductively on the argument $\xi \in T_U(Y)$ as follows: (i) if  $\xi \in (U\cup Y)$, then
$\pos(\xi)=\{\varepsilon\}$, and (ii) if $\xi = a(\xi_1,\ldots,\xi_l)$ for some $a \in U$, $l \ge 1$
and $\seq \xi1l\in T_U(Y)$, then $\mathrm{pos}(\xi) = \{\varepsilon\} \cup \{iv \, | \, 1 \le i \le l, v \in \mathrm{pos}(\xi_i)\}$. 

For every $\xi \in T_U(Y)$ and $w \in \mathrm{pos}(\xi)$, the {\it
  label of $\xi$ at   $w$}, denoted by $\xi(w) \in (U\cup Y)$, the {\it
  subtree of $\xi$ at   $w$}, denoted by $\xi|_w \in T_U(Y)$, 
and the {\it
  rank at $w$}, denoted by $rk_\xi(w)\in \nat$, are defined inductively as follows: (i) if  $\xi \in (U\cup Y)$, then
$\xi(\varepsilon) = \xi|_\varepsilon = \xi$, and $rk_\xi(\varepsilon) = 0$, and
(ii) if $\xi = a(\xi_1,\ldots,\xi_l)$ for some $a \in U$,
$l \ge 1$ and $\xi_1,\ldots,\xi_l \in T_U(Y)$,  then
$\xi(\varepsilon)$ = $a$, $\xi|_\varepsilon = \xi$, and $rk_\xi(\varepsilon) = l$, and if  $1
\le i \le l$ and $w = iv$, then $\xi(w) = \xi_i(v)$, $\xi|_w = \xi_i|_v$, and $rk_\xi(w) = rk_{\xi_i}(v)$.

Let $\xi \in T_U(Y)$ be a tree. For any $V \subseteq U$, we define $\pos_V(\xi)=\{ w\in \pos(\xi) \mid \xi(w) \in V\}$. If $V=\{a\}$, then we write just $\pos_a(\xi)$ for $\pos_V(\xi)$. 
Moreover, for every $\zeta \in T_U(Y)$ and $w\in \pos(\xi)$, we denote by $\xi[\zeta ]_w$ the tree which is obtained by replacing the subtree $\xi|_w$ by $\zeta$.

We will consider trees with variables and the substitution of trees for variables. For this, let $X = \{x_1,x_2,\ldots\}$ be an infinite set of variables, disjoint with $U$, and let  $X_l = \{x_1,\ldots,x_l\}$ for every $l \in \nat$. For trees $\xi \in T_U(X_l)$ and $\zeta_1,\ldots,\zeta_l \in  T_U(Y)$, we denote by
$\xi[\zeta_1,\ldots,\zeta_l]$ the tree which we obtain by replacing  every occurrence of $x_i$ by $\zeta_i$ for every $1\le i\le l$. We note that $\xi[\zeta_1,\ldots,\zeta_l] \in T_U(Y)$. Moreover, we denote by $C_U(X_l)$ the set of trees in $T_U(X_l)$ in which each variable $x_i$ occurs exactly once and the order of variables from left to right is $x_1,\ldots,x_l$. We call the elements of 
$C_U(X_l)$ {\em $l$-contexts}.

Finally, let $\xi \in T_U(Y)$ and $k \in \nat$. 
We define the \emph{rank} $\rk(\xi)$ of $\xi$ to be $\rk(\xi) = \max\{\rk_\xi(w) \mid w \in \pos(\xi)\}$ and  we say that {\em $\xi$ is $k$-bounded} if $\rk(\xi) \le k$. We denote 
the set of all $k$-bounded $U$-trees indexed by $Y$ by $T_U^{(k)}(Y)$. Clearly, $T_U^{(k)}(Y)\subset T_U^{(k+1)}(Y)$.
A \emph{$k$-bounded $U$-tree language} (or: $(U,k)$-tree language) is a subset of $T_U^{(k)}$. 
A $U$-tree language $L$ is \emph{bounded} if there is a $k \in \nat$ such that $L$ is $k$-bounded. 
Moreover, we define the set of {\em $k$-bounded $l$-contexts} to be $C_U\ui k(X_l)=C_U(X_l)\cap T_U^{(k)}(X_l)$.

\begin{quote}\it In this paper $U$, $V$, and $W$ will always denote arbitrary nonempty sets unless specified otherwise.
\end{quote}

\subsection{Tree transformations}\label{sect:tree-trans}

Let $k \in \nat$. A \emph{$k$-bounded tree transformation} (or: \emph{$k$-tree transformation}) is a mapping $\tau : T_U^{(k)} \to {\cal P}(T_{V}^{(k)})$ (or: alternatively, a relation $\tau \subseteq T_U^{(k)}
\times T_{V}^{(k)}$). A \emph{tree transformation} is a $k$-tree transformation for some $k \in \nat$.  If for every $\xi \in T_U^{(k)}$, there is exactly one $\zeta \in T_{V}^{(k)}$ such that $(\xi,\zeta) \in \tau$ (i.e., $\tau$~is a
mapping), then we also write $\tau \colon T_U^{(k)} \to T_{V}^{(k)}$. The {\em inverse $\tau^{-1}$}, the {\em domain $\dom(\tau)$},
and the {\em range $\range(\tau)$} of a tree transformation $\tau$ are defined in the standard way.

Let $\tau  \subseteq T_U^{(k)} \times T_{V}^{(k)}$ be a tree transformation, $L \subseteq T_U^{(k)}$ and $L' \subseteq T_{V}^{(k)}$ tree languages.
The \emph{forward   application} (or just: {\em application}) \emph{of~$\tau$ to $L$} is the tree language $\tau(L)= \{
\zeta \in T_{V}^{(k)}\mid \exists (\xi \in L) : (\xi,\zeta)\in
\tau\}$. The \emph{backward application} of~$\tau$ to $L'$ is the tree
language $\tau^{-1}(L')$ (which is
the forward application of $\tau^{-1}$ to $L'$).

We extend the above concepts and the composition of tree transformations to classes of tree transformations and classes of tree languages in a natural way. For instance, if $\cal C$ and $\cal C'$ are classes of $k$-tree transformations, and $\cal L$ is a class of $k$-tree languages, then we define ${\cal C}\circ {\cal C'} = \{\tau\circ  \sigma \mid \tau \in {\cal C} \text{ and }  \sigma \in {\cal C'}\}$ and ${\cal C}({\cal L})= \{ \tau(L) \mid \tau \in {\cal C} \text{ and }  L \in {\cal L}\}$.

A \emph{relabeling} is a mapping $\tau: U \rightarrow {\cal P}(V)$
such that $\tau(a)$ is recursive and it is decidable if $\tau(a)=\emptyset$ for every $a \in U$; it
is called \emph{deterministic} if $\tau(a)$ is a singleton for every
$a \in U$. Let $k \in \nat$. The \emph{$k$-tree relabeling (induced by
  $\tau$)} is the mapping $\tau': T_U^{(k)} \rightarrow {\cal
  P}(T_{V}^{(k)})$, defined by $$\tau'(a(\xi_1,\ldots,\xi_l))=\{b(\zeta_1,\ldots,\zeta_l)\mid b\in \tau(a) \text{ and } \zeta_i\in \tau'(\xi_i) \text{ for } 1\le i\le l \}.$$
Then the mapping $\tau'$ is extended to $\tau'': {\cal P}(T_U^{(k)}) \rightarrow {\cal P}(T_{V}^{(k)})$
by $\tau''(L) = \bigcup_{\xi\in L} \tau'(\xi)$ for every $L \in {\cal
  P}(T_U^{(k)})$. 

We note that the composition of two $k$-tree relabelings $\tau_1'$ and
$\tau_2'$  is again a $k$-tree relabeling. In fact, if $\tau_1: U \rightarrow {\cal P}(V)$
and $\tau_2:  V \rightarrow {\cal P}(W)$, then $\tau_1 \circ \tau_2$
induces $\tau_1' \circ \tau_2'$. 
In the sequel, we drop the primes from $\tau'$ and $\tau''$ and
identify both mappings with $\tau$. 

\subsection{Predicates and label structures}

 A (unary) \emph{predicate
  over $U$} is a mapping $\varphi: U \rightarrow \{0,1\}$. We denote
by $\Pred(U)$ the set of all predicates over $U$. 
Let $\varphi \in \Pred(U)$ be a predicate. We introduce the notation $\seml
\varphi \semr$ for $\{a \in U \mid \varphi(a) =1\}$.

We define the operations
$\neg$, $\wedge$, and $\vee$ over $\Pred(U)$ in the obvious way and extend $\wedge$ and $\vee$
to finite families $(\varphi_i \mid i\in I)$ of predicates in $\Pred(U)$.  In particular,
$\seml \bigwedge_{i\in \emptyset} \varphi_i\semr = U$ and $\seml \bigvee_{i\in \emptyset} \varphi_i\semr = \emptyset$.

Let $\Phi \subseteq \Pred(U)$ be a finite set of  recursive predicates such that
$\seml\varphi \semr=\emptyset$ is decidable for every $\varphi \in \Phi$.
We call the pair $(U,\Phi)$ a \emph{label structure}.
 The \emph{Boolean closure of~$\Phi$}, denoted
by $\BC(\Phi)$, is the smallest set $B \subseteq  \Pred(U)$ such that 
\begin{enumerate}
\item[(i)] $\Phi \subseteq B$,
\item[(ii)] $\bot, \top \in B$ where $\top(a) = 1$ and $\bot(a) = 0$ for
  every $a \in U$,  and
\item[(iii)] for every $\varphi, \psi \in B$, the predicates $\neg \varphi$,
  $\varphi \wedge \psi$, and $\varphi \vee \psi$ are in $B$.
\end{enumerate}
It is clear that
$(\BC(\Phi),\wedge,\vee,\neg,\bot,\top)$ is a Boolean algebra for every $\Phi \subseteq \Pred(U)$.

\subsection{Tree automata, tree grammars, tree transducers}

We assume that the reader is familiar with the basic concepts of the theory of (classical) tree automata and tree transducers which can be found among others in \cite{gecste84,gecste97} and \cite{comdaugiljaclugtistom97}. In particular, we freely use the concept of a ranked alphabet, a tree language over a ranked alphabet, a finite-state tree automaton, a recognizable tree language, a regular tree grammar, a regular tree language, a top-down tree transducer, and of a tree transformation. Here we recall only some notations.

A {\em ranked alphabet} is a finite set $\Sigma$ equipped with a rank mapping $\rk_\Sigma: \Sigma \to \nat$. We define $\Sigma_l =\{ \sigma \in \Sigma \mid \rk_\Sigma(\sigma)=l \}$ ($l\ge 0$) and $\maxrk(\Sigma)=\max\{\rk_\Sigma(\sigma) \mid \sigma \in \Sigma \}$. 
It is clear that every tree $\xi \in T_\Sigma$ is $\maxrk(\Sigma)$-bounded.

A \emph{finite-state tree automaton} is a system $\A = (Q,\Sigma,\delta,F)$, where $Q$ is a finite, nonempty  set (states), $\Sigma$  is a ranked alphabet,  $\delta = (\delta_\sigma \mid \sigma \in \Sigma)$ is the family of sets of transitions, i.e., $\delta_\sigma \subseteq Q^l \times Q$ for every $l \in \nat$ and $\sigma \in \Sigma$ with $\rk_\Sigma(\sigma) =l$, and $F \subseteq Q$ is the set of final states. The set of trees recognized by $\A$ is denoted by $L(\A)$.
A tree language $L \subseteq T_\Sigma$ is {\em recognizable} if there is a finite-state tree automaton ${\cal A}$ such that $L = L({\cal A})$.

A \emph{regular tree grammar} is a tuple $\G = (Q,\Sigma,q_0,R)$ where $Q$ is a finite set of states\footnote{Usually these symbols are called nonterminals; but since this notion leads to misunderstandings in the application area of natural language processing, we prefer to call these symbols states.}, $\Sigma$ is a ranked alphabet, $q_0 \in Q$ (initial state), and  $R$ is a finite set of rules of the form $q \rightarrow u$ with $q \in Q$ and $u \in T_\Sigma(Q)$. The derivation relation induced by $\G$ and the tree language generated by $\G$ are denoted by $\Rightarrow_\G$ and $L(\G)$, respectively.
We will also consider {\em reduced} regular tree grammars and regular tree grammars in {\em normal form} in the sense of \cite{comdaugiljaclugtistom97}.

\section{Symbolic tree automata}

In this section we formalize our adaptation of the concept of a symbolic tree automaton from \cite{veabjo11a} and compare our model with the original one. Then we prove basic properties of sta. Finally, we compare the recognition capacity of sta with that of variable tree automata.

\subsection{Definition of sta}\label{sect:sta-def}

\begin{df} \rm Let $k \in \nat$. A \emph{symbolic $k$-bounded  tree automaton} (s$k$-ta) is a tuple $\A = (Q,U,\Phi,F,R)$ where 
\begin{itemize}
\item $Q$ is a finite, nonempty set (states),
\item $(U,\Phi)$ is a label structure,
\item $F \subseteq Q$ (set of final states), and 
\item $R$ is a finite set of rules of the form $(q_1 \ldots q_l, \varphi, q)$
where $0 \le l \le k$, $q_1,\ldots,q_l,q \in Q$, and $\varphi \in \BC(\Phi)$. 
\end{itemize} 
\end{df}
Let $\rho = (q_1 \ldots q_l, \varphi, q) \in R$. We call $(q_1 \ldots q_l)$ the {\em left-hand side}, $\varphi$ the \emph{guard}, and $q$  the {\em right-hand side} of the rule $\rho$, and denote them by $\lhs(\rho)$, $\grd(\rho)$, and $\rhs(\rho)$ respectively. 
Clearly, every s$k$-ta is an s$(k+1)$-ta.  By a {\em symbolic tree automaton} (sta) we mean an s$k$-ta for some $k \in \nat$.

For every $q\in Q$, we define the tree language $L(\A,q)\subseteq
T_U\ui k$ recognized by $\A$ in state $q$, as follows. The family  $(L(\A,q) \mid q \in Q)$ is the smallest $Q$-family $(L_q \mid q \in Q)$ of tree languages such that 
\begin{enumerate}
\item[(i)] if $a \in U$, $(\varepsilon,\varphi,q) \in R$, and $a \in \seml \varphi \semr$, then $a \in L_q$, and 
\item[(ii)] if $a \in U$, $(q_1 \ldots q_l, \varphi, q) \in R$ with $1\le l
  \le k$ and $a \in \seml \varphi\semr$, and $\xi_1 \in L(\A,q_1)$, \ldots, $\xi_l \in L(\A,q_l)$, then $a(\xi_1,\ldots,\xi_l) \in L_q$.
\end{enumerate}
The condition that all predicates in $\Phi$ (and hence in $\BC(\Phi)$) are recursive ensure that we can decide whether $\xi \in L(\A,q)$ for every
$q\in Q$ and $\xi \in T_U\ui k$.

The \emph{tree language recognized by $\A$}, denoted by $L(\A)$, is the set 
\[
L(\A) = \bigcup_{q \in F} L(\A,q)\enspace.
\]
A tree language $L \subseteq T_U^{(k)}$ is {\em symbolically $k$-recognizable} (s$k$-recognizable) if there is an s$k$-ta $\A$ such that $L(\A) = L$. We denote the class of all s$k$-recognizable $U$-tree languages by $\REC^{(k)}(U)$. Moreover, we call a tree language {\em s-recognizable} if it is s$k$-recognizable for some $k\in \nat$.

Two s$k$-ta $\A$ and $\B$ are \emph{equivalent} if $L(\A) = L(\B)$.

\begin{ex}\rm\label{ex:sta} We give an example of an  sta. For this we consider the set $U = \nat$ and the $2$-bounded tree language 
\[
\begin{array}{ll}
L = & \big\{  \xi \in T_\nat^{(2)} \mid \text{$\xi$ is binary and }\\[2mm]
&\big((\forall w \in \pos(\xi): \xi(w) \text{ is divisible by $2$}) \vee (\forall w \in \pos(\xi): \xi(w) \text{ is divisible by $3$}\big)\big\}
\end{array}
\]
where a tree $\xi$ is binary if $\rk_\xi(w) \in \{0,2\}$ for every $w \in \pos(\xi)$. For instance, the trees $2(4,6)$ and $3(15,18)$ are in $L$.

The following s$2$-ta $\A = (Q,\nat,\Phi,F,R)$ recognizes $L$:
\begin{itemize} 
\item $Q = F = \{2,3\}$, 
\item $\Phi = \{\mathrm{div}(2),\mathrm{div}(3)\}$ with $\seml \mathrm{div}(i) \semr = \{n \in \nat \mid n \text{ is divisible by $i$}\}$, 
\item for every $i \in \{2,3\}$ the transitions $(\varepsilon, \mathrm{div}(i), i)$ and $(i\, i, \mathrm{div}(i), i)$ are in $R$.
\end{itemize}
For instance, $6(12,18) \in L(\A,2) \cap L(\A,3)$. 
\end{ex}

Our definition of sta slightly differs from the one in \cite{veabjo11a} in the following two points.

1. They fix a Boolean algebra $B$ of predicates in advance, and then they make a theory of sta only over $B$. We are free to choose predicates whenever we need them.

2. In \cite{veabjo11a} no bound on the number of successors of nodes is mentioned. In our definition we put an explicit bound on this number in order to guarantee closure of s$k$-recognizable tree languages under complement.

We note that this closure under complement is not discussed clearly in \cite{veabjo11a}. The root of the ambiguity is that the complement of a tree language, appearing in Prop. 3 of that paper, is not defined. If the complement of a tree language $L$ is meant to be ${\cal U}^{\mathrm{T}\langle\sigma\rangle} \setminus L$ (as maybe is suggested by the definition of the complement of a predicate \cite[p.146]{veabjo11a}), which corresponds to $T_U \setminus L$ in our notation, then 
the class of s-recognizable tree languages is not closed under complement as stated in \cite[Prop. 3, Thm. 1]{veabjo11a}. This can be seen easily as follows. Let $L$ be an  s-recognizable tree language (in the sense of \cite{veabjo11a} or of the present paper). Then obviously $L$ is bounded, while  the tree language ${\cal U}^{\mathrm{T}\langle\sigma\rangle} \setminus L$
is not bounded. Hence the latter cannot be s-recognizable. 

However, if we define the complement of a $k$-bounded tree language $L$ with respect to $T_U^{(k)}$, i.e., to be $T_U^{(k)} \setminus L$,
then the class of s$k$-recognizable tree languages is closed under complement (by using the appropriate adaptations of \cite[Prop. 3, Thm. 1]{veabjo11a}.

\subsection{Basic properties}

Here we give a characterization of s-recognizable tree languages in terms of (classical) recognizable tree languages and tree relabelings. Moreover, we introduce uniform tree languages and show that any uniform tree language is not s-recognizable. 

We will need the following obvious fact.

\begin{ob}\rm \label{ob:TU-k-rec} Both $\emptyset$ and the set $T_U^{(k)}$ are s$k$-recognizable
for every set $U$ and $k \in \nat$.
\end{ob}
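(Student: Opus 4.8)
The plan is to exhibit, for each of the two languages, an explicit s$k$-ta that recognizes it, and then verify the claim by a short structural induction.

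For $\emptyset$, I would take the automaton $\A = (\{q\}, U, \Phi, \emptyset, \emptyset)$ with a single state, no final states and no rules (here $\Phi$ may be any label structure on $U$, e.g.\ $\Phi = \emptyset$). Since $F = \emptyset$, the definition of the recognized language gives $L(\A) = \bigcup_{q \in \emptyset} L(\A,q) = \emptyset$ immediately, with no induction needed.

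For $T_U^{(k)}$, I would take the \emph{universal} automaton $\A = (\{q\}, U, \Phi, \{q\}, R)$ with a single final state and the rule set $R = \{(q^l, \top, q) \mid 0 \le l \le k\}$, where $q^l$ denotes the string $q \cdots q$ of length $l$ (so the rule for $l=0$ is $(\varepsilon, \top, q)$) and $\top \in \BC(\Phi)$ is the predicate with $\seml \top \semr = U$, available by clause (ii) of the definition of $\BC(\Phi)$. The inclusion $L(\A,q) \subseteq T_U^{(k)}$ holds by definition, since every rule has at most $k$ states on its left-hand side and hence every recognized tree is $k$-bounded. For the reverse inclusion $T_U^{(k)} \subseteq L(\A,q)$ I would argue by induction on the structure of $\xi \in T_U^{(k)}$: if $\xi = a \in U$, then clause (i) applies with the rule $(\varepsilon, \top, q)$ because $a \in \seml \top \semr = U$; if $\xi = a(\xi_1, \ldots, \xi_l)$ with $1 \le l \le k$, then each subtree $\xi_i$ is again $k$-bounded and lies in $L(\A,q)$ by the induction hypothesis, so clause (ii) applies with the rule $(q^l, \top, q)$ and $a \in \seml \top \semr$, yielding $\xi \in L(\A,q)$. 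Since $q$ is the only state and it is final, $L(\A) = L(\A,q) = T_U^{(k)}$.

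There is no genuine obstacle here: the only points requiring care are that $\top$ really belongs to $\BC(\Phi)$ (guaranteed by the Boolean-closure axioms independently of $\Phi$) and that the guard $\top$ imposes no restriction on labels, so that every label of $U$, and hence every $k$-bounded $U$-tree, is accepted. The induction is the standard structural argument for the universal tree automaton, transported to the symbolic setting.
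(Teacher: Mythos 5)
Your proof is correct and is exactly the argument the paper leaves implicit: the observation is stated without proof as an ``obvious fact,'' and the intended justification is precisely your two constructions (the automaton with $F=\emptyset$ for the empty language, and the one-state universal automaton with guards $\top$ and rules of every rank $0\le l\le k$ for $T_U^{(k)}$). Both automata satisfy the paper's definition of an s$k$-ta (in particular $\top\in\BC(\Phi)$ even for $\Phi=\emptyset$), and your structural induction is the standard one, so nothing is missing.
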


In the following we give a characterization of s-recognizable tree languages in terms of recognizable tree languages and relabelings. First we prove the next lemma.

\begin{lm}\label{char-rec-lemma}\rm $\ $
\begin{enumerate}
\item For every s$k$-recognizable tree
  language $L$ we can effectively construct a $k$-bounded recognizable tree
  language $L'$ and a $k$-tree relabeling $\tau$ such that $L =
  \tau(L')$. 
\item For every  $k$-bounded recognizable tree language $L'$ and
  $k$-tree relabeling $\tau$ we can effectively construct an s$k$-recognizable tree
  language $L$ such that  $L =
  \tau(L')$.
\end{enumerate}
\end{lm}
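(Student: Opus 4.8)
The plan is to prove both directions by a simulation between a finite-state tree automaton and an s$k$-ta, using a relabeling to absorb the infinite label set into finitely many ``patterns''. The key observation driving part~1 is that each guard $\varphi \in \BC(\Phi)$ depends only on the finitely many truth values $(\psi(a) \mid \psi \in \Phi)$ of a label $a$, so although $U$ is infinite, only finitely many labels are distinguishable by $\A$.

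For part~1, let $\A = (Q, U, \Phi, F, R)$ be an s$k$-ta recognizing $L$. I would set $P = \{0,1\}^\Phi$, the finite set of truth-value assignments to the predicates of $\Phi$, and introduce the ranked alphabet $\Sigma = \{\langle p, l\rangle \mid p \in P, 0 \le l \le k\}$ with $\rk_\Sigma(\langle p, l\rangle) = l$. For $p \in P$ I put $\psi_p = \bigwedge_{\psi \in \Phi, p(\psi)=1} \psi \wedge \bigwedge_{\psi \in \Phi, p(\psi)=0} \neg\psi \in \BC(\Phi)$, so that $\seml \psi_p \semr$ is exactly the set of labels whose $\Phi$-pattern is $p$, and define the relabeling $\tau : \Sigma \to \pset(U)$ by $\tau(\langle p, l\rangle) = \seml \psi_p \semr$. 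Then I build a finite-state tree automaton $\A' = (Q, \Sigma, \delta, F)$ with the same states and final states, placing $((q_1, \ldots, q_l), q) \in \delta_{\langle p, l\rangle}$ exactly when $\A$ has a rule $(q_1 \ldots q_l, \varphi, q)$ whose guard $\varphi$ evaluates to $1$ under the assignment $p$ (this is decidable, since $\varphi$ is a Boolean formula over $\Phi$), and set $L' = L(\A')$.

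Since every symbol of $\Sigma$ has rank at most $k$, $L'$ is a $k$-bounded recognizable tree language, and $\tau$ is a legal relabeling because each $\seml \psi_p \semr$ is recursive and its emptiness is decidable, $\psi_p \in \BC(\Phi)$ and $(U,\Phi)$ being a label structure. It then remains to show $\tau(L(\A', q)) = L(\A, q)$ for every $q \in Q$ by a straightforward induction on the input tree: a node labelled $\langle p, l\rangle$ relabels precisely to the labels $a$ with $\Phi$-pattern $p$, and the transition used there matches an applicable rule of $\A$ because $a \in \seml \varphi \semr$ iff $p \models \varphi$. Taking the union over $F$ gives $\tau(L') = L(\A) = L$. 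For part~2, let $\A' = (Q, \Sigma, \delta, F)$ recognize $L'$ and let $\tau : \Sigma \to \pset(U)$ be the given relabeling; as $L'$ is $k$-bounded I may assume $\maxrk(\Sigma) \le k$. I would take $\Phi = \{\varphi_\sigma \mid \sigma \in \Sigma\}$ with $\seml \varphi_\sigma \semr = \tau(\sigma)$, which is a finite set of recursive predicates with decidable emptiness by the defining conditions on a relabeling, so $(U,\Phi)$ is a label structure; I then form $\A = (Q, U, \Phi, F, R)$ turning each transition $((q_1, \ldots, q_l), q) \in \delta_\sigma$ (with $l = \rk_\Sigma(\sigma)$) into the rule $(q_1 \ldots q_l, \varphi_\sigma, q)$. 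A dual induction, using $a \in \seml \varphi_\sigma \semr$ iff $a \in \tau(\sigma)$, yields $L(\A, q) = \tau(L(\A', q))$ for every $q$, hence $L(\A) = \tau(L') = L$.

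Both constructions are effective. The step I expect to require the most care is not the automaton simulation, which is routine bookkeeping, but verifying that the relabelings produced are legal, i.e.\ that every preimage set is recursive with decidable emptiness. In part~1 this rests on the guards living in $\BC(\Phi)$ and on decidability of $\seml \cdot \semr = \emptyset$ being inherited by the Boolean closure of a label structure; in part~2 it is exactly the hypothesis built into the definition of a relabeling. The second point worth stating explicitly is the finiteness of $\Sigma$: it is forced by bounding arities by $k$ and by the guards seeing only the finitely many $\Phi$-values, which is precisely what collapses the infinite label set $U$ to the finite pattern alphabet $P$.
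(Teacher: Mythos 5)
Your proof is correct, so let me compare it with the paper's. Part~2 is essentially identical to the paper's construction: predicates $\varphi_\sigma$ with $\seml \varphi_\sigma \semr = \tau(\sigma)$ and one rule per transition. Part~1, however, takes a genuinely different route. The paper builds the ranked alphabet directly from the rule set of $\A$: its symbols are the pairs $[\varphi,l]$ where $\varphi$ is a guard actually occurring in a rule with left-hand side of length $l$, the transition set $\delta_{[\varphi,l]}$ consists exactly of the state tuples of the rules carrying guard $\varphi$, and the relabeling sends $[\varphi,l]$ to $\seml \varphi \semr$. You instead build the alphabet from the complete truth assignments (minterms) $p \in \{0,1\}^\Phi$, relabel $\langle p,l\rangle$ to $\seml \psi_p \semr$, and aggregate into $\delta_{\langle p,l\rangle}$ all rules whose guard evaluates to $1$ under $p$. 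Both constructions are sound; the trade-offs are as follows. The paper's alphabet is bounded by the number of (guard, arity) pairs in $R$, hence linear in the size of $\A$, whereas yours has size $2^{|\Phi|}(k+1)$, exponential in $|\Phi|$. In exchange, your relabeling images are pairwise disjoint for each arity (the realizable minterms partition $U$), so each $U$-tree determines the unique $\Sigma$-tree it can be the image of; this makes the right-to-left direction of the induction slightly cleaner (a label $a \in \seml \psi_p \semr$ has pattern exactly $p$, so every rule feeding $\delta_{\langle p,l\rangle}$ is applicable at $a$), and it yields the mildly stronger form of Theorem~\ref{th:char} in which the relabeling can always be chosen with disjoint images. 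One caveat you share with the paper: legality of the relabeling requires that predicates in $\BC(\Phi)$ (the guards in the paper's construction, your minterms $\psi_p$) have recursive images with decidable emptiness; recursiveness is automatic from the recursiveness of $\Phi$, but decidable emptiness for the Boolean closure is an assumption the paper also makes tacitly (e.g.\ when cleaning s$k$-rtg), so your appeal to it is no gap relative to the paper's own argument.
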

\begin{proof} First assume that $L=L(\A)$ for some s$k$-ta $\A = (Q,U,\Phi,F,R)$. We construct the finite-state tree automaton
$\A' = (Q,\Sigma,\delta,F)$, where
\begin{itemize}
\item $\Sigma_l = \{[\varphi,l] \mid (q_1\ldots q_l,\varphi,q)\in R \text{ for some }  q_1,\ldots, q_l,q\in Q\} $, $0\le l\le k$ and
\item $\delta_{[\varphi,l]} = \{ (q_1\ldots q_l,q) \mid (q_1\ldots q_l,\varphi,q)\in R \}$.
\end{itemize}
It should be clear that $L(\A')$ is $k$-bounded. Moreover, we define the relabeling $\tau: \Sigma \to \PS(U)$ by $\tau([\varphi,l])=\seml \varphi\semr$ for every $[\varphi,l]\in\Sigma$.

We can easily prove the following statement by induction on trees: for every $\xi\in T_U\ui k$ and $q\in Q$ we have
\[ \xi \in L(\A,q) \iff \exists(\zeta \in L(\A',q)) \text{ such that } \xi \in \tau(\zeta),\]
which proves that $L(\A)=\tau(L(\A'))$.

For the proof of the other implication,  let us consider a finite-state tree automaton $\A' = (Q,\Sigma,\delta,F)$ 
such that $L(\A')$ is $k$-bounded. We may assume without loss of generality that $\maxrk(\Sigma) \le k$. Moreover,
let $\tau: \Sigma \to \PS(U)$ be a relabeling. We construct the s$k$-ta $\A = (Q,U,\Phi,\delta',F)$, where $\Phi$ and $R$ are defined as follows:
\begin{itemize}
\item $\Phi=\{ \varphi_\sigma \mid \sigma \in \Sigma\}$, where $\seml \varphi_\sigma \semr =\tau(\sigma)$ for every $\sigma \in\Sigma$, 
\item $\delta'=\{ (q_1\ldots q_l,\varphi_\sigma,q) \mid (q_1\ldots q_l,q) \in \delta_\sigma \text{ for some } l\ge 0, \sigma \in\Sigma_l\}$.
\end{itemize}
It should be clear that $L(\A)=\tau(L(\A'))$.
\end{proof}

By letting $\tau$  be the identity mapping in Lemma \ref{char-rec-lemma}(2), we obtain that each recognizable tree language is also s-recognizable.
A further consequence of Lemma \ref{char-rec-lemma} is the mentioned  characterization. 
\begin{theo}\label{th:char}
A tree language $L$ is
  s$k$-recognizable if and only if it is the image of a $k$-bounded recognizable
  tree language under a $k$-tree relabeling.
\end{theo}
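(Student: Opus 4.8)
The plan is to read this theorem off directly from Lemma \ref{char-rec-lemma}, since the two statements are the effective and non-effective versions of the same characterization. The theorem is a biconditional, so I would prove each implication separately, invoking one part of the lemma for each direction, and no new argument is needed beyond unwinding the phrase ``image of $L'$ under $\tau$'' into the equation $L = \tau(L')$.

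For the forward implication (``only if''), I would assume that $L$ is s$k$-recognizable and apply Lemma \ref{char-rec-lemma}(1). This yields a $k$-bounded recognizable tree language $L'$ together with a $k$-tree relabeling $\tau$ such that $L = \tau(L')$. By definition of the $k$-tree relabeling induced by $\tau$ (and its extension to tree languages via $\tau''(L') = \bigcup_{\zeta \in L'} \tau'(\zeta)$, with the primes dropped by our convention), this is exactly the assertion that $L$ is the image of the $k$-bounded recognizable tree language $L'$ under a $k$-tree relabeling.

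For the backward implication (``if''), I would assume $L = \tau(L')$ for some $k$-bounded recognizable tree language $L'$ and some $k$-tree relabeling $\tau$, and then apply Lemma \ref{char-rec-lemma}(2) verbatim to conclude that $L$ is s$k$-recognizable. In fact the two directions of the theorem correspond precisely to the two clauses of the lemma, with the only difference being that the lemma additionally records the \emph{effectiveness} of the constructions (the finite-state tree automaton $\A'$ built from $\A$, and conversely the s$k$-ta $\A$ built from $\A'$ and $\tau$), whereas the theorem states only the existential characterization.

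Consequently I do not expect any genuine obstacle here: all the substantive work — constructing $\A'$ and the relabeling $\tau$ in one direction, and constructing $\Phi$ and the rule set in the other, together with the inductive verification $\xi \in L(\A,q) \iff \exists(\zeta \in L(\A',q))\colon \xi \in \tau(\zeta)$ — has already been carried out in the proof of Lemma \ref{char-rec-lemma}. The theorem is therefore an immediate corollary, and the proof amounts to citing parts (1) and (2) of the lemma for the two directions of the equivalence.
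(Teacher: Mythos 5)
Your proposal is correct and matches the paper exactly: the paper states Theorem \ref{th:char} as an immediate consequence of Lemma \ref{char-rec-lemma}, with the two parts of the lemma supplying the two directions of the equivalence, just as you describe. No gap — the only difference is the purely cosmetic one you already note, namely that the lemma records effectiveness while the theorem is existential.
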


Using the above characterization result, we can easily give examples of bounded tree languages that are not s-recognizable. For an infinite $U$, we call a tree language $L\subseteq T_U$ {\em uniform} if it satisfies the following conditions:
\begin{enumerate}
\item[(a)] $L$ is infinite,
\item[(b)] all trees in $L$ have the same shape, i.e., for every $\xi,\zeta \in L$, we have $\pos(\xi)=\pos(\zeta)$, and
\item[(c)] for every $\xi \in L$, there is an $a\in U$ such that $\xi(w)=a$ for every $w\in \pos(\xi)$.
\end{enumerate}
For instance, the tree language $L_2=\{a(a)\mid a\in U\}$ is uniform provided $U$ is infinite.
In particular, $\pos(\xi)=\{\varepsilon,1\}$ for every $\xi\in L_2$.
Now we can prove the following.

\begin{lm}\rm \label{non-rec-lemma} Let $L\subseteq T_U\ui k$ be a uniform tree language such that $|\pos(\xi)| > 1$ for every $\xi \in L$.
Then $L$ is not s$k$-recognizable.
\end{lm}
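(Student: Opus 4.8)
The plan is to argue by contradiction, converting the recognizability assumption into the relabeling picture provided by Theorem~\ref{th:char}. Suppose $L$ were s$k$-recognizable. Then by Theorem~\ref{th:char} there are a ranked alphabet $\Sigma$ with $\maxrk(\Sigma)\le k$, a $k$-bounded recognizable tree language $L'\subseteq T_\Sigma$, and a $k$-tree relabeling $\tau\colon \Sigma\to\PS(U)$ such that $L=\tau(L')$. The structural facts I would lean on are that a relabeling preserves the shape of a tree and acts positionwise and independently: if $\tau(\zeta)\ne\emptyset$, then every $\xi\in\tau(\zeta)$ has $\pos(\xi)=\pos(\zeta)$, and moreover $\tau(\zeta)$ is \emph{exactly} the set of all trees $\xi$ with $\pos(\xi)=\pos(\zeta)$ and $\xi(w)\in\tau(\zeta(w))$ for every position $w$, the label choices being free and independent across positions.

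First I would isolate the \emph{productive} trees of $L'$, namely those $\zeta\in L'$ with $\tau(\zeta)\ne\emptyset$. For such $\zeta$ we have $\tau(\zeta)\subseteq\tau(L')=L$, so picking any $\xi\in\tau(\zeta)$ and using condition (b) gives $\pos(\zeta)=\pos(\xi)=P$, where $P$ denotes the common position set of all trees in $L$. Since $\Sigma$ is finite and $P$ is a fixed finite set of positions, there are at most $|\Sigma|^{|P|}$ productive trees; hence finitely many, say $\zeta_1,\dots,\zeta_m$, and $L=\bigcup_{i=1}^m\tau(\zeta_i)$.

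Next I would show that each $\tau(\zeta_i)$ is a singleton. Writing $A_w=\tau(\zeta_i(w))$ (which is nonempty for all $w\in P$ by productivity), the set $\tau(\zeta_i)$ is the full product of trees of shape $P$ with $\xi(w)\in A_w$. Here is where $|P|>1$ enters: choose two distinct positions $w_1,w_2\in P$. Every $\xi\in\tau(\zeta_i)\subseteq L$ is monochromatic by condition (c), so $\xi(w_1)=\xi(w_2)$; but the labels at $w_1$ and $w_2$ range independently over $A_{w_1}$ and $A_{w_2}$, which forces $A_{w_1}$ and $A_{w_2}$ to be one and the same singleton. Applying this to every pair of positions shows all the $A_w$ equal a single singleton $\{c_i\}$, so $\tau(\zeta_i)$ consists of the unique monochromatic tree of shape $P$ with label $c_i$. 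Consequently $L=\bigcup_{i=1}^m\tau(\zeta_i)$ has at most $m$ elements, contradicting condition (a) that $L$ is infinite.

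The main obstacle, and really the only substantive point, is the reduction to a finite union of \emph{product} sets: once Theorem~\ref{th:char} supplies a relabeling $\tau$ with $L=\tau(L')$, one must notice that the productive part of $L'$ is finite because it lives over a finite alphabet on the single fixed shape $P$, and that each productive $\zeta_i$ contributes the entire positionwise product $\prod_{w\in P}A_w$. After that, the clash between positionwise independence and the monochromaticity of condition (c), made possible by $|P|>1$, collapses each product to a single tree, and finiteness contradicts (a). I would take care to state explicitly that relabelings preserve shape and act independently at each node, since these are exactly the properties that drive the argument.
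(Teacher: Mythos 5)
Your proof is correct and follows essentially the same route as the paper: both arguments invoke the characterization of s$k$-recognizable languages as relabelings of $k$-bounded recognizable tree languages, use shape preservation and finiteness of $\Sigma$ to reduce to finitely many relevant trees in $L'$, and then exploit the positionwise independence of the relabeling together with $|\pos(\xi)|>1$ to reach a contradiction. The only differences are cosmetic: you collapse each $\tau(\zeta_i)$ to a singleton and contradict condition (a), whereas the paper extracts two distinct labels at two distinct positions and builds a non-monochromatic tree contradicting (c); in passing, your restriction to productive trees (those $\zeta$ with $\tau(\zeta)\neq\emptyset$) tidies up a point the paper glosses over.
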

\begin{proof} We prove by contradiction, i.e., we assume that $L$ is s$k$-recognizable. By Lemma \ref{char-rec-lemma}(1), there is a ranked alphabet $\Sigma$,
a $k$-bounded recognizable tree language $L'\subseteq T_\Sigma$, and a  relabeling $\tau: \Sigma \to \PS(U)$ such that $L=\tau(L')$. Since $\tau$, being a $k$-tree relabeling,
preserves the shape of trees, the shape of all trees in $L'$ is the same as that of all trees in $L$. Then,  since $\Sigma$ is a finite set, $L'$ is also finite.
Finally, since $L$ is infinite, there are a tree $\zeta \in L'$,
different positions $v$ and $w$ of $\zeta$, and different labels $a, b \in U$ such that $a \in \tau(\zeta(v))$ and $b \in \tau(\zeta(w))$. Then there is a tree $\xi \in \tau(\zeta)$ such that $\xi(v)=a$ and $\xi(w)=b$, which contradicts to condition (c) for uniform tree languages.
\end{proof}

By the above lemma, for an infinite $U$, the 1-bounded tree language $L_2$ is not s-recognizable.

\subsection{Comparison with variable tree automata}

In \cite{grukupshe10} another automaton model with infinite input alphabet was introduced. It is called
variable (string) automaton. In  \cite{menrah11} this concept has been extended to 
variable tree automata over infinite alphabets (vta). The theory of
vta is different from that of sta, e.g., the class of s-recognizable
tree languages is closed under complement (cf. 
\cite[Prop. 3]{veabjo11a}) which does not hold for the class of v-recognizable
tree languages (cf. \cite[Cor. 2]{menrah11}, and  
\cite[Thm. 2]{grukupshe10}). Moreover, every sta is determinizable (cf.  
\cite[Thm. 1]{veabjo11a}), whereas not every variable (string) automata over infinite
alphabets is determinizable (cf. \cite[Sec. 4.1]{grukupshe10}). 

In this section we will compare the recognition power of sta and of vta. In order to be able to do so,  (1) we modify our sta model a bit and then (2) we recall the concepts of vta from \cite{menrah11} in a slightly
adapted form.

By a {\em ranked set} we mean a nonempty set $U$ of symbols such that with each symbol $a \in U$ an element in $\mathbb{N}$, the {\em rank of $a$},  is
  associated. For every $l
\ge 0$, we  denote by $U_{l}$ the set of all symbols of $U$ with rank $l$. 

The set of trees over a ranked set $U$ is defined in the obvious way.

An s$k$-ta ${\cal A} = (Q,U,\Phi,F,R)$ is a {\em ranked s$k$-ta (rs$k$-ta)} if
\begin{itemize}
\item $U$ is a  ranked set, and
\item $\Phi$ is a finite set of predicates (we do not require that predicates in $\Phi$ are recursive and that the emptiness problem in $\Phi$ is decidable).
   \end{itemize} 
The concepts of an rs$k$-recognizable tree language and an rs-recognizable tree language are defined in the obvious way.

Now we prepare the definition of a variable tree automaton. Let $U$ and $V$ be ranked sets.  A {\em rank preserving relabeling (r-relabeling) from $U$ to $V$} is a mapping $\tau: U \rightarrow {\cal P}(V)$ such that $\tau(U_l) \subseteq V_l$ ($l\geq 0$). Then $\tau$ extends to trees in the same way as  in case of $k$-tree relabelings (cf. Section \ref{sect:tree-trans}).
We note that $\tau(a)$ need not be recursive and $\tau(a)=\emptyset$ need not be decidable for
$a \in U$. 

Let $\Sigma$ be a ranked alphabet, $V$ an infinite ranked set,  $A$, $Z$, and $Y$ 
ranked alphabets. We say that the collection $(A,Z,Y)$ is a {\em valid partitioning
  of $\Sigma$ for $V$} if 
\begin{itemize}
\item $A=\Sigma  \cap V$, and $\Sigma_l = A_l \cup Z_l \cup Y_l$ for every $l\geq 0$, 
\item $A$, $Z$, and $Y$ are pairwise disjoint, and 
\item $|Y_l| \le 1$ for every $0 \le l \le \maxrk(\Sigma)$.
\end{itemize}
The elements of $Z$ and $Y$ are called {\em bounded variable symbols} and {\em
  free variable symbols}. 

Let $(A,Z,Y)$ be a valid partitioning
  of $\Sigma$ for $V$ and $\tau: \Sigma \rightarrow {\cal P}(V)$ an
  r-relabeling. We say that $\tau$ is {\em $(A,Z,Y)$-valid} if 
\begin{enumerate}
\item[(i)] $\tau$ is the identity on $A$,
\item[(ii)] $|\tau(z)| = 1$ for every $z \in Z$,
\item[(iii)] $\tau$ is injective on $Z$ and   $A_l \cap \tau(Z_l) = \emptyset$ for every $l \ge 0$, and 
\item[(iv)] $\tau(y) = V_l \setminus (A_l \cup \tau(Z_l))$ for
  every $l \ge 0$ and $y \in Y_l$.
\end{enumerate}
We denote the set of all $(A,Z,Y)$-valid r-relabelings  
 by $\VR(A,Z,Y)$. In Fig. \ref{fig:valid-rel} we illustrate the conditions for a valid
r-relabeling. 

\begin{figure}
\begin{center}
\includegraphics[
scale=0.5]{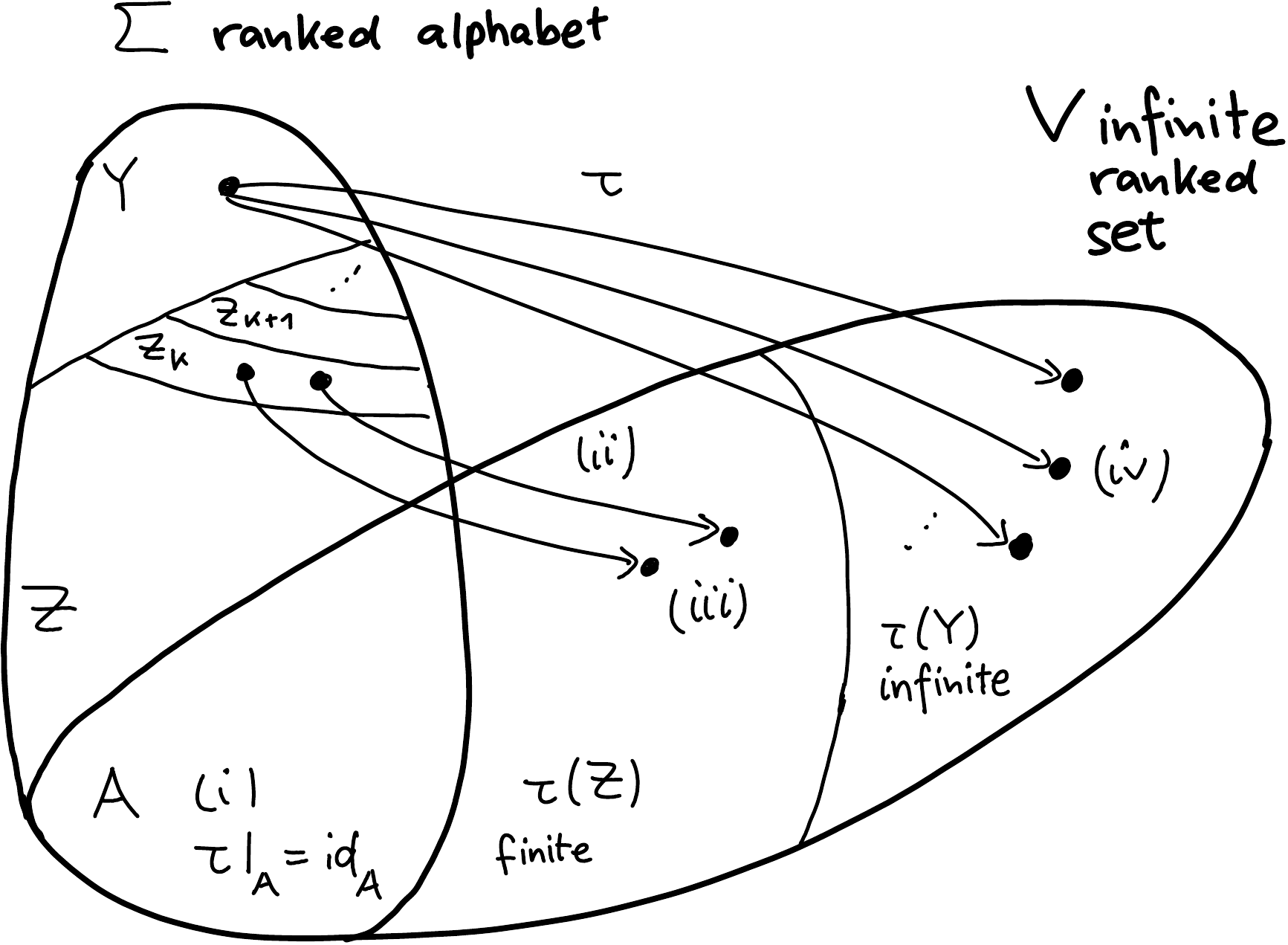}
\caption{\label{fig:valid-rel} An $(A,Z,Y)$-valid r-relabeling $\tau: \Sigma \rightarrow {\cal P}(V)$.} 
\end{center}
\end{figure}

A {\em variable tree automaton (vta)}  is a tuple ${\cal B} =
({\cal A},V,A,Z,Y)$ where 
\begin{itemize}
\item ${\cal A} = (Q,\Sigma,\delta,F)$ is a finite-state tree automaton,
\item $V$ is an infinite ranked set,
\item $(A,Z,Y)$ is a valid partitioning
  of $\Sigma$ for $V$.
\end{itemize}
The tree language {\em recognized} by ${\cal B}$ is the set 
\[
L({\cal B}) = \bigcup (\tau(L({\cal A})) \mid \tau \in \mathrm{\VR}(A,Z,Y))\enspace.
\]
We call a tree language {\em v-recognizable} if it can be recognized by a vta.

\begin{propos}\label{prop:rec-then-vrec} For every ranked alphabet $\Sigma$, every recognizable tree language $L$ over $\Sigma$ is also v-recognizable.
\end{propos}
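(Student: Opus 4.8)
The plan is to take a finite-state tree automaton $\mathcal{A} = (Q,\Sigma,\delta,F)$ with $L(\mathcal{A}) = L$ and wrap it into a vta whose only valid r-relabeling is the identity, so that the vta recognizes exactly $L(\mathcal{A})$. First I would fix an infinite ranked set $V$ with $\Sigma \subseteq V$; this is harmless, since $\Sigma$ is finite and the definition of a vta only demands that $V$ be infinite, so one may obtain $V$ by adjoining infinitely many fresh symbols to $\Sigma$ (with arbitrary ranks respecting the rank mapping). With this choice $\Sigma \cap V = \Sigma$.

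Next I would declare the partitioning $A = \Sigma$ and $Z = Y = \emptyset$, and check that $(A,Z,Y)$ is a valid partitioning of $\Sigma$ for $V$: we have $A = \Sigma \cap V = \Sigma$ and $\Sigma_l = A_l \cup Z_l \cup Y_l = A_l$ for every $l \ge 0$; the sets $A$, $Z$, $Y$ are pairwise disjoint because two of them are empty; and $|Y_l| = 0 \le 1$ for every $l$. Hence $\mathcal{B} = (\mathcal{A},V,A,Z,Y)$ is a well-formed vta. I would then identify $\VR(A,Z,Y)$: since $A = \Sigma$, condition (i) of $(A,Z,Y)$-validity forces $\tau(a) = \{a\}$ for every $a \in \Sigma$, which determines $\tau$ completely, while conditions (ii)--(iv) are vacuous because $Z = Y = \emptyset$. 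Therefore $\VR(A,Z,Y) = \{\iota\}$, where $\iota$ denotes the r-relabeling with $\iota(a) = \{a\}$ for all $a$, i.e. the identity relabeling on trees.

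Finally I would conclude directly from the definition of the language recognized by a vta that $L(\mathcal{B}) = \bigcup(\tau(L(\mathcal{A})) \mid \tau \in \VR(A,Z,Y)) = \iota(L(\mathcal{A})) = L(\mathcal{A}) = L$, so $L$ is v-recognizable. I do not expect any real obstacle: the argument is a matter of matching definitions. The only point deserving explicit care is verifying that $Z$ and $Y$ may be taken empty --- a ranked alphabet is allowed to be the empty set, and the constraint $|Y_l| \le 1$ permits $Y_l = \emptyset$ --- so that the family of valid r-relabelings genuinely collapses to the single identity relabeling instead of forcing in new symbols from $V$ through condition (iv).
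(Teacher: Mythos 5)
Your proposal is correct and follows exactly the paper's own argument: embed $\Sigma$ into an infinite ranked set $V$, take the valid partitioning $(\Sigma,\emptyset,\emptyset)$, observe that the only $(\Sigma,\emptyset,\emptyset)$-valid r-relabeling is the identity, and conclude $L(\mathcal{B})=L(\mathcal{A})=L$. The extra care you take in verifying the partitioning conditions and the collapse of $\VR(A,Z,Y)$ to the identity is just a more explicit spelling-out of the same proof.
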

\begin{proof} 
Let $\cal A$ be a finite-state tree automaton (with input ranked alphabet $\Sigma$) such that $L=L({\cal A})$. Moreover,
let $V$ be an arbitrary infinite ranked set such that $\Sigma \subseteq V$.
We observe that $(\Sigma,\emptyset,\emptyset)$ is a valid partitioning of $\Sigma$ for $V$, hence 
${\cal B} = ({\cal A},V,\Sigma,\emptyset,\emptyset)$ is a vta over $V$. Moreover, the only 
$(\Sigma,\emptyset,\emptyset)$-valid r-relabeling is the identity mapping over $\Sigma$. Hence we obtain that $L({\cal B})=L({\cal A})$.
\end{proof}

Next we relate v-recognizable tree languages  and s-recognizable tree languages.

\begin{propos}\label{prop:vta-sta} Let ${\cal B} = ({\cal A},V,A,Z,Y)$ be a vta and ${\cal
    A}$ have input ranked alphabet $\Sigma$. Then there is a family $(L_\tau
  \mid \tau \in \VR(A,Z,Y))$ of 
rs$k$-recognizable tree languages  over $V$ such that
  $$L({\cal B}) = \bigcup (L_\tau \mid \tau \in \VR(A,Z,Y)),$$
where $k=\maxrk(\Sigma)$.
\end{propos}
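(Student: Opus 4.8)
The plan is to take, for each $\tau \in \VR(A,Z,Y)$, the tree language $L_\tau = \tau(L({\cal A}))$. With this choice the identity $L({\cal B}) = \bigcup(L_\tau \mid \tau \in \VR(A,Z,Y))$ holds immediately, since it is nothing but the definition of the language recognized by a vta. Hence the entire content of the statement reduces to showing that, for each fixed $\tau$, the tree language $\tau(L({\cal A}))$ is rs$k$-recognizable over $V$, where $k = \maxrk(\Sigma)$.

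To establish this I would adapt the construction in the proof of Lemma \ref{char-rec-lemma}(2) to the ranked setting. Fix $\tau$ and define the rs$k$-ta ${\cal A}_\tau = (Q, V, \Phi_\tau, F, R_\tau)$ by $\Phi_\tau = \{\varphi_\sigma \mid \sigma \in \Sigma\}$ with $\seml \varphi_\sigma \semr = \tau(\sigma)$, and $R_\tau = \{(q_1 \ldots q_l, \varphi_\sigma, q) \mid (q_1 \ldots q_l, q) \in \delta_\sigma \text{ for some } l \ge 0 \text{ and } \sigma \in \Sigma_l\}$. Since $\Sigma$ is finite, $\Phi_\tau$ is finite, and since $\maxrk(\Sigma) = k$ every rule in $R_\tau$ has a left-hand side of length at most $k$, so ${\cal A}_\tau$ is indeed an rs$k$-ta over $V$. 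It is essential here that the rs-ta model imposes no recursivity or emptiness-decidability requirement on its predicates, because the sets $\tau(\sigma)$ need not be recursive; this is precisely why the ranked model drops those conditions.

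It then remains to verify $L({\cal A}_\tau) = \tau(L({\cal A}))$, which I would prove by induction on trees exactly as in Lemma \ref{char-rec-lemma}(1): for every $\zeta \in T_\Sigma$ and $q \in Q$ one shows that $\zeta \in L({\cal A}, q)$ iff some $\xi \in \tau(\zeta)$ lies in $L({\cal A}_\tau, q)$, and conversely every tree accepted by ${\cal A}_\tau$ in state $q$ is the $\tau$-image of some $\zeta \in L({\cal A}, q)$. The one point requiring ranked bookkeeping — and the only subtlety I expect — is rank preservation: because $\tau$ is an r-relabeling we have $\tau(\sigma) \subseteq V_l$ whenever $\sigma \in \Sigma_l$, so a rule $(q_1 \ldots q_l, \varphi_\sigma, q)$ fires only at nodes carrying a rank-$l$ symbol with exactly $l$ successors. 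This guarantees that ${\cal A}_\tau$ accepts only well-formed $V$-trees and that $\tau(L({\cal A})) \subseteq T_V^{(k)}$, so both sides of the equation live in the same space of $k$-bounded $V$-trees. Finally, although $\VR(A,Z,Y)$ may be infinite, this causes no difficulty, since the recognizability claim is made separately for each individual member $L_\tau$ of the family.
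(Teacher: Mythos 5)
Your proposal is correct and takes essentially the same route as the paper: the paper also sets $L_\tau = \tau(L({\cal A}))$, observes that each $\tau \in \VR(A,Z,Y)$ is a $k$-tree relabeling, and invokes an adaptation of Lemma \ref{char-rec-lemma}(2) to obtain an rs$k$-ta ${\cal A}_\tau$ with $L({\cal A}_\tau) = \tau(L({\cal A}))$. Your write-up merely spells out the details the paper leaves implicit, including the key observation that the rs$k$-ta model drops the recursivity and emptiness-decidability requirements on predicates, which is exactly what makes the adaptation work.
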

\begin{proof} We note that every $\tau \in \VR(A,Z,Y)$ is a
$k$-tree relabeling. Hence, by an easy adaptation of Lemma
\ref{char-rec-lemma}(2), we have that the tree language $\tau(L({\cal A}))$ is recognizable by an
rs$k$-ta ${\cal A}_\tau$. Hence the statement holds with $L_\tau=L({\cal A}_\tau)$.
\end{proof}

In spite of the above fact, we can prove the following statement.

\begin{theo} \label{var-vs-sym-theo}The class of v-recognizable tree languages and the class of s-recognizable tree languages are incomparable with respect to inclusion.
\end{theo}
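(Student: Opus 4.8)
The plan is to prove incomparability by exhibiting one tree language that is v-recognizable but not s-recognizable, and one that is s-recognizable but not v-recognizable. Throughout I would fix an infinite ranked set $V$ with $V_0 = \nat$, a binary symbol $\sigma \in V_2$, and a unary symbol $\gamma \in V_1$, and I would view every v-recognizable language also as an ordinary tree language over the underlying label set of $V$.

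For the first separation I would use the \emph{equality} language $L_1 = \{\sigma(c,c) \mid c \in V_0\}$. To see that $L_1$ is v-recognizable, take the finite-state tree automaton $\A$ over $\Sigma = \{\sigma, z\}$ (with $z$ of rank $0$) recognizing the single tree $\sigma(z,z)$, together with the valid partitioning $(A,Z,Y) = (\{\sigma\}, \{z\}, \emptyset)$. Since $z$ is a \emph{bounded} variable symbol, every $\tau \in \VR(A,Z,\emptyset)$ satisfies $|\tau(z)| = 1$, so $\tau(\sigma(z,z)) = \{\sigma(c,c)\}$ with $\{c\} = \tau(z)$; letting $\tau(z)$ range over all singletons $\{c\}$, $c \in V_0$, yields exactly $L_1$ as the union. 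That $L_1$ is not s-recognizable I would establish by a state-counting (pumping) argument: given a hypothetical sta with finitely many states and rules, among the infinitely many accepted trees $\sigma(c,c)$ infinitely many must use the same rule at the root and the same leaf rules for the two children; picking two distinct witnesses $c \neq c'$ and combining the left computation of one with the right computation of the other produces an accepting computation for $\sigma(c,c')$, which does not lie in $L_1$. The conceptual point is that an sta can relate the two sibling labels only through its finite state set, and hence cannot enforce equality of arbitrary labels.

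For the second separation I would use a predicate that is neither cofinite nor realizable by single-label substitution. Let $L_2 = \{\gamma(c) \mid c \in \nat,\ c \text{ is divisible by } 2\}$. This language is s-recognizable by the sta that uses the predicate $\mathrm{div}(2)$ from Example~\ref{ex:sta} (a leaf rule guarded by $\mathrm{div}(2)$ followed by a unary rule guarded by ``$=\gamma$''). To show $L_2$ is not v-recognizable, suppose a vta $\B = (\A, V, A, Z, Y)$ recognizes it. Since r-relabelings preserve the shape of trees, every tree of $L(\A)$ has the form $s(p)$ with $s \in \Sigma_1$ and $p \in \Sigma_0$. I would then analyze the leaf $p$. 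If $p$ is a free variable symbol, then for any $\tau \in \VR(A,Z,Y)$ the set $\tau(p) = V_0 \setminus (A_0 \cup \tau(Z_0))$ is cofinite in $\nat$ and hence contains odd numbers, so $L(\B)$ would contain a tree with an odd leaf, which is impossible. If $p$ is a bounded variable symbol, then since only finitely many labels are excluded one can pick a valid $\tau$ with $\tau(p)$ an odd singleton, again forcing an odd leaf into $L(\B)$. Hence $p \in A_0$ and is a fixed even number, and the same reasoning forces the root $s$ into $A_1$ with $s = \gamma$. Consequently $L(\A)$ is contained in the finite set $\{\gamma(p) \mid p \in A_0,\ p \text{ even}\}$, and since every $\tau$ is the identity on $A$ we get $L(\B) = L(\A)$, a finite set, contradicting that $L_2$ is infinite.

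The main obstacle is the second (non-v-recognizability) direction: one has to rule out, simultaneously, both mechanisms by which a vta manufactures infinitely many labels, namely the cofinite image of a free variable and the family of singleton images of a bounded variable taken over all valid relabelings, and show that neither can carve out exactly the even labels at a leaf. Carrying out the case distinction over $A$, $Z$, and $Y$ precisely, and using that $\VR(A,Z,Y)$ is nonempty and that a bounded variable can be sent to all but finitely many labels as $\tau$ varies, is the delicate part; by comparison, the equality direction is a routine pumping argument.
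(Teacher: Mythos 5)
Your first separation is correct and is essentially the paper's own argument for that direction: the paper uses the language $\{aac \mid a\in V_1\}$, generated from a bounded variable symbol occurring twice, and rules out s-recognizability via an adaptation of Lemma~\ref{non-rec-lemma}, while you use $\{\sigma(c,c)\mid c\in V_0\}$ and carry out the equivalent pumping argument directly; both exploit the fact that a bounded variable is relabeled consistently, which an sta with finitely many rules cannot enforce.

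The second separation, however, has a genuine gap: your language $L_2=\{\gamma(c)\mid c\in\nat,\ c \text{ even}\}$ is in fact v-recognizable, so it cannot witness the other non-inclusion. The step that fails is the very first one, where you ``suppose a vta $\B = (\A, V, A, Z, Y)$ recognizes it'' with $V$ being \emph{your} fixed ranked set with $V_0=\nat$. By the paper's definition, the infinite ranked set is a component of the vta and is chosen by the automaton, not fixed in advance (this freedom is exactly what the proof of Proposition~\ref{prop:rec-then-vrec} exploits). Concretely, let $V'$ be the infinite ranked set with $V'_0$ the set of even numbers and $V'_1=\{\gamma\}$, let $\Sigma=\{\gamma,y\}$ with $y\notin V'$ of rank $0$, let $\A$ recognize $\{\gamma(y)\}$, and take the valid partitioning $(\{\gamma\},\emptyset,\{y\})$ of $\Sigma$ for $V'$. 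The unique valid r-relabeling $\tau$ satisfies $\tau(y)=V'_0\setminus(A_0\cup\tau(Z_0))=V'_0$, i.e., $\tau(y)$ is the set of all even numbers, hence $L(\B)=\{\gamma(c)\mid c \text{ even}\}=L_2$. Your free-variable case breaks precisely here: $\tau(p)$ is cofinite in $V'_0$, not in $\nat$, so it need not contain any odd number. This is why the paper instead uses the alternating-parity language $\{n_k\cdots n_1a\mid n_i \text{ odd iff } i \text{ odd}\}$: its trees contain labels of \emph{both} parities, so the ranked set of any recognizing vta is forced to contain all of $\nat$ (making the free-variable elimination sound), and its trees contain unboundedly many distinct labels, which then defeats the bounded-variable mechanism, since after eliminating free variables every tree in a vta language has at most $|A\cup Z|$ distinct symbols. (A further, repairable, issue: your claim that ``the same reasoning forces the root $s$ into $A_1$'' fails if $V_1=\{\gamma\}$; but that step is unnecessary, since confining the leaf label to the finite set $A_0$ already contradicts the infiniteness of $L_2$.) Any fix must pick a language whose own labels force the vta's universe to be too rich, e.g.\ one mixing both parities with unboundedly many labels per tree, as the paper does.
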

\begin{proof}

a) We give a v-recognizable tree language and show that it is not rs-recognizable. For this, let $V=V_0 \cup V_1$ be an infinite ranked set such that $V_0 =\{c\}$, and consider the ranked alphabet $\Sigma= \Sigma_0 \cup \Sigma_1$ with $\Sigma_0 = \{c\}$ and $\Sigma_1 = \{z\}$ with $z\not\in V$.
Let $\A$ be a finite-state tree automaton
with input ranked alphabet $\Sigma$ such that $L(\A)=\{zzc\}$ (where parentheses are omitted). Now $(\{c\},\{z\},\emptyset)$ is a valid partitioning of $\Sigma$ for $V$, hence $\B=(\A,V,\{c\},\{z\},\emptyset)$ is a vta over $V$. Since every $(\{c\},\{z\},\emptyset)$-valid r-relabeling takes $z$ to an element $a\in V_1$, we have   $L(\B)=\{aac \mid a\in V_1\}$. By an easy adaptation of Lemma \ref{non-rec-lemma} we obtain that $L(\B)$ is not rs-recognizable.

b) We give an rs$1$-recognizable tree language and show that it
is not v-recognizable. 

For this, let $\Sigma = \Sigma_0 \cup \Sigma_1$ be a ranked alphabet with $\Sigma_0 = \{a\}$ and $\Sigma_1 = \{e, o\}$, and let 
$V = V_0 \cup V_1$ be an infinite ranked set with $V_0 = \{ a \}$ and $V_1 = \mathbb{N}$ .
Consider the recognizable tree language $$L= \{ a, oa, eoa, oeoa, eoeoa, \ldots\}$$ over $\Sigma$ (where parentheses are omitted),
and the r-relabeling $\overline{\tau}$ defined by

$\overline{\tau}(a) = a$, $\overline{\tau}(e)=$ set of all even numbers, and $\overline{\tau}(o)=$ set of all odd
numbers.

By the adaptation of Lemma \ref{char-rec-lemma}(2) to rs$k$-ta, we obtain that the tree language
$\overline{\tau}(L)$ can be recognized by an rs$1$-ta. Roughly speaking, $\overline{\tau}(L)$ consists of all sequences of the form $n_k\ldots n_2n_1a$, where $k\geq 0$, $n_i$ is an odd number if $i$ is odd and an even number otherwise.
We show by contradiction that $\overline{\tau}(L)$ cannot be recognized by any vta.

For this, assume that there is a vta ${\cal B} = ({\cal A},V,A,Z,Y)$, where the input alphabet of $\A$ is $\Sigma=A\cup Z\cup Y$ such that $L(\B)=\overline{\tau}(L)$. We may assume without loss of generality that $Y=\emptyset$, which can be seen as follows. Assume that $Y=\{y\}$, and that there is a tree $\xi \in L(\A)$ such that $y$ occurs in $\xi$ at the position $1^i$ (see Section \ref{sect:trees} for the definition of a position). 
Moreover, let $\tau: \Sigma \rightarrow {\cal P}(V)$ be an $(A,Z,Y)$-valid r-relabeling. Since $\tau(y)$ contains both even and odd numbers, there are trees $\zeta$ and $\zeta'$ in the set $\tau(\xi)$ such that at the position $1^i$ of $\zeta$ and $\zeta'$ there is an odd number and an even number, respectively. On the other hand, $\tau(\xi)\subseteq L(\B)$, which is a contradiction because at the  position $1^i$ of every tree in $L(\B)$ there is either an odd number or an even number (depending on whether $i$ is odd or even).

Hence $Y=\emptyset$. Now assume that $|A\cup Z| = m$. Then every tree in $\xi \in L(\A)$ consists of at most $m$ different symbols. Moreover, by the definition of the $(A,Z,Y)$-valid r-relabeling,
for every $\tau \in \mathrm{\VR}(A,Z,Y)$, each tree in $\tau(\xi)$ consists of $m$ different symbols. It means, each tree
in $L(\B)$ consists of $m$ different symbols, which  contradicts to the much more flexible form of trees in $\overline{\tau}(L)$.
\end{proof}

\section{Symbolic regular tree grammars} 

In this section we introduce symbolic regular tree grammars and show that they are semantically equivalent to sta. 

\begin{df} \rm A \emph{symbolic $k$-bounded  regular tree grammar} (s$k$-rtg) is a
  tuple $\G = (Q,U,\Phi,q_0,R)$, where
\begin{itemize}
\item $Q$ is a finite set (states\footnote{In classical regular tree
    grammars these elements are called nonterminals.}),
\item $(U,\Phi)$ is a label structure,
\item $q_0 \in Q$ (initial state), and 
\item $R$ is a finite set of rules of the form $q \rightarrow u$ where $q \in Q$ and $u \in T_{\BC(\Phi)}\ui k(Q)$. 
\end{itemize}
\end{df}
By a {\em symbolic regular tree grammar} (srtg) we mean an s$k$-rtg for some $k\in \nat$.

The s$k$-rtg $\G = (Q,U,\Phi,q_0,R)$ induces the derivation
relation $\Rightarrow_\G \subseteq T_U\ui k(Q) \times T_U\ui k(Q)$ defined by
$\xi_1 \Rightarrow_\G \xi_2$ iff there is a position $w \in \pos_q(\xi_1)$ and a rule $q \rightarrow u$ in $R$,  such that $\xi_2 = \xi_1[u']_w$, where $u'$ is
obtained from $u$ by replacing every occurrence of $\varphi\in \BC(\Phi)$ by some $a \in \seml \varphi\semr$. (The condition that all predicates in $\Phi$ are recursive makes the relation $\Rightarrow_\G$ recursive.)

The $k$-bounded tree language $L(\G,q)$ generated by $\G$ from a state $q\in Q$ is the set 
\[
L(\G,q) = \{\xi \in T_U\ui k \mid q \Rightarrow_\G^* \xi\}\enspace.
\] 
The \emph{tree language generated by $\G$},
denoted by $L(\G)$, is the set $L(\G,q_0)$. A tree language $L \subseteq
T_U\ui k$ is called \emph{symbolically $k$-regular} (for short: s$k$-regular) if there is an s$k$-rtg $\G$ such that $L = L(\G)$. Moreover, a tree language is s-regular if it is s$k$-regular for some $k\in \nat$.

Two s$k$-rtg $\G_1$ and $\G_2$ are \emph{equivalent} if $L(\G_1)= L(\G_2)$.

In the following we give a characterization of s-regular tree
languages in terms of regular tree languages and relabelings.

\

\begin{lm}\label{lm:srtg-alph}\rm $\ $
\begin{enumerate}
\item For every s$k$-regular tree
  language $L$ we can effectively construct a $k$-bounded regular tree
  language $L'$ and a $k$-tree relabeling $\tau$ such that $L =
  \tau(L')$. 
\item For every  $k$-bounded regular tree language $L'$ and
  $k$-tree relabeling $\tau$ we can effectively construct an
  s$k$-regular tree language $L$ such that  $L =
  \tau(L')$.
\end{enumerate}
\end{lm}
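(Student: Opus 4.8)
The plan is to mirror the structure of Lemma~\ref{char-rec-lemma} exactly, since the passage from s$k$-ta to finite-state tree automata there is purely syntactic and the rules of an s$k$-rtg have the same shape as the rules of a regular tree grammar once the predicates are stripped out. The only genuine difference is that the right-hand sides of s$k$-rtg rules are trees $u \in T_{\BC(\Phi)}^{(k)}(Q)$ whose internal labels are predicates rather than symbols, so I must replace predicates by symbols of a suitable ranked alphabet rather than simply recording a single guard per transition.

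For part~(1), suppose $L = L(\G)$ for an s$k$-rtg $\G = (Q,U,\Phi,q_0,R)$. First I would collect all the predicates that actually occur inside right-hand sides: for each rule $q \to u$ and each internal position $w$ of $u$ carrying a predicate $\varphi$ with $\rk_u(w) = l$, introduce a fresh ranked symbol $[\varphi,l]$ of rank $l$, forming a finite ranked alphabet $\Sigma$ with $\maxrk(\Sigma) \le k$. I would then define the regular tree grammar $\G' = (Q,\Sigma,q_0,R')$ whose rule $q \to u'$ is obtained from $q \to u$ by replacing each occurrence of a predicate $\varphi$ of rank $l$ by the symbol $[\varphi,l]$ (leaving the state-leaves $Q$ untouched). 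Finiteness of $R'$ is immediate, and $L(\G')$ is $k$-bounded because every $u'$ lies in $T_\Sigma^{(k)}(Q)$. Finally I would set $\tau([\varphi,l]) = \seml \varphi \semr$, which is a genuine $k$-tree relabeling by the decidability assumptions in the label structure $(U,\Phi)$. The equality $L(\G) = \tau(L(\G'))$ would follow by showing, for each $q \in Q$, that $L(\G,q) = \tau(L(\G',q))$; the natural argument is an induction on the length of the derivation $q \Rightarrow^*_{\G'} \zeta$, matching each grammar step with the corresponding $\G$-step, using the fact that replacing a predicate by an element of $\seml \varphi \semr$ in the $\Rightarrow_\G$ relation corresponds precisely to relabeling $[\varphi,l]$ under $\tau$.

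For part~(2), I would run the construction in reverse. Given a $k$-bounded regular tree language $L' = L(\G')$ with $\G' = (Q,\Sigma,q_0,R')$ (assuming without loss of generality $\maxrk(\Sigma) \le k$, exactly as in Lemma~\ref{char-rec-lemma}) and a $k$-tree relabeling $\tau : \Sigma \to \PS(U)$, I would build the s$k$-rtg $\G = (Q,U,\Phi,q_0,R)$ by taking $\Phi = \{\varphi_\sigma \mid \sigma \in \Sigma\}$ with $\seml \varphi_\sigma \semr = \tau(\sigma)$, and replacing each symbol $\sigma$ (of rank $l$) in every right-hand side of $R'$ by the predicate $\varphi_\sigma$. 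Again $L(\G) = \tau(L(\G'))$ follows by the analogous induction, and the relabeling hypotheses guarantee that $\Phi$ is a legitimate finite family of recursive predicates with decidable emptiness.

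I do not expect either direction to present a serious obstacle; the work is genuinely parallel to Lemma~\ref{char-rec-lemma}, the earlier automaton proof. The one point that needs slight care, and which I would state the inductive invariant around, is that the relabeling $\tau$ must be applied uniformly and \emph{independently} at each internal node of a right-hand side: a single rule $q \to u'$ may contain several copies of the same symbol $[\varphi,l]$, and in the derivation each copy is relabeled separately to a (possibly different) element of $\seml \varphi \semr$, matching the clause in the definition of $\Rightarrow_\G$ that replaces each occurrence of $\varphi$ independently. Phrasing the induction at the level of whole right-hand sides rather than single steps makes this bookkeeping transparent and completes both parts.
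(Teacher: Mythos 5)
Your proposal is correct and follows essentially the same route as the paper's own proof: both directions use exactly the same constructions (symbols $[\varphi,l]$ with $\tau([\varphi,l]) = \seml \varphi \semr$ for part~(1), predicates $\varphi_\sigma$ with $\seml \varphi_\sigma \semr = \tau(\sigma)$ for part~(2)), with the equality $L(\G) = \tau(L(\G'))$ established by the analogous induction. Your closing remark about relabeling distinct occurrences of the same symbol independently is a correct and worthwhile clarification of the derivation semantics, but it does not change the argument.
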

\begin{proof} First let $L = L(\G)$ for some s$k$-rtg $\G = (Q,U,\Phi,q_0,R)$. We construct the regular tree grammar $\G' = (Q,\Sigma,q_0,R')$ as follows.
\begin{itemize}
\item For every $l\le k$, let $$\Sigma_l = \{ [\varphi,l] \mid \exists (q \rightarrow u) \in R, 
w \in \pos_{\BC(\Phi)}(u):  u(w) = \varphi \text{ and } \rk_w(u) = l\},$$ 
\item and let $R'$ be the set of all rules $q \rightarrow u'$ such
  that there is a rule $q \rightarrow u$ in $R$ and $u'$ is obtained from $u$ as follows:
for every $w\in \pos(u)_{\BC(\Phi)}$, we replace $u(w)$ by $[u(w),\rk_w(u)]$.
\end{itemize}
It is obvious that $L(\G')$ is $k$-bounded. Moreover, we let the relabeling $\tau: \Sigma \rightarrow {\cal P}(U)$ be defined by $\tau([\varphi,l]) = \seml \varphi \semr$
for every $0\le l\le k$ and $[\varphi,l] \in \Sigma_l$.

We can prove the following statement by tree induction: 

\begin{quote}
for every $\zeta \in T_U\ui k$ and $q \in Q$:\\
$q \Rightarrow_{\cal G}^* \zeta$ \;  iff \; 
there is a $\xi \in T_\Sigma$: $q \Rightarrow_{\G'}^* \xi$ and $\zeta \in \tau(\xi)$.
\end{quote} 
Then $L(\G) = \tau(L(\G'))$. 

For the proof of Statement 2,  let us consider a regular tree grammar  $\G' = (Q,\Sigma,q_0,R)$ such that $L(\G')$ is $k$-bounded and a relabeling $\tau: \Sigma \to \PS(U)$. We may assume without loss of generality that  $\maxrk(\Sigma)\le k$. We construct the s$k$-rtg $\G = (Q,U,\Phi,q_0,R')$, where $\Phi$ and $R'$ are defined as follows:
\begin{itemize}
\item $\Phi=\{ \varphi_\sigma \mid \sigma \in \Sigma\}$, where $\seml \varphi_\sigma \semr =\tau(\sigma)$ for every $\sigma \in\Sigma$, 
\item $R'$:  if $q \rightarrow u$ is in $R$, then $q \rightarrow u'$ is in $R'$ where $u'$ is obtained from $u$ by replacing every $\sigma$ by $\varphi_\sigma$. 
\end{itemize}
It should be that $L(\G)=\tau(L(\G'))$.
\end{proof}

It follows from Lemma \ref{lm:srtg-alph}(2) that each regular tree language is also s-regular. We obtain this by letting $\tau$  be the identity mapping.
As another consequence of Lemma  \ref{lm:srtg-alph}, we obtain the
following characterization result. 

\begin{theo} \label{th:char-reg}
A tree language $L$ is
  s$k$-regular if and only if it is the image of a $k$-bounded regular tree language under a $k$-tree relabeling.
\end{theo}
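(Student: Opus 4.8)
The plan is to obtain Theorem \ref{th:char-reg} as an immediate corollary of the two parts of Lemma \ref{lm:srtg-alph}, in exact analogy with the way Theorem \ref{th:char} was deduced from Lemma \ref{char-rec-lemma}. The theorem is a biconditional, and each direction is handled by one of the two statements of the lemma, so no genuinely new argument is needed beyond invoking what is already established.

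For the forward implication, I would argue as follows. Suppose $L$ is s$k$-regular. Then Lemma \ref{lm:srtg-alph}(1) directly yields a $k$-bounded regular tree language $L'$ together with a $k$-tree relabeling $\tau$ satisfying $L = \tau(L')$. This is precisely the assertion that $L$ is the image of a $k$-bounded regular tree language under a $k$-tree relabeling, so this direction is finished at once.

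For the backward implication, suppose conversely that $L = \tau(L')$ for some $k$-bounded regular tree language $L'$ and some $k$-tree relabeling $\tau$. Then Lemma \ref{lm:srtg-alph}(2) furnishes an s$k$-regular tree language that is equal to $\tau(L')$, and hence equal to $L$. Therefore $L$ is s$k$-regular, completing the equivalence.

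The point worth flagging is that the real work does not lie in this theorem at all but has already been absorbed into Lemma \ref{lm:srtg-alph}. The substantive step is the tree induction inside the proof of that lemma, which establishes the correspondence between derivations $q \Rightarrow_\G^* \zeta$ of the symbolic grammar over $U$ and derivations $q \Rightarrow_{\G'}^* \xi$ of the constructed classical grammar over $\Sigma$ with $\zeta \in \tau(\xi)$; this is the analogue of the shape-preservation and relabeling bookkeeping used throughout. Once that correspondence is in hand, Theorem \ref{th:char-reg} is a purely formal restatement, and I would expect to state it with a one-line proof that simply cites both parts of the lemma, with the identity-relabeling remark already noted separately to recover that every regular tree language is s-regular.
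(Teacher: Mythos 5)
Your proposal matches the paper exactly: the paper states Theorem \ref{th:char-reg} as an immediate consequence of Lemma \ref{lm:srtg-alph}, with part (1) giving the forward direction and part (2) the backward direction, and offers no further argument. Your observation that the substantive work (the tree-induction correspondence between symbolic and classical derivations) lives entirely inside the lemma is also how the paper organizes things, so the proof is correct and essentially identical in structure.
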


We can also show that s-recognizable tree languages are the same as s-regular tree languages. 

\begin{theo}\label{th:rec=reg} A tree language is s-recognizable if and only if it is s-regular.
\end{theo}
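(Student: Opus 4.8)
The plan is to chain together the two characterization theorems already established. Theorem~\ref{th:char} tells us that a tree language $L$ is s$k$-recognizable if and only if $L = \tau(L')$ for some $k$-bounded \emph{recognizable} tree language $L'$ and some $k$-tree relabeling $\tau$. Theorem~\ref{th:char-reg} tells us that $L$ is s$k$-regular if and only if $L = \tau(L')$ for some $k$-bounded \emph{regular} tree language $L'$ and some $k$-tree relabeling $\tau$. Thus both notions reduce to the same shape---``image of a $k$-bounded [recognizable/regular] tree language under a $k$-tree relabeling''---and the whole theorem collapses to the classical fact that recognizable and regular tree languages coincide.

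Concretely, I would argue as follows. Suppose $L$ is s-recognizable, so it is s$k$-recognizable for some $k$. By Theorem~\ref{th:char} there is a $k$-bounded recognizable $L'$ and a $k$-tree relabeling $\tau$ with $L = \tau(L')$. By the classical Myhill--Nerode/Thatcher--Wright-type equivalence of finite-state tree automata and regular tree grammars (which the paper explicitly presupposes in Section~\ref{prel-section}), $L'$ is also regular, and it is still $k$-bounded. Now Theorem~\ref{th:char-reg} applies in the reverse direction and yields that $\tau(L') = L$ is s$k$-regular, hence s-regular. The converse is entirely symmetric: if $L$ is s-regular, then $L = \tau(L')$ with $L'$ a $k$-bounded regular tree language; that same $L'$ is recognizable, so Theorem~\ref{th:char} gives that $L$ is s$k$-recognizable, hence s-recognizable.

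The only point requiring a moment's care is that the classical equivalence of recognizability and regularity preserves the $k$-boundedness of the tree language. This is immediate, however, because both characterization theorems already quantify over the same parameter $k$: the witness language $L'$ produced in one direction is the very language fed into the other, and passing between a finite-state tree automaton and a regular tree grammar for $L'$ changes neither the trees in $L'$ nor their ranks, so $\rk(\xi)\le k$ for all $\xi\in L'$ is retained. I expect no genuine obstacle here; the substance of the result lives entirely in Lemmas~\ref{char-rec-lemma} and~\ref{lm:srtg-alph}, and Theorem~\ref{th:rec=reg} is just the corollary obtained by composing their conclusions through the classical automaton--grammar correspondence. I would therefore present the proof as a short two-line appeal to Theorems~\ref{th:char} and~\ref{th:char-reg} together with the recognizable-equals-regular fact recalled in the preliminaries.
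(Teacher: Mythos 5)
Your proposal is correct and follows essentially the same route as the paper: the paper's proof likewise chains the recognizability characterization (Lemma~\ref{char-rec-lemma}, i.e.\ Theorem~\ref{th:char}) with the regularity characterization (Lemma~\ref{lm:srtg-alph}, i.e.\ Theorem~\ref{th:char-reg}) through the classical fact that recognizable and regular tree languages coincide. Your extra remark that the classical equivalence preserves $k$-boundedness (since it does not change the language itself) is exactly the observation that makes the composition work.
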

\begin{proof} It follows directly from Lemmas \ref{char-rec-lemma}
  and \ref{lm:srtg-alph} and the fact that a tree language is recognizable if and only if it can be generated by a regular tree grammar (cf. e.g. Theorem 3.6 in Chapter II of
  \cite{gecste84}).
\end{proof}

In the rest of this section we show some useful transformations on s$k$-rtg which preserve the generated tree language.
For this, we need some preparation.

Let $\G = (Q,U,\Phi,q_0,R)$ be an s$k$-rtg. A rule $q\to u$ in $R$ is {\em feasible} if, for every predicate $\varphi$ which occurs in $u$, we have $\seml \varphi\semr \neq \emptyset$, and we call
$\G$ {\em clean} if all its rules are feasible.  It is obvious that rules which are non-feasible cannot be used in any valuable derivations. 
Hence, they can be dropped from $R$ without any effect on the generated tree language $L(\G)$. Moreover, it is decidable whether a rule is feasible or not due to the fact that the emptiness of predicates in $\Phi$ is  decidable. Summarizing up, for every s$k$-rtg  we can construct an equivalent one, which is clean.

The s$k$-rtg $\G$ is in \emph{normal form} if every rule
has the form $q\to \varphi(q_1,\ldots,q_l)$ for some $l\le k$, $\varphi \in \BC(\Phi)$, and
$q_1,\ldots,q_l \in Q$. A state $q\in Q$ is {\em reachable} if there is a tree $\xi \in T_U\ui k(Q)$ such that  $q_0 \Rightarrow_\G^* \xi$ and $q$ occurs in $\xi$. Moreover, the state  $q$ is {\em productive}, if $L(\G,q)\ne \emptyset$. Finally, $\G$ is {\em reduced} if all its states
are reachable and productive. We can prove the following result.

\begin{lm}\rm \label{lm:srtg-normal-form} For every s$k$-rtg  there is an equivalent reduced s$k$-rtg which is in
  normal form.
\end{lm}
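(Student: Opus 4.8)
The plan is to follow the classical construction for reduced regular tree grammars in normal form (as in \cite{gecste84,comdaugiljaclugtistom97}), carried out in three phases — first achieving normal form, then cleaning away unsatisfiable guards, and finally reducing — while paying attention to the one genuinely new ingredient, namely that a rule can contribute to a derivation only if its guard has a nonempty denotation.

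\textbf{Phase 1 (normal form).} Given an s$k$-rtg $\G = (Q,U,\Phi,q_0,R)$, I would first flatten the right-hand sides. For a rule $q \to u$ with $u = \varphi(u_1,\ldots,u_l)$ and $l \ge 1$, I replace each subtree $u_i$ that is not already a state by a fresh state $p_i$, add the rule $p_i \to u_i$, and recurse on these new rules; subtrees $u_i$ that are already states are left in place. This terminates because each newly created right-hand side is a proper subtree of its predecessor, and it preserves the generated language since the fresh state $p_i$ generates exactly the set of $U$-trees that $u_i$ did. After flattening, every rule is either a normal-form rule $q \to \varphi(q_1,\ldots,q_l)$ with $q_1,\ldots,q_l \in Q$ and $l \le k$ (the case $l=0$ being a rule $q \to \varphi$), or a chain rule $q \to q_1$ with $q_1 \in Q$. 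To remove the chain rules I compute, for each $q$, the set $C(q)$ of states reachable from $q$ by chain rules only, and for every $p \in C(q)$ and every normal-form rule $p \to \varphi(q_1,\ldots,q_l)$ in $R$ I add the rule $q \to \varphi(q_1,\ldots,q_l)$; deleting all chain rules then yields an equivalent grammar all of whose rules are in normal form.

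\textbf{Phase 2 (cleaning).} In a normal-form grammar each rule contains exactly one predicate, so the rule $q \to \varphi(q_1,\ldots,q_l)$ is feasible precisely when $\seml \varphi \semr \ne \emptyset$, which is decidable because $(U,\Phi)$ is a label structure. By the observation preceding the lemma, deleting infeasible rules does not change $L(\G)$, so I may assume that every remaining guard is satisfiable. \textbf{Phase 3 (reduction).} Once every guard is satisfiable, the presence of a rule $q \to \varphi(q_1,\ldots,q_l)$ guarantees that some label $a \in \seml \varphi \semr$ can be produced at the corresponding position, so the notions of productivity and reachability become purely structural — exactly as for a classical regular tree grammar in which each guard $\varphi$ plays the role of a terminal of rank $l$. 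I therefore compute the least set $P$ of productive states satisfying $q \in P$ whenever some rule $q \to \varphi(q_1,\ldots,q_l)$ has $q_1,\ldots,q_l \in P$, and I delete every non-productive state together with all rules that mention it; I then compute the states reachable from $q_0$ and delete the rest. Deletions never create a non-normal-form rule, so the final grammar is reduced, in normal form, and equivalent to $\G$. If $q_0$ turns out to be non-productive, then $L(\G)=\emptyset$ and I output a fixed trivial grammar for the empty language as the degenerate case.

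The construction is effective throughout, and the only step that is not a verbatim copy of the classical argument is the cleaning in Phase 2: this is where the decidability of emptiness of the predicates in $\Phi$ is essential, and it is precisely what lets the structural reduction of Phase 3 be applied unchanged. I therefore expect the main point to verify carefully to be language preservation across Phases 1 and 2 — that flattening with fresh states and the subsequent removal of chain rules and infeasible rules leave $L(\G,q_0)$ intact — rather than any deep difficulty; the reduction of Phase 3 can essentially be cited from the classical theory once all guards are known to be satisfiable.
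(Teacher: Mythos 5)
Your proof is correct, but it takes a genuinely different route from the paper. The paper's proof is a transfer argument: it first makes $\G$ clean (deleting non-feasible rules, exactly your Phase 2), then invokes Lemma~\ref{lm:srtg-alph}(1) to obtain a classical regular tree grammar $\G'$ over a ranked alphabet together with a relabeling $\tau$ with $L(\G)=\tau(L(\G'))$ --- where cleanness guarantees $\tau(\sigma)\neq\emptyset$ for every $\sigma$ --- then applies the classical reduction and normal-form constructions of \cite{comdaugiljaclugtistom97} to get $\G''$, and finally maps back into the symbolic world via the construction of Lemma~\ref{lm:srtg-alph}(2), observing that the resulting s$k$-rtg inherits cleanness, reducedness and normal form because every symbol of $\G''$ has a nonempty image under $\tau$. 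You instead redo the classical three-phase construction (flattening with fresh states and chain-rule elimination, cleaning, structural reduction with productivity computed before reachability) directly on the symbolic grammar. Both arguments hinge on the same single genuinely symbolic ingredient, which you identify correctly: once every guard $\varphi\in\BC(\Phi)$ occurring in a rule satisfies $\seml\varphi\semr\neq\emptyset$, productivity and reachability behave exactly as in the classical theory, each guard acting as a terminal of its rank. What the paper's route buys is brevity --- the combinatorics of flattening and reduction are delegated wholesale to the cited classical results via the already-proved characterization lemma; what your route buys is a self-contained construction that never round-trips through the ranked-alphabet encoding, and that makes visible exactly where satisfiability of guards enters. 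One shared cosmetic caveat (affecting the paper's proof equally, so not a gap): when $L(\G)=\emptyset$ no grammar is literally reduced in the paper's sense, since $q_0$ cannot be productive, so the degenerate case must be handled by convention, as you do.
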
 
\begin{proof} Let $\G = (Q,U,\Phi,q_0,R)$ be an s$k$-rtg. We may assume that $\G$ is clean.
By Lemma
  \ref{lm:srtg-alph}(1), there is a regular tree grammar $\G'$ over some ranked alphabet $\Sigma$ 
such that $L(\G)$ is $k$-bounded, and there is a
  relabeling $\tau: \Sigma \rightarrow {\cal P}(U)$ such that $L(\G) =
  \tau(L(\G'))$. Since $\G$ is clean, $\tau(\sigma)\neq  \emptyset$ for every $\sigma \in \Sigma$ (see  the proof of that lemma).

Then we transform $\G'$ into an equivalent regular tree grammar
  $\G''$ which is reduced and is in normal form using the transformations in \cite[Prop. 2.1.3, 2.1.4]{comdaugiljaclugtistom97}.
Note that $L(\G'')$ is $k$-bounded and $L(\G) =  \tau(L(\G''))$.

Finally, we follow the proof of Lemma  \ref{lm:srtg-alph}(2) to construct an s$k$-rtg $\overline{\G}$ from $\G''$ and $\tau$ such that $L(\overline{\G}) =  \tau(L(\G''))$. Then $\overline{\G}$ is clean due to the above condition on $\tau$. Moreover,
a direct inspection of that construction shows that $\overline{\G}$ is reduced and is in normal form.
\end{proof}

\section{Symbolic tree transducers}

In this section we formalize our adaptation of the concept of a symbolic tree transducer from \cite{veabjo11a,veabjo11b}. Then we show some basic properties, relate symbolic tree transducers to classical top-down tree transducers  \cite{tha70,rou70,eng75}, and compare our model with the original one. Finally, we prove a composition result for symbolic tree transducers.

\subsection{Definition of stt}\label{sect:stt-def}

For every finite set $Q$ and $l \in \nat$, we let $Q(X_l) =
\{q(x_i) \mid q \in Q, x_i \in X_l\}$. 

We denote by $\F(U \rightarrow V)$ the set of all unary computable functions from $U$ to $V$. Moreover, for every tree
$u \in T_{\F(U \rightarrow V)}(Y)$ and $a\in U$, we denote by $u(a)$ the tree which is obtained by replacing every function $f$ in $u$ by the value $f(a)\in V$. Hence we have that $u(a)\in T_{V}(Y)$.

\begin{df}\rm  Let $k \in \nat$. A \emph{symbolic $k$-bounded  tree transducer} (s$k$-tt) is a tuple
  $\M = (Q,U,\Phi,V,q_0,R)$, where
\begin{itemize}
\item $Q$ is a finite set (states),
\item $(U,\Phi)$ is a label structure (input label structure) and $V$ is a set (output labels),
\item $q_0 \in Q$ (initial state), and 
\item $R$ is a finite set of rules of the form $q(\varphi(x_1,\ldots,x_l)) \to u$ 
where $q \in Q$, $\varphi \in \BC(\Phi)$, $0 \le  l\le k$, and $u \in T_{\F(U \rightarrow V)}\ui k(Q(X_l))$. 
\end{itemize}
\end{df}
Clearly, every s$k$-tt is an s$(k+1)$-tt. By an stt we mean an s$k$-tt for some $k \in \nat$.  

For a rule $\rho = q(\varphi(x_1,\ldots,x_l)) \to u$, we call the pair $(q,l)$ the \emph{left-hand side state-rank pair}, $\varphi$ the \emph{guard}, and $u$ the
\emph{right-hand side} of $\rho$, and denote them  by $\lhs(\rho)$,  $\grd(\rho)$, and $\rhs(\rho)$, respectively. 

We say that the stt $\M$ is \emph{linear} (resp. \emph{nondeleting}) if, for each rule $\rho$ as above, its right-hand side contains at most (resp. at
least) one occurrence of $x_i$ for every $1 \le i \le l$.

Moreover,  $\M$ is \emph{deterministic} if, for any two different rules $\rho_1$ and $\rho_2$ in $R$, the condition $\lhs(\rho_1)=\lhs(\rho_2)$ entails that $\seml \grd(\rho_1) \semr\cap \seml \grd(\rho_2) \semr=\emptyset$. Finally, $\M$ is {\em total} if for every $q\in Q$ and $0\le l \le k$, we have
\[\seml \bigvee_{\substack{\rho\in R \\ \lhs(\rho)=(q,l)}}
\grd(\rho)\semr =U .\]
We note that, as for sta, no s$k$-tt is a total $s(k+1)$-tt.

Next we define the semantics of an s$k$-tt $\M = (Q,U,\Phi,V,q_0,R)$. 
We define the \emph{derivation relation of $\M$}, denoted by $\Rightarrow_{\M}$, to be the smallest binary relation $\Rightarrow_{\M} \subseteq T_V(Q(T_U)) \times T_V(Q(T_U))$ such that for every $\xi_1,\xi_2 \in T_V(Q(T_U))$:

$\xi_1 \Rightarrow_{\M} \xi_2$ iff there is a position $w \in \pos(\xi_1)$ and a rule $q(\varphi(x_1,\ldots,x_l)) \to u$ in $R$,  such that

\begin{itemize}
\item $\xi_1|_w = q(a(\zeta_1,\ldots,\zeta_l))$ for some $a \in \seml \varphi \semr$ and $\zeta_1,\ldots,\zeta_l \in T_U^{(k)}$, and
\item $\xi_2=\xi_1[u']_w$, where $u'$ is obtained from $u(a)$ by replacing every index $p(x_i) \in Q(X_l)$ by $p(\zeta_i)$. 
\end{itemize} 
The conditions that all predicates in $\Phi$ are recursive and that all functions in the right-hand side of the rules are computable make the relation $\Rightarrow_{\M}$ recursive. Sometimes, we drop $\M$ from $\Rightarrow_{\M}$. 

Let $q \in Q$, $\xi \in T_U$, and $\zeta \in T_V$. We can show by induction on $\xi$ that
if $\xi \in T_U\ui k$ and $q(\xi)\Rightarrow_{\M}^* \zeta$ holds, then also  $\zeta \in T_V\ui k$.
The \emph{$q$-tree transformation} computed by $\M$, denoted by $\M_q$, is the relation
\[
\M_q = \{(\xi,\zeta) \in T_U^{(k)} \times T_V^{(k)} \mid q(\xi) \Rightarrow_{\M}^* \zeta\}\enspace.
\]
The \emph{tree transformation} computed by $\M$, also denoted by $\M$, is defined by $\M = \M_{q_0}$.
 The class of
tree transformations computed by s$k$-tt (resp. linear, nondeleting, deterministic, and total,  s$k$-tt) is denoted by $\STT\ui k$ (resp. $\lSTT\ui k$,
$\nSTT\ui k$, $\dSTT\ui k$, and $\tSTT\ui k$). These restrictions  can
be combined in the usual way, for instance, we will denote by
$\lnSTT\ui k$ the class of tree transformations computed by linear and
nondeleting  s$k$-tt.

A deterministic s$k$-tt (total s$k$-tt) transforms every input tree
into at most one (at least one) output tree.

\begin{lm}\rm \label{dt-lemma-stt} If $\M$ is a deterministic (resp. total) s$k$-tt, then  we have $|\M_q(\xi)|\le 1$  (resp. $|\M_q(\xi)|\ge 1$) for every $q \in Q$ and $\xi \in T_U\ui k$.
\end{lm}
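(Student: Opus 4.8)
The plan is to prove both claims simultaneously by induction on the structure of the input tree $\xi \in T_U\ui k$, tracking the number of output trees reachable from $q(\xi)$ for \emph{every} state $q$ at once. The key observation is that a single derivation step from $q(\xi)$ with $\xi = a(\zeta_1,\ldots,\zeta_l)$ must begin by applying a rule $\rho = q(\varphi(x_1,\ldots,x_l)) \to u$ whose guard $\varphi$ is satisfied by $a$, i.e. $a \in \seml \varphi \semr$; after this first step the outstanding subproblems are exactly the $q'$-transformations of the appropriate subtrees $\zeta_i$ occurring at the leaves $q'(x_i)$ of $u$. Thus $\M_q(\xi)$ decomposes as a union, over all applicable rules $\rho$, of the sets of completed trees obtained by independently completing each leaf $q'(x_i)$ via $\M_{q'}(\zeta_i)$ and substituting back into $u(a)$.

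\textbf{Determinism.} First I would handle the deterministic case. In the base case $\xi = a \in U$ (a leaf, $l=0$), any applicable rule has left-hand side state-rank pair $(q,0)$, and by determinism at most one such rule has a guard satisfied by $a$ (their guard-supports are pairwise disjoint); its right-hand side $u$ contains no leaves of the form $q'(x_i)$, so $u(a)$ is the unique possible output and $|\M_q(a)| \le 1$. For the induction step with $\xi = a(\zeta_1,\ldots,\zeta_l)$, determinism again guarantees at most one rule $\rho$ with $\lhs(\rho) = (q,l)$ and $a \in \seml \grd(\rho)\semr$. If none exists, $\M_q(\xi) = \emptyset$. If one exists, every output of $q(\xi)$ is obtained from the fixed tree $u(a)$ by replacing each leaf $q'(x_i)$ with some element of $\M_{q'}(\zeta_i)$; by the induction hypothesis each such set has at most one element, so the substituted tree is uniquely determined, whence $|\M_q(\xi)| \le 1$. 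Here linearity plays no role: even if $x_i$ occurs several times in $u$, determinism of the subtransformation forces the same value at each occurrence, so no branching arises.

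\textbf{Totality.} For the totality claim the argument is dual. In the base case $\xi = a$, totality applied to the pair $(q,0)$ gives $\seml \bigvee_{\rho : \lhs(\rho)=(q,0)} \grd(\rho)\semr = U$, so some rule is applicable to $a$, producing at least one output, hence $|\M_q(a)| \ge 1$. In the induction step $\xi = a(\zeta_1,\ldots,\zeta_l)$, totality at $(q,l)$ again guarantees at least one applicable rule $\rho$; I take its right-hand side $u(a)$ and, by the induction hypothesis, each leaf $q'(x_i)$ can be replaced by at least one element of $\M_{q'}(\zeta_i)$. Substituting one such choice at every leaf yields at least one completed tree in $T_V\ui k$, so $|\M_q(\xi)| \ge 1$.

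The only point demanding care — and the main obstacle — is making the decomposition of $\M_q(\xi)$ into a first step followed by independent completions fully rigorous, i.e. establishing that $q(a(\zeta_1,\ldots,\zeta_l)) \Rightarrow_{\M}^* \eta$ holds if and only if there is an applicable rule $\rho$ with right-hand side $u$ and completions $q'(\zeta_i) \Rightarrow_{\M}^* \eta_{q'(x_i)}$ at the leaves such that $\eta$ is $u(a)$ with each leaf substituted accordingly. This requires commuting and collecting the derivation steps occurring in the disjoint leaf-positions of $u'$, which is the standard confluence-of-independent-redexes reasoning for top-down derivation relations; I would phrase it as a separate routine lemma (or cite the analogous fact for classical top-down tree transducers from \cite{gecste84}) rather than re-derive it in detail. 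Once this decomposition is granted, both cardinality bounds follow by the immediate inductions sketched above.
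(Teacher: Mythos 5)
Your proof is correct, and it in fact supplies more than the paper does: the paper states Lemma \ref{dt-lemma-stt} with no proof at all, treating it as an immediate consequence of the definitions of deterministic and total s$k$-tt. Your structural induction --- disjointness of guard supports giving at most one applicable rule per label in the deterministic case, totality giving at least one in the total case, plus the standard decomposition of a top-down derivation into a first rule application followed by independent completions at the leaf positions --- is exactly the routine argument the authors implicitly rely on, and your handling of repeated occurrences of $q'(x_i)$ (where the induction hypothesis, not linearity, forces equal values) is the right observation.
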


\begin{ex}\rm\label{stt-example}\rm We consider the s2-tt ${\cal M} = (Q,U,\Phi,U,q,R)$
  with $Q = \{q\}$, $U = \mathbb{N}$, and $\Phi = \{\mathrm{div}(2),
  \mathrm{div}(3)\}$ with $\seml \mathrm{div}(i) \semr$ is the set of
  all non-negative integers which are divisible by $i$. Moreover, $R$
  has the following rules:
\[
\begin{array}{lrcl}
\rho_1: & q\big([\mathrm{div}(2)\wedge \mathrm{div}(3)](x_1,x_2)\big) & \rightarrow &
[:6](q(x_1),q(x_1))\\
\rho_2: & q(\top(x_1,x_2)) & \rightarrow & [\mathrm{id}](q(x_1),q(x_2))\\
\rho_3: & q(\top) & \rightarrow & [\mathrm{id}]\\
\end{array}
\]
where the unary functions $[:6]$ and $\mathrm{id}$ perform division by
6 and the identity, respectively. Note that ${\cal M}$ is
not deterministic, because $\lhs(\rho_1) =
\lhs(\rho_2) = (q,2)$ and 
\[
\seml \grd(\rho_1) \semr \cap \seml \grd(\rho_1)\semr = 
\seml \mathrm{div}(2)\wedge \mathrm{div}(3) \semr \cap \seml \top
\semr = \seml \mathrm{div}(2)\wedge \mathrm{div}(3) \semr \not= \emptyset\enspace.
\] 

Also note that ${\cal M}$ is not total, because for $l=1$ we have:

\[
\seml \bigvee_{\substack{\rho\in R \\ \lhs(\rho)=(q,1)}}
\grd(\rho)\semr = 
\seml \bigvee_{\rho\in \emptyset} \grd(\rho)\semr = 
\seml \bot \semr = \emptyset \not= U\enspace.
\]
Also ${\cal M}$ is neither linear nor nondeleting, because of rule $\rho_1$.

On the input tree $\xi = 6(12(4,6),7)$ the s2-tt ${\cal M}$ can perform
  the following derivation:

\[
\begin{array}{cl}
    & q(6(12(4,6),7)) \\
\Rightarrow & 1(q(12(4,6)), q(12(4,6))) \\
\Rightarrow^2 & 1(2(q(4),q(4)), 12(q(4),q(6))) \\
\Rightarrow^4 & 1(2(4,4), 12(4,6))
\end{array}
\]

The s2-tt ${\cal M}$ transforms a binary tree $\xi$ in the following
way. At each position $w$, ${\cal M}$ can reproduce the label $\xi(w)$ of this
position and recursively transforms the subtrees (using rules $\rho_2$
and $\rho_3$). If $\xi(w)$ is
divisible by $6$, then, additionally (using rule $(\rho_1)$), ${\cal
  M}$ can divide it by 6, delete the second subtree, and process two copies of the first
subtree independently.  
\end{ex}

Next we show that stt generalize (classical) top-down tree transducers. For every $b\in V$, we denote
by $c_b$ the constant function in $\F(U \rightarrow V)$ defined by $c_b(a)=b$ for every $a\in U$.
An s$k$-tt  $\M = (Q,U,\Phi,V,q_0,R)$ is \emph{alphabetic} if 
\begin{itemize}
\item $U$ and $V$ are ranked alphabets such that $\maxrk(U), \maxrk(V)  \le k$,
\item $\Phi = \{\varphi^\sigma \mid \sigma \in U\}$ where $\seml \varphi^\sigma\semr = \{\sigma\}$,

\item each  rule in
  $R$ has the form $q(\varphi^\sigma(x_1,\ldots,x_l)){\rightarrow}u$, where
\begin{enumerate}
\item[-] $l = \rk_U(\sigma)$, and
\item[-] for every $w\in (\pos(u) \setminus \pos_{Q(X_l)}(u))$ we have $u(w)=c_b$  and $\rk_u(w)= \rk_V(b)$ for some $b\in V$.
\end{enumerate}
\end{itemize}
We call predicates of the form $\varphi^\sigma$ {\em alphabetic}.

Let $\M = (Q,\Sigma,\Phi,\Delta, q_0,R)$ be an alphabetic s$k$-tt with rank mappings $\rk_{\Sigma}$ and $\rk_{\Delta}$. Let $\N = (Q,\Sigma,\Delta,q_0.R')$ be a top-down tree transducer with the same rank mappings. Then we say that $\M$ and $\N$ are \emph{related} if 
\[
q(\varphi^\sigma(x_1,\ldots,x_l)) \to u \in R \text{ iff }  q(\sigma(x_1,\ldots,x_l)) \to u' \in R'\enspace,
\]
where we obtain $u'$ from $u$ by replacing $c_\delta$ by $\delta$ for every $\delta \in \Delta$.

For every alphabetic s$k$-tt $\M$ we can construct a related top-down tree transducer $\N$  and vice versa. Moreover, it is easy to see that if $\M$ and $\N$ are related, then the tree transformations computed by $\M$ any by $\N$ are the same. Hence we obtain the following result.

\begin{ob}\rm The class of tree transformations computed by alphabetic stt is the same as the class of top-down tree transformations.
\end{ob}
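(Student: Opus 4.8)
The plan is to establish both inclusions by reducing each to the single auxiliary claim that \emph{related} transducers compute the same tree transformation. For the first inclusion, given an alphabetic s$k$-tt $\M = (Q,\Sigma,\Phi,\Delta,q_0,R)$, I would produce the related top-down tree transducer $\N = (Q,\Sigma,\Delta,q_0,R')$ directly from the definition of relatedness: translate each rule $q(\varphi^\sigma(x_1,\ldots,x_l)) \to u$ of $R$ into the rule $q(\sigma(x_1,\ldots,x_l)) \to u'$ of $R'$, where $u'$ is obtained from $u$ by replacing every occurrence of a constant function $c_\delta$ by the symbol $\delta$. This is well-defined precisely because the alphabetic conditions guarantee $l = \rk_\Sigma(\sigma)$ and that every position $w \in \pos(u) \setminus \pos_{Q(X_l)}(u)$ carries such a $c_\delta$ with $\rk_u(w) = \rk_\Delta(\delta)$, so that $u'$ is a legitimate top-down right-hand side over $\Delta$. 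For the converse inclusion I would run this translation backwards: from a top-down transducer $\N$ over $(\Sigma,\Delta)$ I build the alphabetic s$k$-tt with $\Phi = \{\varphi^\sigma \mid \sigma \in \Sigma\}$, $\seml \varphi^\sigma \semr = \{\sigma\}$, and each rule $q(\sigma(x_1,\ldots,x_l)) \to u'$ replaced by $q(\varphi^\sigma(x_1,\ldots,x_l)) \to u$, where $u$ arises from $u'$ by replacing each output symbol $\delta$ with $c_\delta$. Taking $k = \max\{\maxrk(\Sigma),\maxrk(\Delta)\}$ suffices to meet the boundedness requirement.

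The heart of the argument is the auxiliary claim, which I would prove by showing that the one-step derivation relations of $\M$ and $\N$ literally coincide as relations on $T_\Delta(Q(T_\Sigma))$. Suppose $\M$ and $\N$ are related and consider a redex $\xi_1|_w = q(a(\zeta_1,\ldots,\zeta_l))$. A rule of $\M$ is applicable at $w$ iff its guard is some $\varphi^\sigma$ with $a \in \seml \varphi^\sigma \semr$; since $\seml \varphi^\sigma \semr = \{\sigma\}$, this forces $a = \sigma$, which is exactly the condition under which the related top-down rule $q(\sigma(x_1,\ldots,x_l)) \to u'$ applies. Moreover, when the rule fires, $\M$ rewrites using $u(a) = u(\sigma)$, and since every function occurring in $u$ is a constant $c_\delta$ with $c_\delta(\sigma) = \delta$, we get $u(\sigma) = u'$. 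Hence the substitution of $p(\zeta_i)$ for each index $p(x_i)$ produces exactly the same result tree as the top-down step. Consequently $\Rightarrow_\M$ and $\Rightarrow_\N$ coincide, so their reflexive-transitive closures coincide as well, and therefore $\M_q = \N_q$ for every $q \in Q$; in particular $\M = \M_{q_0} = \N_{q_0} = \N$.

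Combining the two constructions with the auxiliary claim yields equality of the two transformation classes. The only point that needs care --- and which I expect to be the main (if modest) obstacle --- is verifying that the translations preserve well-formedness of right-hand sides in both directions: one must check that the alphabetic rank constraint $\rk_u(w) = \rk_\Delta(\delta)$ corresponds exactly to the rank-consistency requirement on top-down right-hand sides, and that $k$-boundedness is respected, so that $u \in T_{\F(\Sigma \to \Delta)}^{(k)}(Q(X_l))$ translates to an admissible $u' \in T_\Delta(Q(X_l))$ and conversely. Once this bookkeeping is in place, the evaluation identity $c_\delta(\sigma) = \delta$ does all the real work, and the equality of the derivation relations --- and hence of the computed transformations --- follows immediately.
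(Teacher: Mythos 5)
Your proposal is correct and takes essentially the same approach as the paper: the paper introduces the same notion of \emph{related} transducers, passes between an alphabetic stt and a top-down tree transducer by exactly your rule-by-rule translation ($c_\delta \leftrightarrow \delta$), and asserts without detail that related transducers compute the same transformation. Your argument simply fills in what the paper dismisses as ``easy to see'', namely that $\seml \varphi^\sigma \semr = \{\sigma\}$ together with the evaluation identity $c_\delta(\sigma) = \delta$ makes the one-step derivation relations coincide.
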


Recall that $\iota_U$ is the identity mapping on $U$. Let $\A = (Q,U,\Phi,F,R)$ be an s$k$-ta. We introduce the s$k$-tt $\A_{=} = (Q,U,\Phi,U,F,R_{=})$, where
$$ R_{=} = \{ q(\varphi(x_1,\ldots,x_l)){\rightarrow} \iota_U(q_1(x_{1}), \ldots, q_l(x_{l}))
\mid (q_1\ldots q_l,\varphi,q) \in R \}.$$

We will need the following fact.

\begin{lm}\rm \label{ln-lemma} For every s$k$-recognizable tree language $L$, there is a linear and nondeleting s$k$-tt $\N$ such that $\N = \iota_{L}$.
\end{lm}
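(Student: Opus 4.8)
=== BEGIN PROOF PLAN ===

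The goal is to show that for every s$k$-recognizable tree language $L$ there exists a linear and nondeleting s$k$-tt $\N$ computing the \emph{identity restricted to $L$}, i.e. $\N = \iota_L = \{(\xi,\xi) \mid \xi \in L\}$. The plan is to exploit the s$k$-ta $\A = (Q,U,\Phi,F,R)$ that witnesses $L = L(\A)$, and to turn it into an stt by the construction already previewed just above the statement, namely $\A_{=}$. The key idea is that $\A_{=}$ runs the automaton $\A$ over the input tree, but at each node it copies the label back out unchanged using the identity function $\iota_U$, so that the only nondeterminism lies in which accepting run of $\A$ is chosen, and the output is forced to equal the input whenever such a run exists.

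First I would observe that $\A_{=}$ is by construction linear and nondeleting: each rule $q(\varphi(x_1,\ldots,x_l)) \to \iota_U(q_1(x_1),\ldots,q_l(x_l))$ contains each variable $x_1,\ldots,x_l$ exactly once, so both constraints hold simultaneously. There is a small technical point to address, namely that $\A_{=}$ is defined with $F$ playing the role of the (single) initial state, whereas the definition of an stt requires one initial state $q_0$; I would handle this by adding a fresh state $q_0$ together with the rules $q_0(\varphi(x_1,\ldots,x_l)) \to \iota_U(q_1(x_1),\ldots,q_l(x_l))$ for each $(q_1\ldots q_l,\varphi,q) \in R$ with $q \in F$ (and analogously for the leaf rules), so that $\N = \A_{=}$ started from $q_0$ simulates starting from an arbitrary final state of $\A$.

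The core of the argument is then the claim that for every $q \in Q$ and every $\xi,\zeta \in T_U^{(k)}$ we have $q(\xi) \Rightarrow_{\A_{=}}^* \zeta$ if and only if $\xi = \zeta$ and $\xi \in L(\A,q)$. I would prove both directions by induction on the structure of $\xi$, using the inductive definition of $L(\A,q)$ from conditions (i) and (ii) of the sta semantics. For the base case $\xi = a$, a derivation $q(a) \Rightarrow \zeta$ uses a rule coming from a transition $(\varepsilon,\varphi,q) \in R$ with $a \in \seml \varphi \semr$, and the right-hand side $\iota_U(a) = a$ forces $\zeta = a = \xi$; conversely $a \in L(\A,q)$ supplies exactly such a transition. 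For the inductive step $\xi = a(\xi_1,\ldots,\xi_l)$, any applicable rule comes from $(q_1\ldots q_l,\varphi,q) \in R$ with $a \in \seml \varphi\semr$, the label $a$ is reproduced by $\iota_U(a)=a$, and each subtree $\xi_i$ is processed independently in state $q_i$; the induction hypothesis gives $q_i(\xi_i) \Rightarrow^* \zeta_i$ iff $\zeta_i = \xi_i \in L(\A,q_i)$, and these combine (using linearity and nondeletion, so each subtree is copied to exactly one output position) to yield $\zeta = a(\xi_1,\ldots,\xi_l) = \xi$ iff $\xi \in L(\A,q)$. Finally, quantifying over the fresh initial rules (equivalently, over $q \in F$) gives $\N = \iota_{L(\A)} = \iota_L$.

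The step I expect to demand the most care is the inductive step, specifically verifying that the derivation relation $\Rightarrow_{\A_{=}}$ faithfully decomposes along subtrees: one must check that because each variable occurs exactly once in the right-hand side, the parallel derivations on the $l$ subtrees neither interfere nor are duplicated, so that the composite output is precisely $a(\xi_1,\ldots,\xi_l)$ and not some rearrangement or partial copy. This is where linearity and nondeletion are genuinely used, and it is the only place where the routine bookkeeping of positions in $\Rightarrow_{\A_{=}}^*$ could hide an error; everything else is immediate from the definitions.

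=== END PROOF PLAN ===
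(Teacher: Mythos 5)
Your proposal is correct and takes essentially the same approach as the paper: the paper's entire proof is to set $\N = \A_{=}$ (the construction stated just before the lemma) and call it ``appropriate,'' leaving implicit exactly the induction you spell out, namely that $q(\xi) \Rightarrow_{\A_{=}}^* \zeta$ iff $\zeta = \xi$ and $\xi \in L(\A,q)$. Your extra step of introducing a fresh initial state is a legitimate repair of a detail the paper glosses over, since the tuple $\A_{=} = (Q,U,\Phi,U,F,R_{=})$ literally places the set $F$ where the stt definition requires a single initial state $q_0$.
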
 
\begin{proof} Let $L=L(\A)$ for some s$k$-ta $\A$. Then $\N=\A_{=}$ is appropriate.
\end{proof}

Finally, we want to compare our model with the original one from \cite{veabjo11b}. Each rule of their symbolic tree transducer has either
of the following two forms:
\begin{enumerate}
\item[(a)] $q(\varepsilon) \rightarrow e$ or
\item[(b)] $q(f(x,y_1,\ldots,y_k)) \stackrel{\varphi[x]}{\longrightarrow} u[x,q_1(y_1),\ldots,q_k(y_k)]$
\end{enumerate}
where $\varepsilon$ is the only nullary constructor for trees (more precisely, for the empty tree)  and $f$ is the only non-nullary constructor for trees. Since in our
approach we have neither the empty tree nor the constructor
$\varepsilon$, there are no rules in our definition of symbolic tree
transducers which correspond to rules of type (a). Also the constructor
$\varepsilon$ does not occur in the right-hand side of rules of type
(b). Then, in our approach, a rule of type (b) looks as follows:
\[
q(\varphi(y_1,\ldots,y_k)) \rightarrow \psi(u)
\]
where the transformation $\psi$ is defined inductively on its argument as follows:
\begin{itemize}
\item $\psi(f(p,u_1,\ldots,u_l)) = (\lambda x. p)\big(
  \psi(u_1),\ldots,\psi(u_l)\big)$, and
\item $\psi(q_i(y_i)) = q_i(y_i)$.
\end{itemize}
That is, $\psi$ applies the constructor $f$, replaces an expression
$p$ (in which the variable $x$ occurs) by the unary function $\lambda
x.p$, and recursively calls itself on the subterms $u_1,\ldots,u_l$.

\subsection{Composition results concerning stt}

In \cite{veabjo11b}, among others, composition properties of symbolic
tree transformations are considered. Their main result is Theorem 1
which, in its first statement, says that tree transformations computed by  stt are
closed under composition. For this they give the following  proof: ``The first statement can be shown along the lines of
the proof of compositionality of TOP [15, Theorem 3.39].'' where
``[15]'' is \cite{fulvog98} in the current paper. Unfortunately, the
mentioned proof of \cite{fulvog98} is not applicable, because there
the authors only consider total and deterministic top-down tree
transducers. 

Moreover, also on the  semantics level there is a
deficiency. In Section 4.1 they claim the following:
\begin{quote}$(\dagger)$ For  two arbitrary stt ${\cal M}$ and ${\cal N}$,
 the composition algorithm delivers an stt which computes the
composition ${\cal M} \circ {\cal N}$. 
\end{quote}
However, this is not true, which can be seen as follows. Let us apply
their composition algorithm  to two alphabetic s$k$-tt ${\cal M}$ and ${\cal N}$, then the
resulting s$k$-tt is also alphabetic (by Observation \ref{ob:alph-stt})  and, due to their claim, it
computes ${\cal M} \circ {\cal N}$. Since alphabetic s$k$-tt correspond to top-down tree
transducers it  means that the class of all top-down tree
transformations is closed under
composition.   However, it is not, due to the counter examples given
in \cite[p 267.]{rou70} (cf. also \cite{tha70,eng75}).

  So, the proof of the first statement
is insufficient. We even conjecture that this statement is
wrong, i.e., $\STT^{(k)}$ is not closed under composition.

In this section we prove a weaker version of  claim $(\dagger)$ which
only holds for particular stt ${\cal M}$ and ${\cal N}$, cf. Theorem
\ref{comp-lemma}. In fact, we  generalize the composition theorem
\cite[Thm. 1]{bak79} for top-down tree transducers to symbolic tree transducers.

For this, we use the composition algorithm of \cite{veabjo11b} which results
in the syntactic composition ${\cal M}; {\cal N}$, and we show that in certain cases the stt $\M;\N$ computes the relation $\M\circ \N$. In the following, we recall the composition algorithm of
\cite{veabjo11b} in our formal setting. We note that this composition
algorithm generalizes the (syntactic) composition of top-down tree
transducers as presented in the definition before Theorem 1 of \cite{bak79}.

Let $f\in \F(U\rightarrow V)$ and $v\in T_{\F(V\rightarrow W)}(Y)$. We
denote by $f \circ v$ the tree obtained from $v$ by replacing every occurrence of a function $g \in \F(V\rightarrow W)$ by the function $f \circ g \in \F(U\rightarrow W)$. Of course
$f \circ v \in T_{\F(U\rightarrow W)}(Y)$.

In the following it will be useful to show the occurrences of objects of the form $q(x_i)$ in the right-hand side of rules of an stt explicitly. Therefore sometimes
we write an arbitrary element of  $T_{\F(U \rightarrow V)}\ui k(Q(X_l))$ in the form $u[q_1(x_{i_1}),\ldots,q_m(x_{i_m})]$, where $m\geq 0$, $u \in C_{\F(U \rightarrow V)}\ui k(X_m)$, $q_1,\ldots,q_m \in Q$, and $1\le i_1,\ldots,i_m\le l$.

We define the syntactic composition $\M;\N$ of two s$k$-tt $\M$ and $\N$ by applying $\N$ to the right-hand side of rules of $\M$. However, we can do it only symbolically because such a right-hand side is built up from functions and not from labels. In fact, we define a symbolic version of the derivation relation $\Rightarrow_{\N}$, denoted by $\stackrel{s}{\Rightarrow_{\N}}$ which processes trees over functions. Besides, the rewrite relation $\stackrel{s}{\Rightarrow_{\N}}$ also deals with objects of the form $q(x_i)$ in its input trees. Moreover, we have to collect the Boolean combinations which are encountered during the transformation of a right-hand side.

Formally, let $\M = (Q,U,\Phi_1,V,q_{0},R_1)$ and 
$\N = (P,V,\Phi_2,W,p_{0},R_2)$ be two s$k$-tt and
$\Phi = \Phi_1 \cup \{f \circ \psi \mid f \in \F(U\rightarrow V), \psi \in \Phi_2\}$. First, we define the binary relation $\stackrel{s}{\Rightarrow_{\N}}$ over the set
\[
 \BC\left(\Phi\right)\times
T_\Sigma\Big(P\big(T_\Delta\left(Q\left(X_l\right)\right)\big)\cup (P\times Q)\left(X_l\right)\Big)
\]
where $\Sigma = \F(U\rightarrow W)$ and
$\Delta = \F(U\rightarrow V)$ (cf. Fig. \ref{fig:synt-comp-derivation}).  For every 
\[(\theta,t), (\theta',t') \in \BC\left(\Phi\right)\times
T_\Sigma\Big(P\big(T_\Delta\left(Q\left(X_l\right)\right)\big)\cup (P\times Q)\left(X_l\right)\Big)\] we have 

$(\theta,t) \stackrel{s}{\Rightarrow_{\N}} (\theta',t')$  iff one of the following two conditions hold:

(i) there is a position $w \in \pos(t)$ such that 
\begin{itemize}
\item $t|_w = p(q(x_i))$ for some $p \in P$, $q \in Q$, and $x_i \in X_l$,
\item $t' = t[\langle p,q\rangle(x_i)]_w$, and 
\item $\theta' = \theta$, or
\end{itemize}

(ii) there is a position $w \in \pos(t)$ and a rule $p(\psi(x_1,\ldots,x_l))\rightarrow v$ in $R_2$ such that 
\begin{itemize}
\item $t|_w = p(f(t_1,\ldots,t_l))$ for some $p \in P$, $f \in \F(U\rightarrow V)$,  and $t_1,\ldots,t_l \in T_{\F(U\rightarrow V)}(Q(X_l))$,
\item $t' = t[v']_w$ where $v'$ is obtained from $f \circ v$ by  replacing every $\overline{p}(x_i) \in Q(X_l)$ by $\overline{p}(t_i)$, and
\item $\theta' = \theta \wedge f \circ \psi$. 
\end{itemize}

\begin{figure}
\begin{center}
\includegraphics[
scale=0.5]{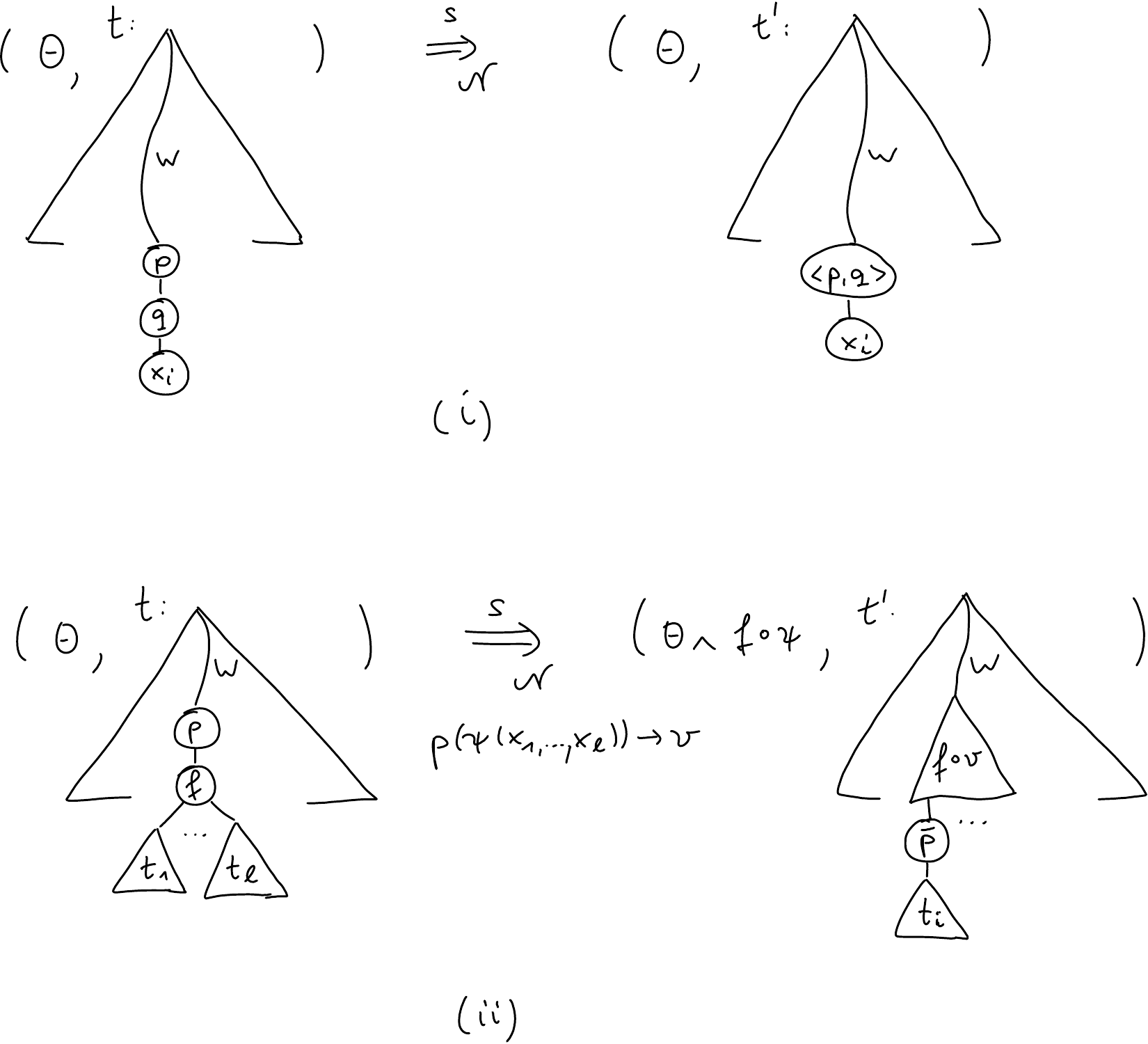}
\caption{\label{fig:synt-comp-derivation} The derivation relation $\stackrel{s}{\Rightarrow_{\N}}$.} 
\end{center}
\end{figure}

The following statement follows from the definition of the relation $\stackrel{s}{\Rightarrow_{\N}}$.

\begin{lm}\rm(Lift lemma.)\label{lift-lemma} If $(\varphi,t) \,
  (\stackrel{s}{\Rightarrow_{\N}})^* \, (\theta,t')$, then, for every $
  a\in \seml \theta\semr$ we have $ a\in \seml\varphi \semr$
and $t(a) \Rightarrow_{\N'}^* t'(a)$, where $\Rightarrow_{\N'}$ is the extension of $\Rightarrow_{\N}$ to the set 
\[
 T_W\Big(P\big(T_V\left(Q\left(X_l\right)\right)\big)\cup (P\times Q)\left(X_l\right)\Big),
\]
which we obtain by adding the rules $p(q(x_i)) \rightarrow \langle
p,q\rangle(x_i)$ to $R_2$ for every $1 \le i \le k$ (cf. p.195 of \cite{bak79}).
\end{lm}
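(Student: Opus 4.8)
The plan is to induct on the number $n$ of steps in the symbolic derivation $(\varphi,t)\,(\stackrel{s}{\Rightarrow_{\N}})^*\,(\theta,t')$, establishing both assertions---that $\seml\theta\semr\subseteq\seml\varphi\semr$ and that $t(a)\Rightarrow_{\N'}^*t'(a)$ for every $a\in\seml\theta\semr$---simultaneously. In the base case $n=0$ we have $(\theta,t')=(\varphi,t)$, so $\seml\theta\semr=\seml\varphi\semr$, and since $t(a)=t'(a)$ the empty $\Rightarrow_{\N'}$-derivation does the job for every $a$.

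For the induction step I would split a derivation of length $n+1$ as $(\varphi,t)\,(\stackrel{s}{\Rightarrow_{\N}})^n\,(\bar\theta,\bar t)\stackrel{s}{\Rightarrow_{\N}}(\theta,t')$ and apply the hypothesis to the prefix. The last step falls under clause (i) or clause (ii) of the definition of $\stackrel{s}{\Rightarrow_{\N}}$. Clause (i) is easy: there $\theta=\bar\theta$ and $\bar t|_w=p(q(x_i))$ with $t'=\bar t[\langle p,q\rangle(x_i)]_w$; since neither $p(q(x_i))$ nor $\langle p,q\rangle(x_i)$ contains a function, evaluation at any $a$ leaves them fixed, so the added rule $p(q(x_i))\rightarrow\langle p,q\rangle(x_i)$ of $\Rightarrow_{\N'}$ gives $\bar t(a)\Rightarrow_{\N'}t'(a)$. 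Prepending the derivation supplied by the hypothesis (and noting $\seml\theta\semr=\seml\bar\theta\semr$) finishes this case.

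The substance lies in clause (ii), where a rule $p(\psi(x_1,\ldots,x_l))\rightarrow v$ of $\N$ is applied symbolically at a position $w$ with $\bar t|_w=p(f(t_1,\ldots,t_l))$, yielding $\theta=\bar\theta\wedge(f\circ\psi)$ and $t'=\bar t[v']_w$ for the symbolic right-hand side $v'$. Because the guard only ever shrinks, $\seml\theta\semr\subseteq\seml\bar\theta\semr$, and the hypothesis delivers $a\in\seml\varphi\semr$ together with $t(a)\Rightarrow_{\N'}^*\bar t(a)$; this already secures the guard assertion. The pivotal point is that, by the very definition of the predicate $f\circ\psi\in\Phi$, the condition $a\in\seml f\circ\psi\semr$ is equivalent to $f(a)\in\seml\psi\semr$. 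Hence after evaluation the node at $w$ reads $p\big(f(a)(t_1(a),\ldots,t_l(a))\big)$ with root symbol $f(a)\in\seml\psi\semr$, so the same rule $p(\psi(x_1,\ldots,x_l))\rightarrow v$ is genuinely applicable there under $\Rightarrow_{\N}$, hence under $\Rightarrow_{\N'}$.

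It remains to check that this one concrete $\N$-step produces exactly $t'(a)$, and I expect this commutation to be the only real obstacle. The heart of it is the identity $(f\circ v)(a)=v(f(a))$, which holds node by node because replacing each $g\in\F(V\rightarrow W)$ by $f\circ g$ and then evaluating at $a$ gives $(f\circ g)(a)=g(f(a))$. One then verifies that pre-composition and evaluation commute with the two substitutions at play---inserting the subtrees $t_i$ for the variables and plugging the result into the context at $w$---so that the evaluated symbolic right-hand side $v'(a)$ and the concrete right-hand side generated by $\N$ at input label $f(a)$ agree on their structure, on each $W$-label, and on each leaf $\overline{p}(x_i)$, which in both cases becomes $\overline{p}(t_i(a))$. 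This yields $\bar t(a)\Rightarrow_{\N'}t'(a)$, and concatenating with the derivation from the hypothesis completes the induction. Everything outside this bookkeeping---the base case, clause (i), and the monotone shrinking of guards---is routine.
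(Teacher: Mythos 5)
Your proof is correct, and it takes the only route available: the paper gives no actual proof of this lemma (it merely states that it ``follows from the definition of the relation $\stackrel{s}{\Rightarrow_{\N}}$''), and your induction on the length of the symbolic derivation is exactly the spelled-out version of that assertion. The three points you isolate are indeed the whole content: guards only shrink along the derivation, clause (i) is matched by the added rules $p(q(x_i)) \rightarrow \langle p,q\rangle(x_i)$, and clause (ii) reduces to $a \in \seml f \circ \psi \semr \iff f(a) \in \seml \psi \semr$ together with the commutation identity $(f \circ v)(a) = v(f(a))$ and its compatibility with the two substitutions.
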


Second, we construct the s$k$-tt $\M;\N = (P \times Q, U, \Phi,W,\langle p_{0},q_{0}\rangle,R)$,
called the {\em syntactic composition of  $\M$ and $\N$}, where 
the set $R$ of rules is defined as follows. If
\begin{equation} \label{rule-1}
q(\varphi(x_1,\ldots,x_l)) \rightarrow u[q_1(x_{i_1}),\ldots,q_m(x_{i_m})]
\end{equation}
is a rule in $R_1$, and for some $p \in P$ and $v \in C^{(k)}_{{\cal
    F}(U \rightarrow W)}(X_n)$ we have 
\begin{align}\label{derivation-2}
&\big(\varphi,p(u[q_1(x_{i_1}),\ldots,q_m(x_{i_m})])\big) \,
(\stackrel{s}{\Rightarrow_{\N}})^* \, \big(\theta,v[\langle p_1,
q_{j_1}\rangle (x_{i_{j_1}}),\ldots,\langle p_n, q_{j_n}\rangle
(x_{i_{j_n}})]\big)\\ 
&\text{ and } \seml \theta \semr \not= \emptyset \notag
\end{align}
then let the rule
\begin{equation}\label{rule-big}
\langle p,q\rangle(\theta(x_1,\ldots,x_l)){\rightarrow} v[\langle p_1, q_{j_1}\rangle (x_{i_{j_1}}),\ldots,\langle p_n, q_{j_n}\rangle (x_{i_{j_n}})]
\end{equation}
be in $R$. Note that
\[\{i_{j_1},\ldots,i_{j_n}\}\subseteq \{i_1,\ldots,i_m\} \subseteq \{1,\ldots,l\}.\]
We also note that syntactic composition preserves the properties
linear, nondeleting, total, and deterministic. For instance, if both
$\M$ and $\N$ are linear, then $\M;\N$ is also linear.

\begin{ob}\rm \label{ob:alph-stt} The syntactic composition of two
  alphabetic s$k$-tt is an alphabetic s$k$-tt.
\end{ob}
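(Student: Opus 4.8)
The plan is to prove Observation \ref{ob:alph-stt} directly by tracing through the syntactic composition construction and verifying that when both input transducers are alphabetic, every structural constraint in the definition of an alphabetic s$k$-tt is preserved. Recall that $\M = (Q,\Sigma,\Phi_1,\Delta,q_0,R_1)$ is alphabetic means $\Sigma$ and $\Delta$ are ranked alphabets, $\Phi_1 = \{\varphi^\sigma \mid \sigma \in \Sigma\}$, and each rule has an alphabetic guard $\varphi^\sigma$ with $l = \rk_\Sigma(\sigma)$ whose right-hand side uses only constant functions $c_b$ with matching ranks; similarly for $\N = (P,\Delta,\Phi_2,\Gamma,p_0,R_2)$. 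I would establish three things about $\M;\N$: that its input and output label sets are ranked alphabets, that its guard set consists of alphabetic predicates, and that each of its rules has the correct alphabetic shape.

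The key observation driving everything is how composition acts on functions. Since $\M$ is alphabetic, every function appearing in a right-hand side of $R_1$ is a constant $c_\delta$ for some $\delta \in \Delta$. For such a constant and any function $g$, the composite $c_\delta \circ g$ is again a constant, namely the constant with value $g(\delta)$. First I would record this: $c_\delta \circ \psi^{\gamma}$, where $\psi^\gamma$ is an alphabetic predicate of $\N$, equals the predicate that is true of the input label $\sigma$ exactly when $\delta = \gamma$ (independently of $\sigma$), so it is equivalent to either $\top$ or $\bot$; and $c_\delta \circ c_{w}$ (a constant function of $\N$) equals the constant function $c_w$. This shows that the enlarged label set $\Phi = \Phi_1 \cup \{f \circ \psi\}$ and every function produced during $\stackrel{s}{\Rightarrow_\N}$ stays within constant functions valued in $\Gamma$, so the output labels of $\M;\N$ form the ranked alphabet $\Gamma$. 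The input label structure of $\M;\N$ is unchanged ($\Sigma$ with alphabetic predicates), so its guards $\theta$ are Boolean combinations of alphabetic predicates.

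Next I would trace a single rule-generating derivation \eqref{derivation-2}. Starting from $p(u[q_1(x_{i_1}),\ldots,q_m(x_{i_m})])$ where $u$ is built from constants $c_\delta$, each $\stackrel{s}{\Rightarrow_\N}$ step of type (ii) fires a rule $p(\psi^\gamma(x_1,\ldots,x_{l'})) \to v$ of $\N$ at a node carrying $c_\delta$; by alphabeticity of $\N$, this step is enabled only when $\delta = \gamma$ (so the accumulated guard conjunct $c_\delta \circ \psi^\gamma$ is equivalent to $\top$ or contributes $\bot$), and it replaces the node by $c_\delta \circ v$, which is again a tree of constants over $\Gamma$ of the correct ranks since $v$ uses constants of $\N$. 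The crucial rank-matching point is that the $\stackrel{s}{\Rightarrow_\N}$ step is applicable only when the rank of $v$'s output constructor matches, and alphabeticity of $\N$ forces $\rk_v(\varepsilon) = \rk_\Gamma(w)$ for the value $w$. I would argue that the accumulated guard $\theta$ is, after removing the constant conjuncts equivalent to $\top$, again the single alphabetic predicate $\varphi^\sigma = \grd(\rho)$ of the originating $\M$-rule \eqref{rule-1}, so $\theta$ is (equivalent to) an alphabetic predicate with $l = \rk_\Sigma(\sigma)$, and the resulting right-hand side in \eqref{rule-big} has only $\Gamma$-constants at non-state positions with matching ranks. This is exactly the alphabetic shape.

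The main obstacle I anticipate is bookkeeping the guards precisely: the definition accumulates $\theta' = \theta \wedge (f \circ \psi)$ at every type-(ii) step, so $\theta$ is literally a conjunction of the original $\varphi^\sigma$ with several constant-valued predicates $c_\delta \circ \psi^\gamma$, none of which is syntactically alphabetic. I would handle this by showing each such conjunct is semantically $\top$ (when the firing was legitimate, i.e. $\delta = \gamma$) or forces $\seml\theta\semr = \emptyset$ (when $\delta \neq \gamma$, which is excluded by the side condition $\seml\theta\semr \neq \emptyset$ in \eqref{derivation-2}), so $\seml\theta\semr = \seml\varphi^\sigma\semr = \{\sigma\}$, hence $\theta$ is semantically an alphabetic predicate. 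Since the definition of alphabetic only constrains $\seml\varphi^\sigma\semr$ (namely $\seml\varphi^\sigma\semr = \{\sigma\}$), working up to semantic equivalence is legitimate, and I would make this identification explicit. With the guard issue resolved, the remaining verifications are routine structural inductions, and the observation follows.
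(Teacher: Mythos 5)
Your proof is correct and takes essentially the same route as the paper's: both hinge on the observation that composing a constant function $c_\delta$ with a predicate of $\N$ yields a constant predicate, so every guard conjunct accumulated along $\stackrel{s}{\Rightarrow_{\N}}$ is either semantically $\top$ (leaving the guard equal to the original alphabetic $\varphi^\sigma$) or makes $\seml\theta\semr=\emptyset$, which the side condition in \eqref{derivation-2} excludes, while the right-hand side built from $c_\delta \circ c_w = c_w$ consists of constants over $W$. Your worry about syntactic versus semantic guards dissolves in the paper's formalism, where predicates are mappings $U \to \{0,1\}$, so $\theta \wedge \top$ is literally equal to $\theta$; your explicit rank bookkeeping is a detail the paper leaves to ``direct inspection.''
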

\begin{proof} Let us assume that ${\cal M}$ and ${\cal N}$ are
alphabetic. Then $\varphi$ in \eqref{derivation-2} is an alphabetic
predicate. Moreover, we observe that if in (i) of the definition of
$\stackrel{s}{\Rightarrow}_{\cal N}$, the predicate $\theta$ is alphabetic, then also
$\theta'$ is alphabetic; moreover, if  in (ii) of this definition 
$\varphi$ and $\psi$ are
alphabetic and $f$ is a constant function, then either $\seml \theta'
\semr = \emptyset$ or $\theta' = \theta$. Therefore, $\theta$ in
\eqref{rule-big} is alphabetic. Moreover, by direct inspection of (ii)
of the definition of $\stackrel{s}{\Rightarrow}_{\cal N}$ we can see
that $v$ in \eqref{derivation-2} consists of constant functions over
$W$. 
\end{proof}

Now we are able to prove our main composition result, which is in fact the generalization of \cite[Thm. 1]{bak79}.

\begin{theo} \label{comp-lemma} Let $\M$ and $\N$ be s$k$-tt for which the following two conditions hold:
\begin{enumerate}
\item[(a)]  $\M$ is deterministic or $\N$ is linear, and
\item[(b)] $\M$ is total or $\N$ is  nondeleting.
\end{enumerate}
Then the s$k$-tt $\M ; \N$ induces $\M\circ \N$.
\end{theo}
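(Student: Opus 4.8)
The plan is to show that the tree transformations $\M;\N$ and $\M\circ\N$ coincide by proving the two inclusions separately, in each case strengthening the statement to range over all composed states $\langle p,q\rangle$ (not just $\langle p_{0},q_{0}\rangle$) so that an induction goes through. Concretely I would aim at the equivalence that for all $p\in P$, $q\in Q$, $\xi\in T_U^{(k)}$ and $\zeta\in T_W^{(k)}$,
\[
\langle p,q\rangle(\xi)\Rightarrow_{\M;\N}^*\zeta \quad\Longleftrightarrow\quad \exists\,\eta\in T_V^{(k)}:\ q(\xi)\Rightarrow_{\M}^*\eta \ \text{ and }\ p(\eta)\Rightarrow_{\N}^*\zeta,
\]
from which the theorem follows by taking $p=p_{0}$ and $q=q_{0}$. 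I expect the inclusion ``$\Leftarrow$'' ($\M\circ\N\subseteq\M;\N$) to hold \emph{without} conditions (a) and (b), while ``$\Rightarrow$'' ($\M;\N\subseteq\M\circ\N$) is exactly where (a) and (b) are needed.

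For ``$\Leftarrow$'' I would induct on $\xi$. Writing $\xi=a(\zeta_1,\dots,\zeta_l)$, let $q(\varphi(x_1,\dots,x_l))\to u[q_1(x_{i_1}),\dots,q_m(x_{i_m})]$ as in \eqref{rule-1} be the $\M$-rule applied at the root, so $a\in\seml\varphi\semr$ and $\eta=u(a)[\eta_1,\dots,\eta_m]$ with $q_j(\zeta_{i_j})\Rightarrow_\M^*\eta_j$. The derivation $p(\eta)\Rightarrow_{\N}^*\zeta$ first rewrites the $u(a)$-part before reaching the $\M$-images $\eta_j$; a converse of the Lift Lemma shows that this initial block of $\N$-steps is the instantiation at $a$ of a symbolic derivation as in \eqref{derivation-2}, whose guard $\theta$ therefore satisfies $a\in\seml\theta\semr$. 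Hence $\seml\theta\semr\neq\emptyset$, so the rule \eqref{rule-big} exists and is applicable at $a$; one $\M;\N$-step realises that block and exposes the recursive tasks $\langle p',q'\rangle(\zeta_{i})$, each discharged by the induction hypothesis. Here copied tasks are simply fed the same $\M$-subderivation and deleted tasks are dropped, so no restriction on $\M$ or $\N$ is required.

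For ``$\Rightarrow$'' I would induct on the length of the $\M;\N$-derivation and invoke Lemma \ref{lift-lemma} directly: the first step applies a rule \eqref{rule-big} built from an $\M$-rule \eqref{rule-1} and a symbolic derivation \eqref{derivation-2}, and the Lift Lemma turns this, at the actual label $a\in\seml\theta\semr$, into $a\in\seml\varphi\semr$ together with a concrete $\N'$-derivation from $p(u(a))$ to the marked instantiation of the right-hand side of \eqref{rule-big}. Thus each $\M;\N$-step decomposes into one $\M$-step producing $u(a)$ followed by a block of $\N$-steps, and peeling these off yields candidate $\M$- and $\N$-derivations. The main obstacle, and the sole place where (a) and (b) enter, is assembling a single consistent intermediate tree $\eta$: if $\N$ duplicates a recursive call $\langle p',q'\rangle(x_i)$, the two copies transform the same input subtree $\zeta_i$ independently and could pick different $\M$-images unless $\M$ is deterministic (and if instead $\N$ is linear no duplication occurs), which is precisely (a); and if $\N$ deletes a call, then $\M$ is never run on that subtree, so building a full $\eta$ forces us to supply an $\M$-image there, which totality of $\M$ provides (and if instead $\N$ is nondeleting nothing is dropped), which is precisely (b). Verifying that these local repairs glue into one global $\eta$ witnessing both $q(\xi)\Rightarrow_\M^*\eta$ and $p(\eta)\Rightarrow_\N^*\zeta$ is the technical heart of the proof.
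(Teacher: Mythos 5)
Your proposal matches the paper's proof in all essentials: the same strengthened statement quantified over all state pairs $\langle p,q\rangle$, proved as an equivalence by induction, with the direction $\M\circ\N\subseteq\M;\N$ noted to hold without conditions (a) and (b) (the paper likewise leaves it as straightforward), and the converse direction handled via Lemma \ref{lift-lemma} together with exactly the paper's use of the hypotheses---determinism of $\M$ (or linearity of $\N$) to make the duplicated intermediate subtrees $\overline{\eta}_\lambda$ well-defined, and totality of $\M$ (or nondeletion of $\N$) to supply images for deleted calls when assembling $\eta=u(a)[\overline{\eta}_1,\ldots,\overline{\eta}_m]$. The only deviation is your choice to induct on derivation length rather than on $\xi$ in one direction, which is an immaterial variation.
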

\begin{proof} We prove that, for every $\xi \in T_U\ui k$, $p\in P$, $q\in Q$, and $\zeta \in T_W\ui k$, we have
\begin{equation}\label{one-derivation}
\langle p,q\rangle (\xi) \Rightarrow^*_{\M;\N} \zeta
\end{equation}
if and only if 
\begin{equation}\label{two-derivations}
\text{there exists an } \eta \in T_V\ui k \text{ such that }
q(\xi) \Rightarrow^*_{\M} \eta \text{ and } p(\eta) \Rightarrow^*_{\N} \zeta.
\end{equation}
The proof can be performed by induction on $\xi$. The proof of the
implication (\ref{two-derivations}) $\Rightarrow$
(\ref{one-derivation}) is straightforward, hence we leave it. We note
that (as in \cite[Thm.1]{bak79}) we need neither condition (a) nor (b) for the proof of this direction.

To prove  that (\ref{one-derivation}) $\Rightarrow$ (\ref{two-derivations}),
let us assume that (\ref{one-derivation}) holds and that $\xi=a(\xi_1,\ldots,\xi_l)$ for some $a\in U$, $0\le l \le k$, and $\xi_1,\ldots,\xi_l\in T_U\ui k$.

Let us assume that we applied the rule (\ref{rule-big}) in the first step of (\ref{one-derivation}).
Then  (\ref{one-derivation}) can be written as
\begin{align*}
\langle p,q\rangle (a(\xi_1,\ldots,\xi_l)) & \Rightarrow_{\M;\N} v(a)[\langle p_1, q_{j_1}\rangle (\xi_{i_{j_1}}),\ldots,\langle p_n, q_{j_n}\rangle (\xi_{i_{j_n}})] \\
    &   \Rightarrow^*_{\M;\N} v(a)[\zeta_1,\ldots,\zeta_n]
\end{align*}
for some $\zeta_1,\ldots,\zeta_n\in T_W\ui k$, where $a\in \seml\theta\semr$ and $\zeta=v(a)[\zeta_1,\ldots,\zeta_n]$. Hence,
\[
\langle p_1, q_{j_1}\rangle (\xi_{i_{j_1}}) \Rightarrow^*_{\M;\N}
\zeta_1 \; ,\ldots, \;
\langle p_n, q_{j_n}\rangle (\xi_{i_{j_n}}) \Rightarrow^*_{\M;\N} \zeta_n,
\]
and thus, by the induction hypothesis, there are $\eta_1,\ldots,\eta_n \in T_V\ui k$ such that
\begin{equation}\label{ind-hypothesis}
q_{j_1}(\xi_{i_{j_1}}) \Rightarrow^*_{\M} \eta_1 \text{ and } p_1(\eta_1) \Rightarrow^*_{\N}\zeta_1,\ldots,
q_{j_n}(\xi_{i_{j_n}}) \Rightarrow^*_{\M} \eta_n \text{ and } p_n(\eta_n) \Rightarrow^*_{\N}\zeta_n.
\end{equation}

Since rule (\ref{rule-big}) is in $R$, there is a rule of then form (\ref{rule-1}) in $R_1$ such that the derivation (\ref{derivation-2}) holds. Hence, by Lemma \ref{lift-lemma}, $a\in \seml\varphi \semr$ and
\begin{equation}\label{from-derivation-2}
p(u(a)[q_1(x_{i_1}),\ldots,q_m(x_{i_m})]) \,
(\stackrel{s}{\Rightarrow_{\N}})^* \, v(a)[\langle p_1, q_{j_1}\rangle (x_{i_{j_1}}),\ldots,\langle p_n, q_{j_n}\rangle (x_{i_{j_n}})].
\end{equation}
Now we define the tree $\eta$. For this, let $1\le \lambda \le m$. If $\lambda = j_\alpha$ for some
$1\le \alpha \le n$, then we define $\overline{\eta}_\lambda = \eta_\alpha$. This $\overline{\eta}_\lambda$ is well-defined, which can be seen as follows. Assume that $j_\alpha=j_\beta$
for some $1\le \beta\neq \alpha \le n$.  Then $\N$ is not linear, and thus by condition (a)  $\M$ is deterministic, which implies $\eta_\alpha=\eta_\beta$.
Note that by (\ref{ind-hypothesis})
\[q_\lambda(\xi_{i_\lambda})= q_{j_\alpha}(\xi_{i_{j_\alpha}})\Rightarrow_{\M}^* \eta_\alpha=\overline{\eta}_\lambda.\]
If there is no $\alpha$ with $\lambda = j_\alpha$, then $\N$ is deleting and thus by condition (b) $\M$ is total.
Hence, there is a tree $\overline{\eta}_\lambda\in T_V\ui k$ such that $q_\lambda(\xi_{i_\lambda})
\Rightarrow_{\M}^* \overline{\eta}_\lambda$. 

Let $\eta=u(a)[\overline{\eta}_1,\ldots,\overline{\eta}_m]$. Since the rule (\ref{rule-1}) is in $R_1$ and $a\in \seml\varphi \semr$, we have
\[
q(a(\xi_1,\ldots,\xi_l)) \Rightarrow_{\M} u(a)[q_1(\xi_{i_1}),\ldots,q_m(\xi_{i_m})] \Rightarrow^*_{\M}u(a)[\overline{\eta}_1,\ldots,\overline{\eta}_m].
\]
Moreover by an obvious modification of (\ref{from-derivation-2}) and by (\ref{ind-hypothesis})
\begin{align*}
p(u(a)[\overline{\eta}_1,\ldots,\overline{\eta}_m])& \Rightarrow^*_{\N} v(a)[p_1(\overline{\eta}_{j_1}),\dots,p_1(\overline{\eta}_{j_n})] = \\
v(a)[p_1(\eta_1),\dots,p_1(\eta_n)] & \Rightarrow^*_{\N}v(a)[\zeta_1,\ldots,\zeta_n].
\end{align*}
\end{proof}

Due to Observation \ref{ob:alph-stt} this theorem generalizes \cite[Thm.1]{bak79}.

As an application of the above theorem, we can show that both the class of tree transformations computed by total and deterministic s$k$-tt and the one computed by linear and nondeleting s$k$-tt are closed under composition.

\begin{cor} \rm
\begin{tabular}[t]{lrcl}
(a) & $\tdSTT\ui k\circ \tdSTT\ui k$ & = & $\tdSTT\ui k$ \\
(b) & $\lnSTT\ui k\circ \lnSTT\ui k$ & = & $\lnSTT\ui k$. 
\end{tabular}
\end{cor}
\begin{proof} We prove only (a) because the proof of (b) is similar. The inclusion from left to right can be seen as follows.
Let $\M$ and $\N$ be total and deterministic s$k$-tt. The s$k$-tt $\M;\N$ is also 
total and deterministic and, by Theorem \ref{comp-lemma}, for the computed tree transformations
$\M;\N=\M\circ \N$ holds. The other inclusion follows from the facts
that (i) any tree transformation $\tau \subseteq T_U \ui k\times T_V
\ui k$ can be decomposed as $\tau \circ \iota_{T_V \ui k}$ and (ii)~$\iota_{T_V \ui k}$ can be computed by a total and deterministic s$k$-tt.
\end{proof}

\section{Forward and backward application of stt}

In this section we consider forward and backward application of stt to s-recognizable tree languages.
In particular, we consider the domain and the range of tree transformations computed by stt. 
Finally, we apply these results to the problem of (inverse) type checking. 

\subsection{Application of stt}

We begin with the following result.

\begin{theo} \label{dom-theo}$\dom(\STT\ui k) = \REC\ui k$.
\end{theo}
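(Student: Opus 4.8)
The statement is an equality of two classes of $U$-tree languages, so I would prove the two inclusions separately. The inclusion $\REC\ui k \subseteq \dom(\STT\ui k)$ is immediate from Lemma~\ref{ln-lemma}: given an s$k$-recognizable $L$, that lemma supplies a (linear and nondeleting) s$k$-tt $\N$ with $\N = \iota_L$, and $\dom(\iota_L) = L$; hence $L \in \dom(\STT\ui k)$. So the real content is the converse inclusion $\dom(\STT\ui k) \subseteq \REC\ui k$, i.e.\ that the domain of every s$k$-tt is s$k$-recognizable.

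For this I would fix an s$k$-tt $\M = (Q,U,\Phi,V,q_0,R)$ and build a bottom-up sta by a power-set (subset) construction, mimicking the classical proof that the domain of a top-down tree transducer is recognizable. The key is a recursive characterization of the domains $\dom(\M_q)$: for $\xi = a(\xi_1,\ldots,\xi_l)$ one has $\xi \in \dom(\M_q)$ if and only if there is a rule $q(\varphi(x_1,\ldots,x_l)) \to u$ in $R$ with $a \in \seml\varphi\semr$ such that for every object $q'(x_i)$ occurring in $u$ we have $\xi_i \in \dom(\M_{q'})$. This holds because applying such a rule turns $q(\xi)$ into a tree over $V$ whose only remaining $\M$-redexes are the subterms $q'(\xi_i)$, and these reduce to output trees independently; so the existence of \emph{some} output for $q(\xi)$ reduces to the existence of some output for each $q'(\xi_i)$.

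I would then define an sta $\A = (\pset(Q), U, \Phi, F, R_\A)$ whose bottom-up behaviour sends a tree $\xi$ to the state $S = \{q \in Q \mid \xi \in \dom(\M_q)\}$. Concretely, for fixed child-states $S_1,\ldots,S_l$ set $\chi_q = \bigvee \{\grd(\rho) \mid \lhs(\rho) = (q,l) \text{ and } q' \in S_i \text{ for every } q'(x_i) \text{ in } \rhs(\rho)\}$, and for each target $S \subseteq Q$ include the transition $(S_1 \ldots S_l, \psi_{S}, S)$ with guard $\psi_S = \bigwedge_{q \in S}\chi_q \wedge \bigwedge_{q \in Q \setminus S} \neg\chi_q$; the final states are $F = \{S \mid q_0 \in S\}$. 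Each guard lies in $\BC(\Phi)$, so $\A$ is a genuine sta, and for fixed $(S_1,\ldots,S_l)$ the predicates $(\psi_S)_S$ partition $U$, making $\A$ bottom-up deterministic and complete.

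The main work—and the only real obstacle—is the correctness proof: by induction on $\xi$ one shows $\xi \in L(\A,S)$ iff $S = \{q \mid \xi \in \dom(\M_q)\}$, the induction step being precisely the recursive characterization above combined with the definition of $\psi_S$ (the leaf case $l=0$ is the base case). Granting this, $L(\A) = \bigcup_{S \ni q_0} L(\A,S) = \{\xi \mid q_0 \in \{q : \xi \in \dom(\M_q)\}\} = \dom(\M_{q_0}) = \dom(\M)$, so $\dom(\M)$ is s$k$-recognizable. An alternative that avoids the explicit subset construction is to alphabetize the guards of $\M$ into a ranked alphabet $\Sigma$ of symbols $[\varphi,l]$ of rank $l$ with the relabeling $\tau([\varphi,l]) = \seml\varphi\semr$, observe that $\dom(\M) = \tau(D')$ where $D'$ is the (classically recognizable) domain of the associated top-down tree transducer over $\Sigma$, and conclude by Theorem~\ref{th:char}; but the direct construction seems cleanest.
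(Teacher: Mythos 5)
Your main argument is correct, and its core is the same as the paper's: a power-set construction over $Q$ driven by the recursive characterization of the sets $\dom(\M_q)$, with guards taken as Boolean combinations of the guards of $\M$. The difference is packaging. The paper builds a symbolic regular tree grammar $\G = (\pset(Q),U,\Phi,\{q_0\},R')$ whose rules conjoin the guards of one chosen $\M$-rule per state of the nonterminal $P$ (plus rules $\emptyset \rightarrow \top(\emptyset,\ldots,\emptyset)$), proves the nondeterministic, one-sided invariant ``$P \Rightarrow_\G^* \xi$ iff $\xi \in \dom(\M_p)$ for every $p \in P$'', and then invokes Theorem~\ref{th:rec=reg} to pass from s$k$-regular to s$k$-recognizable. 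You instead construct an sta directly and determinize on the fly: your guards $\psi_S$ carry the negative conjuncts $\neg\chi_q$, so your invariant is the exact equality $S = \{q \mid \xi \in \dom(\M_q)\}$. Your route is self-contained (no detour through grammars and Theorem~\ref{th:rec=reg}) and yields, as a bonus, a bottom-up deterministic and complete sta; the paper's route has simpler guards but leans on the rec=reg equivalence. The easy inclusion via Lemma~\ref{ln-lemma} is identical in both.

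However, the ``alternative'' you sketch at the end is wrong as stated. Alphabetizing one symbol $[\varphi,l]$ per guard and claiming $\dom(\M) = \tau(D')$ fails as soon as $\M$ copies, because one input position may have to satisfy two \emph{different} guards in two different copies, whereas a preimage under the relabeling fixes a single symbol (hence a single guard) at that position. Concretely, take $k=2$, $U = V = \nat$, $\Phi=\{\mathrm{div}(2),\mathrm{div}(3)\}$, and the rules
\[
q_0(\top(x_1)) \rightarrow \iota_U(q_1(x_1),q_2(x_1)),\qquad
q_1(\mathrm{div}(2)) \rightarrow \iota_U,\qquad
q_2(\mathrm{div}(3)) \rightarrow \iota_U .
\]
Then $\dom(\M) = \{a(b) \mid a \in \nat,\ b \text{ divisible by } 6\} \neq \emptyset$, but the associated top-down tree transducer over $\Sigma$ with $\Sigma_1=\{[\top,1]\}$ and $\Sigma_0 = \{[\mathrm{div}(2),0],[\mathrm{div}(3),0]\}$ has empty domain: below $[\top,1]$ the state $q_1$ requires the leaf $[\mathrm{div}(2),0]$ while $q_2$ requires $[\mathrm{div}(3),0]$, so $\tau(D') = \emptyset \neq \dom(\M)$. (The repair is to alphabetize by minterms, i.e., by the satisfiable conjunctions of guards and negated guards, so that each symbol determines exactly which rules are enabled --- but that is again a power-set construction, this time over guards.) Since you explicitly prefer the direct construction, this does not affect your proof, but the aside should be deleted or corrected.
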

\begin{proof} First we prove the inclusion from left to right. For
  this, let $\M = (Q,U,\Phi,V,q_0,R)$ be an s$k$-tt. We construct the
  s$k$-rtg $\G = (\PS(Q),U,\Phi,\{q_0\},R')$ such that $\dom(\M) = L(\G)$, where the set
$R'$ of rules is defined as follows.
 
For every $0 \le l \le k$ and  $P \subseteq Q$ with $P = \{p_1,\ldots,p_m\}$ for some
$m \ge 1$, and rules
\begin{equation}
 p_1(\varphi_1(x_1,\ldots,x_l)) \rightarrow u_1,
  \ldots, p_m(\varphi_m(x_1,\ldots,x_l)) \rightarrow u_m \label{rules} 
\end{equation}
in $R$, let $R'$ contain the rule
\[
P\rightarrow (\varphi_1 \wedge \ldots \wedge \varphi_m)(P_1,\ldots,P_l)
\]
where $P_i = \{q \in Q \mid q(x_i) \text{ occurs in } u_j \text{ for
  some $1 \le j \le m$}\}$. Thus, in particular, for every $0\le l\le k$, the rule
\[
\emptyset \rightarrow
\top(\emptyset,\ldots,\emptyset)
\]
with $l$ occurrences of $\emptyset$ in its right-hand side is in $R'$. Hence $\emptyset \Rightarrow_\G^* \xi$ for every $\xi\in T_U^{(k)}$.

We claim that for every $P \subseteq Q$ and $\xi \in T_U$ we have:
\begin{equation}
P \Rightarrow_\G^* \xi\text{  iff $\Big($ for every $p \in P$ there is a $\zeta \in T_V$ such that } p(\xi)
\Rightarrow_\M^* \zeta\Big). \label{equ:dom-G}
\end{equation}
The statement is clear for $P=\emptyset$, therefore we assume that $P = \{p_1,\ldots,p_m\}$ for some
$m \ge 1$. We prove (\ref{equ:dom-G}) by induction on $\xi$.

Let $\xi = a(\xi_1,\ldots,\xi_l) \in T_U^{(k)}$. Then

\

\begin{tabular}{cl}
& $P \Rightarrow_\G a(P_1,\ldots,P_l)\Rightarrow_\G^*
a(\xi_1,\ldots,\xi_l)$ \\[3mm]

iff & there is a rule  $P\rightarrow (\varphi_1 \wedge \ldots \wedge \varphi_m)(P_1,\ldots,P_l)$ in $R'$ with $a\in \big(\bigcap_{j=1}^m\seml \varphi_j\semr\big)$\\[1mm]

& and for every $1\le i\le l$, $P_i \Rightarrow_\G^* \xi_i$ \\[3mm]

iff & there are rules (\ref{rules}) in $R$ with $a\in \big(\bigcap_{j=1}^m\seml \varphi_j\semr\big)$ and\\[1mm]

&  for every $1\le i\le l$ and $q\in P_i$ there is a tree $\zeta_{i,q} \in T_V$ s.t. $q(\xi_i)
\Rightarrow_\M^* \zeta_{i,q}$\\[3mm]

iff & for every $1\le j\le m$  there is a rule $p_j(\varphi_j(x_1,\ldots,x_l)) \rightarrow u_j$ s.t. $a\in \seml \varphi_j\semr$ \\[1mm]

& and  for every occurrence of $q(x_i)$ in $u_j$ $\exists$ a tree $\zeta_{i,q} \in T_V$ s.t. $q(\xi_i)
\Rightarrow_\M^* \zeta_{i,q}$\\[3mm]

iff & for every $1\le j\le m$ there is  a tree $\zeta_{j} \in T_V$ such that $p_j(a(\xi_1,\ldots,\xi_l))
\Rightarrow_\M^* \zeta_{j}$.
\end{tabular}

\

\noindent Statement \eqref{equ:dom-G} with $P=\{q_0\}$ implies  $L(\G) = \dom(\M)$. Hence, by Theorem \ref{th:rec=reg} we obtain that $\dom(\M)$ is s$k$-recognizable.
The other inclusion follows from Lemma \ref{ln-lemma}.
\end{proof}

\begin{ex}\rm We illustrate the construction of the s$k$-rtg $\G$ in the proof
  of Theorem \ref{dom-theo} by an example. 

Let the s2-tt $\M$ contain the rules
\[
\begin{array}{rclrcl}
q_0(\varphi(x_1,x_2)) & \rightarrow & f(p(x_1),q(x_1)) \hspace{10mm} &
\bar{p}(\theta_1) & \rightarrow & h_1\\
p(\psi(x_1)) & \rightarrow & g(\bar{p}(x_1),p'(x_1)) \hspace{10mm} &
p'(\theta_2) & \rightarrow & h_2\\
q(\psi'(x_1)) & \rightarrow & \hat{p}(x_1) \hspace{10mm} &
\hat{p}(\theta_3) & \rightarrow & h_3\\
\end{array}
\]

Then s2-rtg $\G$ contains (among others) the following rules:

\[
\begin{array}{rclrcl}
\{q_0\} &\rightarrow& \varphi\big(\{p,q\}, \emptyset
\big) \hspace{10mm}&
\emptyset&\rightarrow& \top\\
\{p,q\} &\rightarrow& (\psi \wedge
\psi')\big(\{\bar{p},p',\hat{p}\}\big) \hspace{10mm}&
\emptyset&\rightarrow& \top\big(\emptyset \big) \\
\{\bar{p},p',\hat{p}\} &\rightarrow& (\theta_1 \wedge \theta_2
\wedge \theta_3) \hspace{10mm}&
\emptyset&\rightarrow& \top\big(\emptyset,\emptyset\big).
\end{array} 
\]
\end{ex}

Now we can prove that backward application of stt preserve recognizability of tree languages.

\begin{theo} \label{backward-theo}$(\STT\ui k)^{-1}(\REC\ui k) = \REC\ui k$.
\end{theo}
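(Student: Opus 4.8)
The plan is to prove the two inclusions of the claimed identity separately, and to reduce the substantial one to the composition theorem (Theorem~\ref{comp-lemma}) together with the domain characterization (Theorem~\ref{dom-theo}). The guiding idea is to encode the target language $L'$ as a \emph{partial identity}, so that backward application turns into a domain computation for a single (composite) s$k$-tt.

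For the inclusion $\REC\ui k \subseteq (\STT\ui k)^{-1}(\REC\ui k)$ I would simply use the identity transformation. By Observation~\ref{ob:TU-k-rec} the language $T_U\ui k$ is s$k$-recognizable, so Lemma~\ref{ln-lemma} supplies an s$k$-tt computing $\iota_{T_U\ui k}$; thus $\iota_{T_U\ui k} \in \STT\ui k$. Since $\iota_{T_U\ui k}^{-1}(L) = L$ for every $L \subseteq T_U\ui k$, every $L \in \REC\ui k$ lies in $(\STT\ui k)^{-1}(\REC\ui k)$.

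The main direction is $(\STT\ui k)^{-1}(\REC\ui k) \subseteq \REC\ui k$. Fix an s$k$-tt $\M$ and a language $L' \in \REC\ui k$. By Lemma~\ref{ln-lemma} there is a linear and nondeleting s$k$-tt $\N$ with $\N = \iota_{L'}$, i.e. $\N$ computes $\{(\zeta,\zeta) \mid \zeta \in L'\}$. Then
\[
\M \circ \N = \{(\xi,\zeta) \mid (\xi,\zeta) \in \M \text{ and } \zeta \in L'\},
\]
so that $\dom(\M \circ \N) = \{\xi \mid \M(\xi) \cap L' \neq \emptyset\} = \M^{-1}(L')$. Now I would pass to the syntactic composition $\M;\N$: because $\N$ is linear, condition~(a) of Theorem~\ref{comp-lemma} holds, and because $\N$ is nondeleting, condition~(b) holds --- both irrespective of any property of $\M$. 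Hence $\M;\N$ induces $\M \circ \N$, and therefore $\dom(\M;\N) = \M^{-1}(L')$. Since $\M;\N$ is an s$k$-tt, Theorem~\ref{dom-theo} gives $\M^{-1}(L') = \dom(\M;\N) \in \REC\ui k$, which completes the inclusion.

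The only delicate points are the identity $\dom(\M \circ \N) = \M^{-1}(L')$ and the verification that the hypotheses of Theorem~\ref{comp-lemma} are satisfied in this instance; both are routine here precisely because $\N$ is simultaneously linear and nondeleting, which makes the composition theorem applicable for an arbitrary (not necessarily deterministic or total) $\M$. I expect the genuine obstacle to be \emph{conceptual} rather than computational: all the heavy machinery --- effectively a bottom-up simulation of the transducer against a recognizer of $L'$ --- is already packaged inside Theorems~\ref{comp-lemma} and~\ref{dom-theo}. The crux is therefore to spot the partial-identity/domain reduction that lets one invoke these results, rather than to reconstruct a direct product-style automaton for $\M^{-1}(L')$ from scratch.
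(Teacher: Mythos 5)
Your proof is correct and follows essentially the same route as the paper: the key reduction $\M^{-1}(L') = \dom(\M\circ\iota_{L'})$, realized via Lemma~\ref{ln-lemma}, Theorem~\ref{comp-lemma} (applicable because $\iota_{L'}$ is computed by a linear and nondeleting s$k$-tt), and Theorem~\ref{dom-theo}, is exactly the paper's argument. The only cosmetic difference is in the easy inclusion, where you use the total identity $\iota_{T_U\ui k}$ while the paper uses the partial identity $\iota_L$ itself; both rest on Lemma~\ref{ln-lemma}.
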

\begin{proof} First we prove the inclusion from left to right. For this, let $\M = (Q,U,\Phi,V,q_0,R)$ be an s$k$-tt, and $L\subseteq T_V\ui k$ an s$k$-recognizable tree language. It is an elementary fact that $\M^{-1}(L) = \dom(\M\circ\iota_L)$. By Lemma \ref{ln-lemma}, there is a linear and nondeleting s$k$-tt $\N$
with $\N = \iota_L$. Moreover, by Theorem \ref{comp-lemma}, the s$k$-tt $\M;\N$ induces $\M\circ \N$. Hence $\M^{-1}(L) = \dom(\M;\N)$, which is s$k$-recognizable by Theorem \ref{dom-theo}.

The other inclusion follows from Lemma \ref{ln-lemma}.
\end{proof} 

It is well-known from the theory of classical tree automata and tree transducers that the forward application of linear top-down tree transformations preserve recognizability of tree languages (see e.g. \cite{tha69} or \cite[Ch. IV, Cor. 6.6]{gecste84}). In particular, the range of every linear top-down tree transformation is a recognizable tree language. We can show easily that a linear s$k$-tt does not have the analogous property.

\begin{lm} \label{range-lin-lemma}\rm There is a linear s1-tt $\M$ such that $\range(\M)$ is not 1-recognizable.
\end{lm}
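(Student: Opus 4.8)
The plan is to exhibit a concrete linear s$1$-tt whose range is a uniform tree language, and then to invoke Lemma \ref{non-rec-lemma}. The underlying phenomenon is that a single leaf rule of an stt may duplicate the (infinitely varying) input label into several output positions, a correlation that no sta can check over an infinite label set; by contrast a classical linear top-down tree transducer over a finite output alphabet can only place a fixed output symbol at a leaf, which is why forward application preserves recognizability there but not here.

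First I would set $U = V = \nat$ and take $\M = (\{q_0\}, \nat, \emptyset, \nat, q_0, R)$, where $R$ consists of the single rule $q_0(\top) \to \mathrm{id}(\mathrm{id})$, interpreted with $l = 0$ and with $\mathrm{id}$ the (computable) identity function on $\nat$. Since $\top \in \BC(\emptyset)$, this is a well-defined s$1$-tt: the right-hand side is a $1$-bounded tree of functions, and the rule contains no variables, so $\M$ is vacuously linear (indeed also nondeleting and, having a single rule, deterministic).

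Next I would compute $\range(\M)$. The only rule has left-hand side state-rank pair $(q_0,0)$, so in the derivation semantics it applies exactly to leaf inputs $q_0(a)$ with $a \in \seml \top \semr = \nat$, yielding $\mathrm{id}(\mathrm{id})(a) = a(a)$ after replacing each occurrence of $\mathrm{id}$ by $\mathrm{id}(a) = a$. No rule has rank $1$, so inputs of rank $1$ contribute nothing to the range. Hence $\range(\M) = \{a(a) \mid a \in \nat\}$, which is exactly the tree language $L_2$ introduced after Theorem \ref{th:char}.

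Finally I would observe that, as $\nat$ is infinite, $L_2$ is uniform and satisfies $|\pos(\xi)| = 2 > 1$ for every $\xi \in L_2$ (all its trees have position set $\{\varepsilon,1\}$ and are constantly labeled). By Lemma \ref{non-rec-lemma}, $L_2$ is not s$1$-recognizable, which proves the claim. I do not expect a genuine obstacle here: the entire content is the choice of example together with Lemma \ref{non-rec-lemma}, and the only points needing a moment's care are verifying that an $l=0$ rule really rewrites leaves under $\Rightarrow_\M$ and that $\M$ is linear (both of which are immediate).
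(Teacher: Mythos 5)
Your proposal is correct and is essentially the paper's own proof: the paper takes the one-rule s$1$-tt $\M = (\{q\},U,\{\top\},U,q,R)$ with $R = \{q(\top(\,))\to \iota_U(\iota_U)\}$ over an infinite $U$, observes that it computes $\{(a,a(a)) \mid a \in U\}$, and concludes via the remark after Lemma \ref{non-rec-lemma} that its range $L_2 = \{a(a) \mid a \in U\}$ is not s$1$-recognizable. Your only (harmless) deviations are fixing $U = \nat$ and taking $\Phi = \emptyset$ instead of $\{\top\}$, which is fine since $\top \in \BC(\emptyset)$.
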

\begin{proof} Let us assume that $U$ is infinite and define the s1-tt $\M = (\{q\},U,\{\top\},U,q,R)$, where $R$ consists of the only rule
$$q(\top(\,))\to \iota_U(\iota_U).$$
It is clear that $\M$ induces the 1-tree transformation $\{(a,a(a))\mid a\in U\}$. Thus $\range(\M)= \{a(a)\mid a\in U\}$, which is not 1-recognizable
by the remark after Lemma \ref{non-rec-lemma}.
\end{proof}

The non-recognizability of $\range(\M)$ above is due to the fact that
$\M$ is able to ``duplicate" a node of the input tree by having two
occurrences of an appropriate function symbol on the right-hand side
of its rule. We would like to identify a restricted version of an stt
which does not have this capability in the hope of that such an stt
preserves recognizability. Therefore we define simple stt as
follows. An s$k$-tt $\M = (Q,U,\Phi,V,q_0,R)$ is {\em simple} if
$\rhs(\rho)$ contains exactly one function symbol for every rule $\rho
\in R$. We denote the class of tree transformations computed by simple
and linear stt by $\slSTT$. Then we can prove the desired result using
the following notation. If $\varphi \in \Pred(U)$ and $f: U
\rightarrow V$ is a mapping, then $f(\varphi)$ denotes the
predicate defined by $\seml f(\varphi) \semr = f(\seml \varphi \semr)$.

\begin{theo} \label{slin-theo}$\slSTT\ui k(\REC\ui k) = \REC\ui k$.
\end{theo}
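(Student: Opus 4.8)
The plan is to prove the two inclusions separately. For $\REC\ui k \subseteq \slSTT\ui k(\REC\ui k)$ I would observe that the linear and nondeleting s$k$-tt $\A_{=}$ built in Lemma \ref{ln-lemma} to realize $\iota_L$ is in fact \emph{simple}: each of its rules has right-hand side $\iota_U(q_1(x_1),\ldots,q_l(x_l))$, which contains exactly one function symbol, namely $\iota_U$. Hence $\iota_L \in \slSTT\ui k$ for every s$k$-recognizable $L$, and since $\iota_L(L)=L$, every $L\in\REC\ui k$ already lies in $\slSTT\ui k(\REC\ui k)$.

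The bulk of the work is $\slSTT\ui k(\REC\ui k)\subseteq\REC\ui k$. Fix a simple and linear s$k$-tt $\M=(Q,U,\Phi,V,q_0,R)$ and an s$k$-ta $\A=(P,U,\Psi,F,S)$ with $L(\A)=L$. Since $\M$ is simple, every right-hand side is either a single leaf $f$ or a tree $f(q_1(x_{i_1}),\ldots,q_m(x_{i_m}))$ with a unique function symbol $f\in\F(U\rightarrow V)$; linearity forces $i_1,\ldots,i_m$ to be pairwise distinct. I would construct an s$k$-rtg $\G$ with state set $(P\times Q)\cup\{q_0'\}$, output label structure $(V,\Phi')$ where $\Phi'$ collects the image predicates $f(\varphi\wedge\psi)$ occurring below, and target invariant
\[
L(\G,\langle p,q\rangle) = \{\zeta\in T_V\ui k \mid \exists\, \xi\in L(\A,p):\ q(\xi)\Rightarrow_{\M}^{*}\zeta\}.
\]
Each $\M$-rule $q(\varphi(x_1,\ldots,x_l))\to f(q_1(x_{i_1}),\ldots,q_m(x_{i_m}))$ is combined with each $\A$-transition $(p_1\ldots p_l,\psi,p)\in S$ into the rule
\[
\langle p,q\rangle \;\to\; f(\varphi\wedge\psi)\big(\langle p_{i_1},q_1\rangle,\ldots,\langle p_{i_m},q_m\rangle\big),
\]
included only when $L(\A,p_i)\neq\emptyset$ for every deleted index $i\in\{1,\ldots,l\}\setminus\{i_1,\ldots,i_m\}$; this side condition ensures the subtrees discarded by $\M$ can be completed to a legal $\A$-input, and it is decidable because emptiness is decidable for sta. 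The leaf case yields the terminal rule $\langle p,q\rangle\to f(\varphi\wedge\psi)$ under the same condition, and I add chain rules $q_0'\to\langle p,q_0\rangle$ for every $p\in F$ to realize the union over final states.

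Correctness of the invariant is proved by induction on trees, and this is where the two restrictions do their work. Simplicity guarantees that each rule application emits exactly one output node carrying the label $f(a)$ of the current input label $a$, so the guard $f(\varphi\wedge\psi)$ captures precisely the achievable labels $\{f(a)\mid a\in\seml\varphi\semr\cap\seml\psi\semr\}$. Linearity guarantees that distinct kept subtrees are translated independently and that no input subtree is copied, so the subderivations supplied by the induction hypothesis can be assembled into a single consistent input $\xi=a(\xi_1,\ldots,\xi_l)$: the kept $\xi_{i_j}$ realize the child outputs, each deleted $\xi_i$ is chosen as a witness of $L(\A,p_i)\neq\emptyset$, and since the child subtrees are independent of the root label, $a$ may be taken to be any preimage realizing the root output label. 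The invariant then gives $L(\G)=L(\G,q_0')=\M(L)$, and Theorem \ref{th:rec=reg} yields $\M(L)\in\REC\ui k$.

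The main obstacle I expect is confirming that $(V,\Phi')$ is a genuine label structure, i.e.\ that each image predicate $f(\varphi\wedge\psi)$ is recursive and has decidable emptiness. Decidable emptiness is immediate, since $\seml f(\varphi\wedge\psi)\semr=\emptyset$ holds iff $\seml\varphi\wedge\psi\semr=\emptyset$, which is decidable in the Boolean closures of $\Phi$ and $\Psi$; the delicate point is recursiveness of membership for $f(\varphi\wedge\psi)$, which has to be extracted from the computability of the functions in $\F(U\rightarrow V)$ together with the recursiveness of $\varphi$ and $\psi$. A secondary technical hazard, already isolated above, is the bookkeeping of deleted subtrees through the non-emptiness side condition; this is exactly the place where simplicity is indispensable and where the argument collapses for non-simple transducers, as witnessed by Lemma \ref{range-lin-lemma}.
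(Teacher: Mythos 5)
Your proposal is correct and takes essentially the same approach as the paper: a product of the stt's states with those of a finite-state device recognizing $L$, rules guarded by the image predicates $f(\varphi \wedge \psi)$, and the same invariant proved by induction, with simplicity and linearity playing exactly the roles you assign them. The differences are only cosmetic --- the paper pairs $\M$ with a \emph{reduced} normal-form s$k$-rtg for $L$, so productivity of all states does the work of your explicit non-emptiness side condition for deleted children (and its single initial state replaces your chain rules); as for your worry about recursiveness of $f(\varphi \wedge \psi)$, the paper does not address it either, simply defining $\seml f(\varphi) \semr = f(\seml \varphi \semr)$ without verifying the label-structure requirements.
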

\begin{proof} First we prove the inclusion from left to right. Let $\M = (Q,U,\Phi,V,q_0,R)$ be a simple and linear s$k$-tt and 
$L$ be an s-recognizable tree language such that $L=L(\G)$ for some reduced s$k$-rtg $\G=(P,U,\Psi,p_0,R_\G)$ which is
in normal form (cf. Theorem \ref{th:rec=reg} and Lemma \ref{lm:srtg-normal-form}).

We construct the s$k$-rtg $\G'=(Q\times P,V,\Psi',\langle q_0,p_0\rangle,R')$, where
\begin{itemize}
\item $\Psi'=\{ f(\varphi\wedge \psi) \mid \varphi \text{ and } f \text{ occur
 in a rule of } R, \text{ and } \psi \text{  in a rule of } R_\G \}$, and
\item $R'$ is the smallest set of rules satisfying that if $p \to \psi(p_1,\ldots,p_l)$ is in $R_\G$ and $q(\varphi(x_1,\ldots,x_l))\to f(q_1(x_{i_1}),\ldots,q_m(x_{i_m}))$ is in $R$, then 
the rule 
\begin{equation}
\langle q,p\rangle \to f(\varphi\wedge \psi)(\langle q_1,p_{i_1}\rangle,\ldots,\langle q_m,p_{i_m}\rangle)\label{rule}
\end{equation}
 is in $R'$.
\end{itemize} 
We show that $L(\G')=\M(L)$. For this it suffices
to prove the following statement. For every $q\in Q$, $p\in P$, and $\zeta \in T_V$ we have
\[ \langle q,p\rangle \Rightarrow_{\G'}^* \zeta \iff \exists (\xi \in L(\G,p)) \text{ such that } q(\xi) \Rightarrow_\M^* \zeta.\]
We prove only the direction $\Rightarrow$ by induction on the number $n$ of steps of the corresponding derivation and we show only the induction step $n$ to $n+1$. The other direction can be proved in a similar way.

\underline{Direction $\Rightarrow$, step $n\to n+1$:} We assume that in the first step of the derivation we applied the rule (\ref{rule}) obtained from the rules
$p \to \psi(p_1,\ldots,p_l)$ in $R_\G$ and $q(\varphi(x_1,\ldots,x_l))\to f(q_1(x_{i_1}),\ldots,q_m(x_{i_m}))$  in $R$. (Note that $\{i_1,\ldots,i_m\} \subseteq \{1,\ldots,l\}$.) Then we have
$$\langle q,p\rangle \Rightarrow_{\G'} b(\langle q_1,p_{i_1}\rangle,\ldots,\langle q_m,p_{i_m}\rangle)  \Rightarrow_{\G'}^n b(\zeta_1,\ldots,\zeta_m)$$
for some $b\in \seml f(\varphi\wedge \psi)\semr$ and $\zeta_1,\ldots,\zeta_m \in T_V\ui k$. By the I.H., there are trees  $\xi_{i_j} \in  L(\G,p_{i_j}))$
such that $q_{i_j}(\xi_{i_j}) \Rightarrow_\M^* \zeta_j$ for every $1\le j\le m$. Moreover, there is a $a\in (\seml \varphi\semr \cap \seml \psi\semr)$ such that $b=f(a)$. Now define the tree $\xi=a(\overline{\xi_1},\ldots,\overline{\xi_l})\in T_U\ui k$, where $\overline{\xi_j}=\xi_{i_l}$ if $j=i_l$ for some $1\le l\le m$; and let $\overline{\xi_j}$ be an arbitrary tree in $L(\G,p_j)$ otherwise (note that $\G$ is reduced). Then 
$$p\Rightarrow_G a(p_1,\ldots,p_l) \Rightarrow_G^* a(\overline{\xi_1},\ldots,\overline{\xi_l}),$$
hence $\xi \in  L(\G,p)$.  Moreover
$$q(a(\overline{\xi_1},\ldots,\overline{\xi_l}))\Rightarrow_\M b(\langle q_1,p_{i_1}\rangle (\xi_{i_1}),\ldots,\langle q_m,p_{i_m}\rangle (\xi_{i_m}))\Rightarrow_\M^* b(\zeta_1,\ldots,\zeta_m).$$

The inclusion from right to left follows from Lemma \ref{ln-lemma} and the fact that $\A_{=}$ is a simple and linear stt.
\end{proof}

\begin{cor} \rm \label{range-lemma}$\range(\slSTT\ui k)=\REC\ui k$.
\end{cor}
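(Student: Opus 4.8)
The plan is to establish the two inclusions separately, drawing on Theorem~\ref{slin-theo} for the left-to-right containment and on Lemma~\ref{ln-lemma} for the converse. The key elementary fact underlying both directions is that for any tree transformation the range is the forward application of that transformation to its whole input set, i.e.\ $\range(\M) = \M(T_U\ui k)$ for every s$k$-tt $\M$.

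For the inclusion $\range(\slSTT\ui k) \subseteq \REC\ui k$, I would take an arbitrary simple and linear s$k$-tt $\M$ and use the identity $\range(\M) = \M(T_U\ui k)$. By Observation~\ref{ob:TU-k-rec} the input tree language $T_U\ui k$ is s$k$-recognizable, so it lies in $\REC\ui k$. Applying Theorem~\ref{slin-theo}, which asserts $\slSTT\ui k(\REC\ui k) = \REC\ui k$, we obtain that $\M(T_U\ui k) = \range(\M)$ is again s$k$-recognizable. This gives the first inclusion with no further work.

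For the reverse inclusion $\REC\ui k \subseteq \range(\slSTT\ui k)$, I would start from an arbitrary s$k$-recognizable tree language $L$ and exhibit a simple and linear s$k$-tt whose range is exactly $L$. By Lemma~\ref{ln-lemma} there is a linear and nondeleting s$k$-tt $\N$ computing the restricted identity $\iota_L$; the witness is the transducer $\A_{=}$ constructed before that lemma, whose rules have right-hand sides of the form $\iota_U(q_1(x_1),\ldots,q_l(x_l))$ and therefore contain exactly one function symbol each. Hence $\N = \A_{=}$ is in addition \emph{simple}, so $\N \in \slSTT\ui k$, and since $\N$ computes $\iota_L$ we conclude $\range(\N) = \range(\iota_L) = L$. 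Thus $L \in \range(\slSTT\ui k)$.

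The proof presents essentially no obstacle: the only point requiring a moment of care is confirming that the identity transducer $\A_{=}$ furnished by Lemma~\ref{ln-lemma} is simple as well as linear and nondeleting. This is precisely the observation already recorded at the close of the proof of Theorem~\ref{slin-theo}, namely that $\A_{=}$ is a simple and linear stt, so the corollary follows at once by combining the two inclusions above.
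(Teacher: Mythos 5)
Your proposal is correct and follows essentially the same route as the paper: the forward inclusion via $\range(\M)=\M(T_U\ui k)$, Observation~\ref{ob:TU-k-rec}, and Theorem~\ref{slin-theo}, and the reverse inclusion via Lemma~\ref{ln-lemma} together with the observation that $\A_{=}$ is simple as well as linear. The paper's own proof is merely terser, leaving the reverse direction implicit in its appeal to Theorem~\ref{slin-theo}, whose proof invokes exactly the facts you spell out.
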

\begin{proof}  Let $\M = (Q,U,\Phi,V,q_0,R)$ be a simple and linear s$k$-tt. Obviously, $\range(\M)=\M(T_U\ui k)$. Moreover, by Observation \ref{ob:TU-k-rec}, $T_U\ui k$ is s$k$-recognizable. Hence the statement follows from Theorem \ref{slin-theo}.
\end{proof}

\subsection{Type checking with stt}

Intuitively, type checking means to verify whether or not all documents in a view have a certain type. According to \cite{engman03}, a typical scenario of type checking is that  $\tau$ translates XML documents into HTML documents. Thus, for a set $L$ of XML documents $\tau(L)$ is an HTML-view of the documents in $L$. In practice, we are interested in particular XML documents, which turn to be a recognizable tree language of unranked trees over some alphabet. 
Also, certain desired properties of the so-obtained HTML documents can be described in terms of recognizability of tree languages. Thus, the type checking problem of $\tau$ in fact means to check whether $\tau(L)\subseteq L'$ for recognizable   tree languages $L$ and $L'$. The inverse type checking problem can be described in a similar way. The type checking and the inverse type checking problem for different kinds of transducers was considered in several works, see among others \cite{milsucvia03,alomilnevsucvia03,engman03,manberpersei05}.
For stt we obtain the following results.

\begin{theo}\label{thm:type-checking} 

\

\begin{enumerate}
\item[(a)] The inverse type checking problem for stt is decidable.
\item[(b)] The type checking problem for simple and linear stt is decidable. 
\end{enumerate}
\end{theo}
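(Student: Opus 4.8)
My plan is to reduce each of the two decision problems to an emptiness test for an s$k$-recognizable tree language, and then to invoke the appropriate application theorem together with the effective Boolean closure and the decidable emptiness of $\REC\ui k$. I will use three facts, all available from the earlier material: $\REC\ui k$ is effectively closed under intersection (it is closed under the Boolean operations), it is effectively closed under complement \emph{relative to} $T_U\ui k$ (cf.\ the discussion following Example~\ref{ex:sta}, adapting \cite{veabjo11a}), and emptiness of an s$k$-recognizable tree language is decidable because the emptiness of every guard in $\BC(\Phi)$ is decidable in a label structure.

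For part (a), let $\M$ be an arbitrary s$k$-tt and let $L\subseteq T_U\ui k$ and $L'\subseteq T_V\ui k$ be s$k$-recognizable. The inverse type checking problem asks whether $\M^{-1}(L')\subseteq L$, and this inclusion fails exactly when some input that is mapped into $L'$ lies outside $L$; hence
\[
\M^{-1}(L')\subseteq L \quad\Longleftrightarrow\quad \M^{-1}(L')\cap\big(T_U\ui k\setminus L\big)=\emptyset .
\]
I would apply Theorem~\ref{backward-theo} to construct an s$k$-ta for $\M^{-1}(L')$, take the complement of $L$ within $T_U\ui k$, form the intersection, and test it for emptiness; each step is effective, which settles (a).

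For part (b) the argument is the mirror image, except that I must now restrict to a \emph{simple and linear} s$k$-tt $\M$. The type checking problem asks whether $\M(L)\subseteq L'$, and
\[
\M(L)\subseteq L' \quad\Longleftrightarrow\quad \M(L)\cap\big(T_V\ui k\setminus L'\big)=\emptyset .
\]
Here Theorem~\ref{slin-theo} supplies an s$k$-ta for $\M(L)$, after which the same complementation, intersection, and emptiness-test routine applies. The reason the restriction appears in (b) but not in (a) is exactly the asymmetry of the two application theorems: backward application preserves s$k$-recognizability for all stt (Theorem~\ref{backward-theo}), whereas forward application preserves it only for simple and linear stt (Theorem~\ref{slin-theo}).

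I do not expect a real obstacle, since the substantive work is already contained in Theorems~\ref{slin-theo} and~\ref{backward-theo}; the one point that demands care is that every complement must be taken inside the bounded universe $T_U\ui k$ (resp.\ $T_V\ui k$) rather than inside the full, unbounded $T_U$ (resp.\ $T_V$). As noted after Example~\ref{ex:sta}, the complement of an s$k$-recognizable language in the unbounded $T_U$ need not even be bounded, so using it would break the reduction; taking the $k$-bounded complement keeps every intermediate language within $\REC\ui k$ and makes the final emptiness test legitimate.
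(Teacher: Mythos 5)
Your proposal is correct and follows essentially the same route as the paper: both reduce (inverse) type checking to the decidable inclusion problem for s$k$-recognizable tree languages (via effective Boolean closure with complement taken inside $T_U\ui k$, plus decidable emptiness), using Theorem~\ref{backward-theo} for part (a) and Theorem~\ref{slin-theo} for part (b). Your explicit reduction $L_1 \subseteq L_2 \iff L_1 \cap (T\ui k \setminus L_2) = \emptyset$ and your caveat about the $k$-bounded complement merely spell out what the paper states more tersely.
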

\begin{proof}Both statements follow from the fact that the inclusion problem of s-recognizable tree languages is decidable. This latter fact can be seen as follows. By \cite[Thm. 3]{veabjo11a}, s-recognizable tree languages are effectively closed under Boolean operations, for closure under complement, see our correction at the end of Section \ref{sect:sta-def}. Moreover, by \cite[Thm. 4]{veabjo11a}, the emptiness problem is decidable for s-recognizable tree languages provided that the emptiness problem in the underlying label structure is decidable. Since, by our definition, the label structure underlying an s$k$-ta has a decidable emptiness problem, we obtain that the inclusion problem of s-recognizable tree languages is decidable.

Then the proof of (a) is as follows. Let $\M:T_U\ui k\to T_V\ui k$ be an s$k$-tt and $L'\subseteq T_U\ui k$ and $L\subseteq T_V\ui k$ s-recognizable tree languages.  By Theorem \ref{backward-theo},  the tree language  $\M^{-1}(L)$ is effectively s$k$-recognizable, thus we can decide if
$\M^{-1}(L)\subseteq L'$ holds or not. Statement (b) can be proved in a similar way, using Theorem \ref{slin-theo}.
\end{proof}

\section{Conclusion and an open problem}

In this paper we have further elaborated the theory of sta and stt. Our main contributions are: the characterization of s-recognizable tree languages in terms of relabelings of recognizable tree languages, the introduction of symbolic regular tree grammars and the proof of their equivalence to sta, the comparison of sta and variable tree automata, the composition of stt, and the forward and backward application of stt to s-recognizable tree languages.

Finally, we  mention an open problem.
In the definition of simple s$k$-tt we required that the right-hand side of each rule contains exactly one function symbol. We conjecture that, for the closure result in Theorem \ref{slin-theo}, it is sufficient to require that right-hand sides of rules contain \underline{at most one} function

\bibliographystyle{alpha}
\newcommand{\etalchar}[1]{$^{#1}$}

\end{document}